\documentclass[11pt,a4paper,reqno]{article}
\usepackage[utf8]{inputenc} 
\usepackage[T1]{fontenc}
\usepackage{url}
\usepackage{ifthen}
\usepackage{cite}
\usepackage[cmex10]{amsmath} % Use the [cmex10] option to ensure complicance
                             % with IEEE Xplore (see bare_conf.tex)

%% Please note that the amsthm package must not be loaded with
%% IEEEtran.cls because IEEEtran provides its own versions of
%% theorems. Also note that IEEEXplore does not accepts submissions
%% with hyperlinks, i.e., hyperref cannot be used.
\usepackage{graphicx} % Required for inserting images
\usepackage{mathrsfs}
\usepackage{graphicx}
\usepackage{subfigure}
\usepackage{algorithmic, algorithm}
\usepackage[english]{babel}
\usepackage{amssymb}
\usepackage{amsmath}
\usepackage{cite}
\usepackage{graphicx}
\usepackage{amsthm}
\usepackage{thmtools, thm-restate}
\usepackage{hyperref}
\usepackage[letterpaper,margin=0.8in]{geometry}
\usepackage[T1]{fontenc} % Use 8-bit encoding that has 256 glyphs
\usepackage[english]{babel} % English language/hyphenation
\usepackage{amsmath,amsfonts,amsthm,amssymb} % Math packages

\usepackage[toc,page]{appendix}
\usepackage{lipsum} % Used for inserting dummy 'Lorem ipsum' text into the template
\usepackage{amsmath, amssymb}
\usepackage[dvipsnames]{xcolor} % declaring the package is necessary to implement colors  
\usepackage{microtype}
\usepackage{color}
\usepackage{cite}
\usepackage{pifont}
\usepackage{enumitem}
\usepackage{graphicx}
\usepackage{lipsum}
\usepackage[dvipsnames]{xcolor}
\usepackage{mathtools}
\usepackage{cleveref}
\usepackage{verbatim}
\usepackage[utf8]{inputenc}
\usepackage[english]{babel}

\usepackage{tikz,pgfplots,lipsum,lmodern}
\usetikzlibrary{decorations.pathreplacing}

\theoremstyle{definition}
\newtheorem{theorem}{Theorem}
\newtheorem{notation}{Notation}
\newtheorem{lemma}{Lemma}
\newtheorem{claim}{Claim}
\newtheorem{definition}{Definition}
\newtheorem{remark}{Remark}
\newtheorem{corollary}{Corollary}
\newtheorem{problem}{Problem}
\newtheorem{construction}{Construction}
\newtheorem{example}{Example}
\newtheorem{proposition}{Proposition}

\newtheorem{observation}{Observation}

% ------- Bold

\newcommand{\bfe}{{\boldsymbol e}}

\newcommand{\bfg}{{\boldsymbol g}}

\newcommand{\bfx}{{\boldsymbol x}}
\newcommand{\bfy}{{\boldsymbol y}}

\interdisplaylinepenalty=2500 % As explained in bare_conf.tex

\newcommand{\cC}{\mathcal{C}}

\newcommand{\cO}{\mathcal{O}}
\newcommand{\cP}{\mathcal{P}}

\newcommand{\cS}{\mathcal{S}}

\newcommand{{\concatenationSimbole}}{\circ}
% ------- Special Notations

\renewcommand{\Bbb}{\mathbb}

\newcommand{\E}{{\Bbb E}}
\newcommand{\F}{{\Bbb F}}

\title{Making it to First: The Random Access Problem in DNA Storage}
\usepackage{authblk}
\author[1]{Avital Boruchovsky}
\author[2]{Ohad Elishco}
\author[3]{Ryan Gabrys}
\author[4]{Anina Gruica}
\author[5]{Itzhak Tamo}
\author[1]{Eitan Yaakobi}

\affil[1]{Technion -- Israel Institute of Technology, Haifa, Israel}
\affil[2]{Ben Gurion University of the Negev, Beer-Sheva, Israel}
\affil[3]{University of California, San Diego, USA}
\affil[4]{Technical University of Denmark, Lyngby, Denmark}
\affil[5]{Tel-Aviv University, Tel-Aviv, Israel}
%\date{December 2024}
\date{}                    

\begin{document}

\maketitle
\footnotetext{A. B. and E. Y. are supported by the European Union (DiDAX, 101115134). Views and opinions expressed are those of the author(s) only and do not necessarily reflect those of the European Union or the European Research Council Executive Agency. Neither the European Union nor the granting authority can be held responsible for them. O. E. is supported by the Israel Science Foundation (Grant No. 1789/23). R. G. is supported by NSF Grant CCF2212437. A. G. is supported by the Villum Fonden  Grant VIL”52303”. I. T. is supported by the European Research Council (Grant No. 852953). Email addresses: \texttt{\{avital.bor,yaakobi\}@cs.technion.ac.il, ohadeli@bgu.ac.il, rgabrys@ucsd.edu, anigr@dtu.dk, tamo@tauex.tau.ac.il}}
\begin{abstract}
In this paper, we study the \emph{Random Access Problem} in DNA storage, which addresses the challenge of retrieving a specific information strand from a DNA-based storage system. In this framework, the data is represented by $k$ information strands which represent the data and are encoded into $n$ strands using a linear code. Then, each  sequencing read returns one encoded strand which is chosen uniformly at random. The goal under this paradigm is to design codes that minimize the expected number of reads required to recover an arbitrary information strand. We fully solve the case when $k=2$, showing that the best possible code attains a random access expectation of $1+\frac{2}{\sqrt{2}+1}\approx 0.914\cdot 2$ for $q$ large enough. Moreover, we generalize a construction from~\cite{GMZ24}, specifically to $k=3$, for any value of $k$. Our construction uses \textit{$B_{k-1}$ sequences over $\mathbb{Z}_{q-1}$}, that always exist over large finite fields. We show that for every $k\geq 4$, this generalized construction outperforms all previous constructions in terms of reducing the random access expectation.
\end{abstract}

\section{Introduction}
The exponential growth in data generation has created an unprecedented demand for storage technologies, which current solutions are unable to meet. The gap between data storage demand and the capacity of existing technologies continues to widen at an alarming rate each year~\cite{R22}. Addressing this critical challenge has become a global priority, driving the search for innovative and sustainable alternatives. One particularly promising approach is the use of synthetic DNA as a medium for data storage~\cite{DNA21, M23}.

A typical DNA data storage system consists of three primary components: DNA synthesis, storage containers, and DNA sequencing. In the first step, synthetic DNA strands, known as oligos, are generated to encode the user's information.  These strands are then stored in an unordered manner within a storage container. In the final step, DNA sequencing reads the stored strands and converts them into digital sequences, referred to as {reads}, which are decoded back into the original user information. However, due to limitations in current technologies, the process produces multiple noisy copies of each designed strand, and these copies are retrieved in a completely unordered fashion. 

While several studies have demonstrated the significant potential of DNA as a data storage medium \cite{AVAAY19, BGHCTIPC16, BLCCSS16,OACLY18,YGM17,TYZM15,BOSEY21}, its adoption as a practical alternative to current storage technologies is still limited by challenges related to cost and efficiency. A key factor contributing to these challenges is the coverage depth of DNA storage, defined as the ratio between the number of sequenced reads and the number of designed strands\cite{HMG19}. Reducing the coverage depth is critical for improving the latency of existing DNA storage systems and significantly lowering their costs.

In a recent paper \cite{BSGY24}, Bar-Lev et al. initiated the study of the \emph{DNA Coverage Depth Problem}, which aims to reduce the cost and latency of DNA sequencing by analyzing the expected number of reads required to retrieve the user's information. In this model, $k$ information strands, representing the user's data, are encoded into $n$ strands using a matrix $G\in \mathbb{F}_q^{k\times n}$, often refereed to as a generator matrix of an error-correcting code. The columns of $G$ are then sampled uniformly at random with repetition until the desired information is successfully retrieved (every such sample is referred to as read). The authors considered two variants of this problem: the \emph{non-random access} setting and the \emph{random access} setting. In the non-random access setting, the objective is to recover all $k$ information strands,  i.e., the entire user's information. In contrast, the random access setting focuses on retrieving a single, specific strand out of the $k$ information strands. We also remark that a subsequent work~\cite{AGY24} studied a related model in which groups of strands that collectively represent a single file are decoded, rather than individual strands or the entire dataset.

By drawing a connection to the classical \emph{Coupon Collector’s Problem}~\cite{F57, FGT92, ER61, N60}, the authors of~\cite{BSGY24} showed that when the $k$ information strands are encoded into $n$ strands using an MDS code, then the expected number of reads required to recover all $k$ information strands is $H_n-H_{n-k}$, where $H_i$ is the $i$-th harmonic number. They further proved that no coding scheme can achieve a lower expectation-making MDS codes optimal for minimizing the expected number of reads when the goal is to retrieve the entire data. The non-random access setting was later extended in~\cite{CY24} to accommodate composite DNA letters~\cite{AVAAY19}, and further explored in~\cite{PGYA24, SACZH24} in the context of combinatorial composites of DNA shortmers~\cite{PRYA21}. A related direction was examined in \cite{CTLMKNGWW19}, where the authors analyzed the trade-offs between reading costs, tied directly to coverage depth, and writing costs. Finally,~\cite{BRY25} studied the non–random-access setting for codes over small finite fields.

In this work, we focus on the random access setting, which remains relatively less understood and presents many open questions. Notably, while MDS codes are optimal in the non-random access setting, they yield an expected number of reads equal to~$k$ in the random access case~\cite{BSGY24} (that is, the expected number of reads required to retrieve a specific information strand is $k$). This is as \textit{good} (or \textit{bad}) as not using any code at all and using the identity code, which also gives a random access expectation equal to~$k$. However, as demonstrated in~\cite{BSGY24}, codes with strictly lower expectation do exist. Our goal is to identify and analyze such codes, i.e., those that minimize the expected number of reads required to retrieve any one of the $k$ original strands. Initial steps towards this problem were made in \cite{BSGY24} and \cite{GBRY24}, where constructions, analysis of well-known codes, and bounds on the random access expectation for arbitrary codes were presented. In a later work~\cite{GMZ24}, this setting was explored from a geometric perspective, leading to a construction that outperformed all previously known codes for $k=3$. More recently, the authors of~\cite{BLLLRS25} resolved a conjecture proposed in~\cite{BSGY24} concerning codes of rate $\frac{1}{2}$, and provided a closed-form expression for the random access expectation of the geometric construction  for $k=3$ from~\cite{GMZ24}. Despite these advancements, many essential questions remain unsolved.

In this paper, we address some of the open questions in the random access setting. In particular, we provide a complete solution for the $k=2$ case, finding the smallest possible random access expectation in this case. Moreover, we generalize the construction of~\cite{GMZ24} to arbitrary values of~$k$, and show that the resulting code outperforms all current known constructions for $k\geq4$. 

The rest of the paper is organized as follows. Section~\ref{defenitions} introduces the relevant definitions and formalizes the problem statement. Section~\ref{Sec:k=2} provides a solution to the random access problem for $k=2$.  
In Section~\ref{sec:generalConstruction}, we provide a construction for general~$k$ and then analyze if for all values, showing it outperforms all current constructions. Finally, Section~\ref{sec:conclusion} concludes the paper and outlines several directions for future research.

\section{Preliminaries}\label{defenitions}
Throughout this paper, we adopt the following notations. Let $k$ and $n$ be positive integers with $k\leq n$, and let $q$ be a prime power. Denote by $[k]$ the set $\{1,2,\dots,k\}$ and by $H_n$ the $n$-th harmonic number, i.e., $H_n:=1+1/2+\dots+1/n$. Let $\mathbb{F}_q$ denote the finite field with $q$ elements, and let ${\mathbb{F}_q^k}$ denote the $k$-dimensional vector space over $\F_q$. For $i \in [k]$ denote by ${\bfe}_i\in \F_q^k$ the $i$-th standard vector, and for an integer $\ell \ge 0$ and a set of vectors ${\bf g}_1,{\bf g}_2,\dots,{\bf g}_{\ell} \in \F_q^k$, let $\langle{\bf g}_1,{\bf g}_2,\dots,{\bf g}_{\ell}\rangle$ denote their span. For a set $S$, we denote by $\cP r(S)$ the set of all probability distributions over $S$. 

We study the expected sample size required for uniform random access in DNA storage systems. In such systems, data is stored as a length-$k$ vector of sequences, referred to as strands, each of length~$\ell$ over the alphabet $\Sigma = \{A, C, G, T\}$. This corresponds to representing data as elements in $(\Sigma^\ell)^k$. We can embed $\Sigma^\ell$  into a finite field $\mathbb{F}_q$, which requires $4^\ell$ to divide $q$. However, in this work, we study this problem in a more general setting, considering any prime power $q$ for the size of our underlying finite field. 

The encoding process utilizes a $k$-dimensional linear block code $\cC\subseteq \mathbb{F}_q^n$,  mapping an information vector $\bfx=(x_1,\dots, x_k)\in\mathbb{F}_q^k$ to an encoded vector $\bfy=(y_1,\dots,y_n)\in\mathbb{F}_q^n$. In this storage system, the encoded strands are first synthesized and then sequenced using DNA sequencing technology. This process produces multiple erroneous copies of the strands, referred to as \emph{reads}. In line with prior works~\cite{BSGY24, GBRY24, GMZ24},  we assume that no errors are introduced during synthesis or sequencing. The output of the sequencing process is thus a multiset of unordered reads of encoded strands. Given the high cost and relatively low throughput of current DNA sequencing technologies compared to other archival storage systems, reducing the \emph{coverage depth}—the ratio of sequenced reads to the number of encoded strands—is crucial for improving efficiency.

In this paper, we focus on the random access setting, where the goal is to retrieve a single information strand $\bfx_i$ for $i\in[k]$. Previous works (e.g.,~\cite{BSGY24, BLLLRS25, GBRY24, GMZ24}), have shown that appropriate coding schemes can reduce the expected sample size for recovering an information strand to below $k$.  We illustrate this in Example~\ref{Ex:RAEcanBe<k}, as presented in \cite{GBRY24}. \begin{comment}
Note that we implicitly assume that we know the index of each of the strands that we are sampling. If we did not know the locations of the recovered symbols in the underlying code, we do not necessarily know how to recover information strands from sampled strands.
\end{comment}

\begin{example}\label{Ex:RAEcanBe<k}
Assume we want to store an information vector of length two, $(x_1,x_2)\in \F_q^2$. Without coding, the expected number of samples required to recover a specific information strand is 2, assuming that the samples are read uniformly at random. Now, suppose we encode the data using the following generator matrix:
\begin{align}
    G=\begin{pmatrix}
        1 & 0 &  1 &  0 & 1\\
        0 & 1 &  0 &  1 & 1\\
    \end{pmatrix} \in \F_q^{2 \times 5}. 
\end{align}
The encoded data is stored as $(x_1,x_2)G=(x_1,x_2,x_1,x_2,x_1+x_2)\in \F_q^5$. It can be shown (see Claim~\ref{cl:expcomp} in Section~\ref{Sec:k=2}) that the expected number of samples required to recover a specific information strand for this case is approximately $1.917<2$. Here, “recovering” means reconstructing the original information strand as a linear combination of the sampled symbols. For instance, if the last two encoded symbols are sampled, the strand $x_1$ can be recovered as $x_1 = -x_2 + (x_1 +x_2)$.
\end{example}

Example~\ref{Ex:RAEcanBe<k} demonstrates that once the $k$ information strands are encoded using a generator matrix $G\in \F_q^{k\times n}$, 
each encoded strand corresponds to a column of $G$. Moreover, recovering the $i$-th information strand is equivalent to sampling a collection of columns for which the $i$-th standard basis vector, ${\bfe}_i$, lies in their $\F_q$-span. 

\begin{remark} \label{rem:coll}
Because of the discussion above, we only care about the $\F_q$-span of the sampled columns of the generator matrix $G$. This means in particular, that the order in which the columns show up in the matrix does not matter, and replacing any column by a collinear column will result in the same expected number of samples to recover any information strand.
\end{remark}

We now formally define the main problem addressed in this paper, building on the framework presented in~\cite{GBRY24}.

\begin{problem}[The Random Access Problem]
Let $G\in \F_q^{k\times n}$ be a rank-$k$
matrix. Suppose that the columns of $G$ are drawn uniformly at random with repetition, meaning that each
column has a probability $1/n$ of being drawn and columns can be drawn multiple times. For
$i\in [k]$, let $\tau_i(G)$ denote the random variable that counts the minimum number of columns of $G$ that are drawn, until the standard basis vector ${\bfe}_i \in \F_q^k$ is in their $\F_q$-span. Our goals are:

\begin{enumerate}
    \item Construct full-rank matrices $G \in \F_q^{k \times n}$ for which $$T_{\max}(G) := \max_{i\in[k]} \E[\tau_i(G)] < k.$$
    \item Determine the smallest possible maximum random access expectation among all rank-$k$ matrices: $$T_q(n,k):=\min_{G\in \F_q^{k\times n}} T_{\max}(G),$$
    the asymptotic behavior as $n$ approaches infinity: $$T_q(k) :=\liminf_{n\to \infty} T_q(n, k),$$ and as $q$ approaches infinity:
    $$T(k) :=\liminf_{q\to \infty} T_q(k).$$ 
\end{enumerate}
\end{problem}

Note that the values of $T_q(n,k)$ and $T_q(k)$ are defined only when $q$ is a prime power. If $Q$ is the set of all prime powers, we omit writing $q \in Q$ when writing $q \to +\infty$. In addition, note that if a vector $\bfx$ appears $n_x$ times as a column in $G$, for some integer $n_x \ge 1$, then it has a probability of $\frac{n_x}{n}$ to be selected at each draw. In some cases, it is more convenient to approach the problem from a probabilistic perspective. To this end, we introduce the following related problem.

\begin{problem}
   Let $\cP := \{ \langle v \rangle : v \in \mathbb{F}_q^{\,k}\setminus\{0\}\}$ be the set of one–dimensional subspaces of $\mathbb{F}_q^{\,k}$, that is, the projective points of\/ $PG(k-1,q)$. Select one non-zero representative from each subspace and arrange these vectors as the columns of a matrix  $H\in  \F_q^{k\times |\cP|}$, so that every projective point appears exactly once. For a probability distribution $\mu$ on $\cP$, sample the columns of $H$ with repetition according to $\mu$. For each $i\in [k]$, let $\tau_i(\mu)$ denote the random variable that counts the minimum number of columns of $H$ that are drawn until the standard basis vector ${\bfe}_i \in \F_q^k$ lies in their $\F_q$-span. Our goals are:

\begin{enumerate}
    \item Compute the expectation $\E[\tau_i(\mu)]$ and the maximum expectation: $$T_{\max}(\mu) = \max_{i\in[k]} \E[\tau_i(\mu)].$$
    \item Determine the smallest possible maximum random access expectation among all possible distributions: $$\overline{T}_q(k)=\min_{\mu\in \cP r{[{\cP}]}} T_{\max}(\mu).$$
    
\end{enumerate}
\end{problem}

Problems $1$ and $2$ are closely connected. As observed before stating  Problem 2, any matrix $G\in \F_q^{k\times n}$ induces a probability distribution over $\cP r{[\cP]}$, implying that $\overline{T}_q(k)\leq {T}_q(k)$. Conversely, for any probability distribution in $\cP r{[\cP]}$, there exists a rational probability distribution that is arbitrarily close to it, and for large enough $n$, it can be realized via a matrix $G\in \F_q^{k\times n}$, hence we have that $\overline{T}_q(k)= {T}_q(k)$.  

The study of the values of $T_{\max}(G), \text{~}T_q(n,k),\text{~} T_q(k), $ and $T(k)$ was initiated in~\cite{BSGY24}. It was shown that for several families of codes, such as the identity code, the simple parity code, and MDS codes, it holds that $\E[\tau_i(G)]=k$ for every $i\in[k]$, when $G$ is a systematic generator matrix of these codes. In particular, the result for identity codes established that $T_q(k,k)=k$. However, determining $T_q(n,k)$ for general parameters remains an open and intriguing question. 

Initial progress towards addressing this problem was made in~\cite{BSGY24}, which introduced several code families whose random-access expectation is strictly below $k$. In particular, it was
shown that $T(k=2)\leq 0.914\cdot 2,\text{~} T(k=3)\leq 0.89\cdot 3$ and for arbitrary $k$ which is a multiple of $4$, it holds that for large enough $q$:  $T_q(n=2k,k)\leq 0.95 k$. The authors of~\cite{BSGY24} also conjectured that their rate-$1/2$  codes have a random-access expectation that drops below $0.9456 k$ as $k\to\infty$. This conjecture was later resolved in~\cite{GMZ24}, and the precise asymptotic limit was established in~\cite{BLLLRS25}, proving that $\lim_{k\to\infty}\frac{T_q(n=2k,k)}{k}\leq \frac{8\sqrt{3}\pi-18}{27}$.
Moreover, \cite{BSGY24} proved that for all positive integers $a$ and $k$, it holds that $\frac{T(ak)}{ak}\leq \frac{T(k)}{k}$, showing that the per-strand expectation does not increase when the message length is scaled. In addition, two lower bounds on $T_q(n,k)$ were derived in~\cite{BSGY24}: $T_q(n,k)\geq n-\frac{n(n-k)}{n}(H_n-H_{n-k})$ and $T_q(n,k)\geq \frac{k+1}{2}$.

A significant challenge in the random access problem lies in the difficulty of directly computing $T_{\max}(G)$. To address this, the authors of~\cite{GBRY24} provided a general formula for the expected number of reads required to recover the $i$-th information strand.

\begin{lemma}[see~\textnormal{\cite{GBRY24}, Lemma 1}]\label{Lem:CalExpeUsingAlphas}
For $G\in \F_q^{k\times n}$, let $$\alpha_i^s(G)=|\{S\subseteq [n] : |S|=s, \text{~} {\bfe}_i\in \langle{\bfg}_j:\text{~} j\in S\rangle\},$$ where ${\bfg}_j$ represents the $j$-th column of $G$. Then, for every $i\in[k]$, the expected value of $\tau_i(G)$ is given by:
\begin{align}
\E[\tau_i(G)]=nH_n-\sum_{s=1}^{n-1}\frac{\alpha_i^s(G)}{{n-1 \choose s}}.    
\end{align}
\end{lemma}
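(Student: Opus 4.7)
The plan is to exploit the coupon–collector structure underlying the sampling–with–replacement process. For $\ell=1,\dots,n$, let $N_\ell$ denote the time at which the $\ell$-th distinct column of $G$ is first observed, and let $\pi(\ell)\in[n]$ be the index of that column, so that $\pi$ is a random permutation of $[n]$. A standard symmetry argument shows that $\pi$ is uniformly distributed on the symmetric group on $[n]$ and is independent of the waiting–time gaps $N_\ell-N_{\ell-1}$, which are mutually independent geometric random variables. In particular, the usual coupon–collector identity gives $\E[N_\ell]=n(H_n-H_{n-\ell})$.

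Next, I would note that $\tau_i(G)=N_{\ell^*}$, where
$$\ell^*:=\min\bigl\{\ell\geq 1 : {\bfe}_i\in\langle {\bfg}_{\pi(1)},\dots,{\bfg}_{\pi(\ell)}\rangle\bigr\},$$
because ${\bfe}_i$ enters the span of the sampled columns precisely at the moment the $\ell^*$-th distinct column is drawn. The uniformity of $\pi$ implies that $\{\pi(1),\dots,\pi(s)\}$ is a uniformly random size-$s$ subset of $[n]$, so that $\Pr[\ell^*\leq s]=\alpha_i^s(G)/\binom{n}{s}$. Since $\ell^*$ is a function of $\pi$ alone and hence independent of the gap sequence, conditioning on $\pi$ yields
$$\E[\tau_i(G)]=\E\bigl[\E[N_{\ell^*}\mid\pi]\bigr]=\E\bigl[n(H_n-H_{n-\ell^*})\bigr]=nH_n-n\,\E[H_{n-\ell^*}].$$

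To finish, I would rewrite $H_{n-\ell^*}=\sum_{t\geq 1}1/t\cdot[t\leq n-\ell^*]$ and swap the order of summation, turning the remaining expectation into
$$\E[H_{n-\ell^*}]=\sum_{t=1}^{n-1}\frac{\Pr[\ell^*\leq n-t]}{t}=\sum_{s=1}^{n-1}\frac{1}{n-s}\cdot\frac{\alpha_i^s(G)}{\binom{n}{s}},$$
after substituting $s=n-t$. The identity $\tfrac{n}{(n-s)\binom{n}{s}}=\tfrac{1}{\binom{n-1}{s}}$ then collapses the factor in front and yields the claimed formula. I expect the only real subtlety to be the independence of $\pi$ and the gap sequence—the standard but easy-to-fumble fact that the order in which new coupons arrive is uniform and independent of their arrival times; once that is in hand, the remaining steps are routine bookkeeping.
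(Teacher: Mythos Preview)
The paper does not prove this lemma; it is quoted from~\cite{GBRY24} and stated without proof, so there is no in-paper argument to compare against. Your proof is correct on its own merits. The coupon--collector decomposition you use---writing $\tau_i(G)=N_{\ell^*}$, exploiting the independence of the arrival order $\pi$ and the inter-arrival gaps, and then collapsing $n\,\E[H_{n-\ell^*}]$ via the tail-sum---is a clean and standard route to this identity, and every step checks out (including the binomial identity $\tfrac{n}{(n-s)\binom{n}{s}}=\tfrac{1}{\binom{n-1}{s}}$). The only point worth flagging is the one you already anticipated: the claim that $\pi$ is uniform and independent of $(\Delta_1,\dots,\Delta_n)$ deserves a one-line justification (e.g., by symmetry of the i.i.d.\ draws, the conditional law of the next new coupon given the history is uniform over the unseen indices, independently of how long it took to arrive), but this is indeed a routine fact.
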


Using this result, it was derived that for various families of codes, e.g. Hamming, simplex, systematic MDS, Golay, Reed-Muller, the random access expectation is $k$. Then, in~\cite{GMZ24}, by looking at the random access problem from a geometric point of view, the authors proposed a construction for the case of $k=3$. Using the above formula, they demonstrated that their construction achieves a maximal random-access expectation upper bounded by $0.88\overline{22}\cdot 3$, improving the result of~\cite{BSGY24}. Later, the authors of~\cite{BLLLRS25} computed the exact expectation of this geometric construction and, with optimal parameters, obtained a random access expectation of about $0.881542\cdot 3$. 

Despite these valuable contributions, the fundamental limits of the random access coverage depth problem are still not well understood. Even for the case of $k=2$, the values of $T_q(k)$ and $T(k)$  had not been determined previously. In this work, we seek to deepen and extend the knowledge of the random access problem and provide new insights into the values of $T_{\max}(G), \text{~}T_q(n,k),\text{~} T_q(k) $ and $T(k)$. 
\begin{comment}
We would like to note that a closely related problem is the non-random access setting, where the goal is to retrieve all the user’s information (and not one information strand, as in our setting). By drawing a connection
to the Coupon Collector’s Problem~\cite{F57, FGT92, ER61, N60}, the authors in~\cite{BSGY24} established that if the $k$ information strands are encoded by an MDS code, then the expected number of reads to decode all $k$ information strands is $H_n-H_{n-k}$. Moreover, the authors showed that one cannot do better than this, and so MDS codes are the most suitable codes for minimizing the expected number of reads when the user wants to retrieve all their data. Later on, the non-random access version of the DNA coverage depth problem was expanded in~\cite{CY24} to address composite DNA letters~\cite{AVAAY19}, and further explored in~\cite{PGYA24, SACZH24} for the setup of the combinatorial composite of DNA shortmers~\cite{PRYA21}. \textcolor{red}{ Another extension to the random access setup
was studied in \cite{AGY24}. However, their study focused on decoding groups of DNA strands collectively representing a single file, rather than decoding individual strands.}
\end{comment}
\section{Optimal Random Access Expectation for Two Information Strands}\label{Sec:k=2}
In this section, we determine the exact values of $T_q(2)$ and $T(2)$. To this end, we begin by analyzing the structure of the full-rank matrices that minimize $T_{q}(2)$, and in the process, we adopt the following notation.

\begin{notation}
In this section, unless specified otherwise, $G\in \F_q^{2\times n} $ is a matrix of full rank, and we let $x_1(G)$ and $x_2(G)$ denote the number of columns in $G$ that are equal to $\bfe_1=(1,0)^T$ and $\bfe_2=(0,1)^T$, respectively (recall that these vectors correspond to the information strands). Moreover, over~$\F_q$, there are exactly $q-1$ vectors in $\F_q^2$ that are distinct from the information strands and mutually non-collinear. Due to Remark~\ref{rem:coll}, we only need to consider such non-collinear columns for $G$. Denoting by $\beta \in \F_q \setminus \{0\}$ a primitive element of $\F_q$, the $q-1$ non-collinear vectors can be expressed as $(1,\beta^i)^T$ for $0\leq i\leq q-2$. For $0 \le i \le q-2$ let $a_i(G)$ denote the number of columns in $G$ of the form $(1,\beta^i)^T$. The total number of columns in $G$ is then given by $x(G)\triangleq x_1(G)+x_2(G)+\sum_{i=0}^{q-2}a_i(G)$. When the matrix $G$ is clear from context, we omit specifying $G$ in our notation and simply write $x,x_1,x_2$ and $a_i$ instead of $x(G),x_1(G),x_2(G)$ and $a_i(G)$.
\end{notation}

In the next claim we express the expectations of $\tau_1(G)$ and $\tau_2(G)$ in terms of $x_1(G)$, $x_2(G)$ and the $a_i(G)$'s for $0 \le i \le q-2$. 

\begin{claim} \label{cl:expcomp}
For a matrix $G\in \F_q^{2\times n}$, it holds that:  
  \begin{align}
\E[\tau_1(G)]=1+\frac{x_2}{x-x_2}+\sum_{i=0}^{q-2}\frac{a_i}{x-a_i},\label{Eq:CalOfEx}\\
\E[\tau_2(G)]=1+\frac{x_1}{x-x_1}+\sum_{i=0}^{q-2}\frac{a_i}{x-a_i}.
\end{align}
\end{claim}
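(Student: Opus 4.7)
The plan is to compute $\E[\tau_1(G)]$ directly via the tail-sum formula $\E[\tau_1(G)]=\sum_{t\ge 0}\Pr[\tau_1(G)>t]$, exploiting a clean geometric characterization of failure that is special to the two-dimensional setting. Since $\F_q^2$ has dimension two, any two linearly independent (equivalently, non-collinear) columns of $G$ already span the whole space and in particular contain $\bfe_1$. Consequently, $\bfe_1\notin\langle \bfg_{j_1},\dots,\bfg_{j_t}\rangle$ if and only if either $t=0$ or all $t$ sampled columns lie in one common one-dimensional subspace of $\F_q^2$ different from $\langle\bfe_1\rangle$. By Remark~\ref{rem:coll} and the setup preceding the claim, these ``bad'' one-dimensional subspaces are precisely $\langle\bfe_2\rangle$ and $\langle (1,\beta^i)^T\rangle$ for $0\le i\le q-2$, which contain $x_2$ and $a_i$ columns of $G$, respectively. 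The formula for $\E[\tau_2(G)]$ then follows by the obvious symmetry that swaps the roles of $\bfe_1$ and $\bfe_2$.

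\textbf{Computation.} Because the draws are i.i.d.\ uniform over the $x$ columns of $G$, the probability that all $t$ draws fall inside a fixed one-dimensional subspace containing $m$ columns is $(m/x)^t$. For $t\ge 1$ the events ``all $t$ draws belong to subspace $D$'' for distinct $D\neq\langle\bfe_1\rangle$ are pairwise disjoint, since a single nonzero draw already pins down the subspace. Summing over the admissible subspaces gives
\begin{align*}
\Pr[\tau_1(G)>t]=\Bigl(\tfrac{x_2}{x}\Bigr)^t+\sum_{i=0}^{q-2}\Bigl(\tfrac{a_i}{x}\Bigr)^t\qquad (t\ge 1),
\end{align*}
while $\Pr[\tau_1(G)>0]=1$. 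Since $G$ has full rank, no single one-dimensional subspace can absorb all $x$ columns, so $x_2<x$ and $a_i<x$ for every $i$; hence each geometric series converges and summing $\sum_{t\ge 1}r^t=r/(1-r)$ term-by-term yields the desired closed form for $\E[\tau_1(G)]$.

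\textbf{What to watch out for.} The only delicate point is the disjointness used in the computation step, and it relies crucially on $k=2$: in higher dimensions $\bfe_1$ can fail to lie in a span without the sampled columns being collinear (they only need to lie in some hyperplane avoiding $\bfe_1$), and inclusion--exclusion over flags of subspaces becomes unavoidable. An alternative route is to plug the formula $\alpha_1^s(G)=\binom{x}{s}-\binom{x_2}{s}-\sum_i\binom{a_i}{s}$ into Lemma~\ref{Lem:CalExpeUsingAlphas} and then apply the identity $\sum_{s=0}^{m}\binom{m}{s}/\binom{n-1}{s}=n/(n-m)$, valid for $m<n$; this yields the same answer but forces an additional combinatorial step that the tail-sum argument bypasses.
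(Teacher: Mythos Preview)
Your proof is correct and takes a slightly different route from the paper. The paper argues by \emph{first-step conditioning}: if the first draw lands in a one-dimensional subspace $D\neq\langle\bfe_1\rangle$ containing $m$ columns, then the remaining waiting time until a draw leaves $D$ is geometric with success probability $(x-m)/x$, so $\E[\tau_1\mid\text{first draw in }D]=1+x/(x-m)$; averaging over the first draw gives the formula. You instead compute $\Pr[\tau_1>t]$ directly via the tail-sum and then sum a geometric series for each bad subspace. Both arguments rest on the same structural observation --- in dimension two, failure to recover $\bfe_1$ after $t\ge 1$ draws forces all draws into a single line other than $\langle\bfe_1\rangle$ --- and both are equally short. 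The paper's conditioning is perhaps the more immediately probabilistic phrasing, while your tail-sum route makes the disjointness of the bad events (and hence the reliance on $k=2$) fully explicit; your closing remark about the alternative via Lemma~\ref{Lem:CalExpeUsingAlphas} is also valid and gives a third path to the same identity.
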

\begin{proof}
   We condition on the first draw. If the first draw yields a column of the form $(0,1)^T$, then, upon subsequently drawing a column of a different form, the first column can be recovered. Consequently, given that the first sample was $(0,1)^T$, $\tau_1(G)$ has a geometric distribution with success probability ${(x-x_2)}/{x}$. Analogous reasoning applies if the first draw corresponds to a column of the form $(1,\beta^i)^T$. Therefore, the expected waiting time $\E[\tau_1(G)]$ is given by
\begin{align}
\E[\tau_1(G)]&=\frac{x_1}{x}+\frac{x_2}{x}(1+\frac{x}{x-x_2})+\sum_{i=0}^{q-2}\frac{a_i}{x}(1+\frac{x}{x-a_i})\nonumber\\&=1+\frac{x_2}{x-x_2}+\sum_{i=0}^{q-2}\frac{a_i}{x-a_i}\nonumber.
\end{align}
$E[\tau_2(G)]$ can be computed in an analogous way.
\end{proof}

%\begin{comment}
The following lemma shows that, in order to determine the value of $T_{q}(2)$, one can assume 
$x_1=x_2$.

\begin{lemma}\label{Lem:x_1=x_2}
If $G\in \F_q^{2\times n}$ is a matrix with $x_1(G)>x_2(G)$, then there exists a matrix $G'\in \F_q^{2\times 2n}$ with $T_{\max}(G')<T_{\max}(G)$ and $x_1(G')=x_2(G')$.  
\end{lemma}

\begin{proof}
Denote $x_1=x_1(G),\text{~}x_2=x_2(G), \text{~}x=x(G),\text{~}a_i=a_i(G)$ for $0 \le i \le q-2$, and assume that $x_1>x_2$. From Equation~\eqref{Eq:CalOfEx}, we have
$$T_{\max}(G)=\E[\tau_2(G)]=1+\frac{x_1}{x-x_1}+\sum_{i=0}^{q-2}\frac{a_i}{x-a_i}.$$
Now, consider the concatenated matrix $G'=G\circ \overline{G} \in \F_q^{2\times 2n}$, where $\overline{G}$ is a modified version of $G$, in which each column of the form $(1,0)^T$ is replaced by $(0,1)^T$, and vice versa. 

First, it is clear from the construction of $G'$, that $x_1(G')=x_2(G')$. In addition, we have that:
\begin{align*}
T_{\max}(G')&=1+\frac{x_1+x_2}{2x-x_1-x_2}+\sum_{i=0}^{q-2}\frac{2a_i}{2x-2a_i}\\&= 1+\frac{x_1+x_2}{2x-x_1-x_2}+\sum_{i=0}^{q-2}\frac{a_i}{x-a_i}.
\end{align*}
Since $x_1>x_2$, it follows that $$\frac{x_1+x_2}{2x-x_1-x_2}<\frac{x_1}{x-x_1},$$and thus $T_{\max}(G')<T_{\max}(G)$. 
\end{proof}

The next Lemma is the analogs of Lemma~\ref{Lem:x_1=x_2} for the $a_i's$.

\begin{lemma}\label{lem:equal_a_i}
    If $G\in \F_q^{2\times n}$ is a matrix with $a_i<a_j$ for some $i,j \in \{0,1,\dots,q-2\}$, then there exists a matrix $G'\in \F_q^{2\times 2n}$ with $T_{\max}(G')<T_{\max}(G)$.
\end{lemma}

\begin{proof}
    Similarly to the proof of Lemma~\ref{Lem:x_1=x_2}, define the matrix $G'=G\circ \overline{G} \in \F_q^{2\times 2n}$, where $\overline{G}$ is a modified version of $G$, in which each column of the form $(1,\alpha^i)^T$ is replaced by $(1,\alpha^j)^T$, and vice versa. Then we have \begin{align*}
    T_{\max}(G')-T_{\max}(G)=\frac{2a_i+2a_j}{2x-a_i-a_j}-\frac{a_i}{x-a_i}-\frac{a_j}{x-a_j}=x(-a_i^2-a_j^2+2a_ia_j).
    \end{align*}
    Thus, since $a_i\neq a_j$, we have that $T_{\max}(G')-T_{\max}(G)<0$.
\end{proof}
%\end{comment}
\begin{comment}

By Lemma \ref{Lem:ExSameForAllInforStrands} and Lemma~\ref{Lem:UniMarDistr} from Section~\ref{Sec:OptStru}, which states that in an optimal random access matrix, the marginal distribution of vectors of the same weight should be uniform, we can assume that $x_1=x_2$ and $a_i=a_j$ in equation~(\ref{Eq:CalOfEx}). \end{comment}

Note that,  both Lemma~\ref{Lem:x_1=x_2} and Lemma~\ref{lem:equal_a_i} reduce not only the maximum random-access expectation but also the average. Specifically, in both lemmas it holds that $\frac{\E[\tau_1(G')+\E[\tau_2(G')]]}{2}\leq\frac{\E[\tau_1(G)+\E[\tau_2(G)]]}{2}.$

Now, by Lemmas~\ref{Lem:x_1=x_2} and~\ref{lem:equal_a_i}, to determine the value of $T_q(2)$,  it is sufficient to consider matrices $G$, for which $x_1=x_2$ and $a_i=a_j$ for all $i,j \in \{0,\dots,q-2\}$. Writing $x_1=x_2$, $a:=a_i$, and  $x=2x_1+(q-1)a$, Equation~(\ref{Eq:CalOfEx}) simplifies to
\begin{align}
&T_{\max}(G) = \E[\tau_1(G)]=1+\frac{x_1}{x_1+(q-1)a}+\frac{(q-1)a}{2x_1+(q-2)a} \label{Eq:ExForX_1=X_2andA_i=a}.
\end{align}

The next theorem leverages this formula to determine the exact value of $T_q(2)$.
\begin{theorem} \label{thm:tq2}
We have that $T_q(2)=1+\frac{2q^2-q(\sqrt{2}+1)-2+\sqrt{2}}{q^2(1+\sqrt{2})-q(2+\sqrt{2})}$.
\end{theorem}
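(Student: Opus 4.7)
The plan is to leverage Lemmas~\ref{Lem:x_1=x_2} and~\ref{lem:equal_a_i} to reduce the optimization to matrices with $x_1=x_2$ and all $a_i$ equal to a common value $a$, and then minimize the resulting one-variable expression \eqref{Eq:ExForX_1=X_2andA_i=a} explicitly. Since that expression depends only on the ratio $r=a/x_1$ (when $x_1>0$), I define
\[
f_q(r)\;=\;1+\frac{1}{1+(q-1)r}+\frac{(q-1)r}{2+(q-2)r}, \qquad r\in[0,\infty),
\]
and compute
\[
f_q'(r) \;=\; -\frac{q-1}{(1+(q-1)r)^2} + \frac{2(q-1)}{(2+(q-2)r)^2}.
\]
Setting $f_q'(r)=0$ and taking the positive square root reduces to the linear equation $\sqrt{2}\bigl(1+(q-1)r\bigr)=2+(q-2)r$, whose unique positive solution is
\[
r^{\ast}\;=\;\frac{\sqrt{2}}{q+\sqrt{2}}.
\]
Checking the sign of $f_q'$ on either side of $r^{\ast}$ (or comparing with the values $f_q(0)=2$ and $f_q(\infty)=(2q-3)/(q-2)$) confirms that $r^{\ast}$ is the global minimum on $[0,\infty)$.

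Next I would substitute $r^{\ast}$ back into $f_q$. The two convenient identities
\[
1+(q-1)r^{\ast}=\frac{q(1+\sqrt{2})}{q+\sqrt{2}}, \qquad 2+(q-2)r^{\ast}=\frac{q(2+\sqrt{2})}{q+\sqrt{2}},
\]
are immediate and produce the clean intermediate form
\[
f_q(r^{\ast})\;=\;1+\frac{q+\sqrt{2}}{q(1+\sqrt{2})}+\frac{(q-1)\sqrt{2}}{q(2+\sqrt{2})}.
\]
A routine (but tedious) algebraic manipulation, combining the last two fractions over the common denominator $q(1+\sqrt{2})(2+\sqrt{2})$ and then multiplying numerator and denominator by the factor $(1+\sqrt{2})q-(2+\sqrt{2})$, rewrites $f_q(r^\ast)$ as the quadratic-in-$q$ fraction stated in the theorem. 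A useful sanity check is that the expression tends to $2\sqrt{2}-1=1+\tfrac{2}{\sqrt{2}+1}$ as $q\to\infty$, which matches the asymptotic upper bound for $T(2)$ recalled in the introduction.

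Finally I would convert the continuous optimization into a statement about $T_q(n,2)$. For the lower bound, any full-rank $G\in\F_q^{2\times n}$ may be transformed via Lemmas~\ref{Lem:x_1=x_2} and~\ref{lem:equal_a_i} into a matrix $G'$ with $T_{\max}(G')\le T_{\max}(G)$ of the reduced form, and then $T_{\max}(G')=f_q(a'/x_1')\ge f_q(r^{\ast})$; hence $T_q(n,2)\ge f_q(r^{\ast})$ for every $n$ and therefore $T_q(2)\ge f_q(r^{\ast})$. For the matching upper bound, set $x_1=x_2=N$ and $a=\lfloor Nr^{\ast}\rfloor$ (for each large $N$); the resulting matrix $G_N$ has $n=2N+(q-1)a\to\infty$ and, by continuity of $f_q$, $T_{\max}(G_N)=f_q(a/N)\to f_q(r^{\ast})$, so $T_q(2)\le f_q(r^\ast)$. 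The only non-routine step is the algebraic simplification of $f_q(r^{\ast})$ into the specific form stated in the theorem; the calculus is clean because the critical point has the simple closed form $r^{\ast}=\sqrt{2}/(q+\sqrt{2})$.
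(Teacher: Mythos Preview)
Your proposal is correct and follows essentially the same approach as the paper: reduce via Lemmas~\ref{Lem:x_1=x_2} and~\ref{lem:equal_a_i} to the symmetric form~\eqref{Eq:ExForX_1=X_2andA_i=a}, differentiate in the single free parameter, and substitute the critical point back. Your critical ratio $r^{\ast}=\sqrt{2}/(q+\sqrt{2})$ is exactly the paper's $a^{\ast}/x_1=(\sqrt{2}q-2)/(q^2-2)$ after rationalizing the denominator, and your handling of the passage from the continuous optimum to $T_q(2)$ (via rational approximation of $r^{\ast}$ for the upper bound and the symmetrization lemmas for the lower bound) matches the paper's brief remark that ``$a^{\ast}$ can be made arbitrarily close to an integer.''
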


\begin{proof}
    We begin by finding the optimal value of $a$ as a function of $x_1$. Taking the derivative of Equation~(\ref{Eq:ExForX_1=X_2andA_i=a}) with respect to $a$ and setting the derivative to zero yields the condition 
\begin{align}
-\frac{x_1(q-1)}{(x_1+(q-1)a)^2}+\frac{(q-1)(2x_1+(q-2)a)-(q-1)(q-2)a}{(2x_1+(q-2)a)^2}=0. \label{Eq:Deriative}  
\end{align}
Solving Equation (\ref{Eq:Deriative}),we find that the optimal value of $a$ is given by \begin{align}
a^*=\frac{\sqrt{2}q-2}{q^2-2}x_1.\label{Eq:OptimalVal_A}    
\end{align}
This choice indeed minimizes the expression in Equation (\ref{Eq:ExForX_1=X_2andA_i=a}), since the the derivative is negative immediately before $a^*$ and positive immediately after it.
Although $a^*$ is not generally an integer, its continuous dependence on $x_1$, allows us to select $x_1$ so that $a^*$ can be made arbitrarily close to an integer.

Substituting $a^*$ from Equation (\ref{Eq:OptimalVal_A}) into Equation (\ref{Eq:ExForX_1=X_2andA_i=a}) yields
\begin{align*}
    T_q(2)&=1+\frac{q^2-2}{q^2-2+(q-1)(\sqrt{2}q-2)}+\frac{q-1}{\frac{2(q^2-2)}{\sqrt{2}q-2}+q-2}\\&=1+\frac{q^2-2}{q^2(1+\sqrt{2})-q(2+\sqrt{2})}+\frac{\sqrt{2}q^2-q(2+\sqrt{2})+2}{q^2(2+\sqrt{2})-q(2+2\sqrt{2})}\\&=1+\frac{q^2-2}{q^2(1+\sqrt{2})-q(2+\sqrt{2})}+\frac{q^2-q(\sqrt{2}+1)+\sqrt{2}}{q^2(\sqrt{2}+1)-q(\sqrt{2}+2)}\\&=1+\frac{2q^2-q(\sqrt{2}+1)-2+\sqrt{2}}{q^2(1+\sqrt{2})-q(2+\sqrt{2})}. \qedhere
    \end{align*} 
\end{proof}

\begin{figure}[ht!]
\centering
\begin{tikzpicture}[scale=1]
\begin{axis}[legend style={at={(1,1)}, legend style={cells={align=left}}, anchor = north east, /tikz/column 2/.style={
                column sep=5pt}},
		legend cell align={left},
		width=15cm,height=7cm,
    xlabel={q},
    ylabel={$T_q(2)/2$},
    xmin=2, xmax=157,
    ymin=0.913, ymax=0.958,
    xtick={2,20,40,60,80,100,120,140,157},
    ytick={0.914,0.920,0.930,0.940,0.950,0.958},
    ymajorgrids=true,
    grid style=dashed,
    every axis plot/.append style={},  yticklabel style={/pgf/number format/fixed}
]
\addplot+[color=orange,mark=o,mark size=1pt,smooth]
coordinates {
(2,0.957106781186547524400844362104)
(3,0.942809041582063365867792482806)
(4,0.935660171779821286601266543156)
(5,0.931370849898476039041350979368)
(7,0.926468767748367184687161763609)
(8,0.924936867076458167701477633683)
(9,0.923745388776084487823723310408)
(11,0.922012329430086408001535203827)
(13,0.920812519113626198893866514655)
(16,0.919575214724776608251583178947)
(17,0.919259823409971810636883505139)
(19,0.918728638037668993601599843989)
(23,0.917943407487308307549441388374)
(25,0.917645019878171246849621175241)
(27,0.917390837840758195142366919609)
(29,0.917171715394712460911975319927)
(31,0.916980866812672627872601991170)
(32,0.916894388548935828526635951578)
(37,0.916532114741389777212453893825)
(41,0.916305914510336632977257291911)
(43,0.916208595806278884876068056205)
(47,0.916038805726858983933567687524)
(49,0.915964305998133925356756301267)
(53,0.915832174403791368635619125640)
(59,0.915667569790500217466066881426)
(61,0.915619897416159064395103663157)
(64,0.915553975461015438664162337893)
(67,0.915493956964541391356887399968)
(71,0.915421822057981034029833953446)
(73,0.915388719052915664571528604700)
(79,0.915299466646853339323186335295)
(81,0.915272654195649430915248122676)
(83,0.915247133910768602430584040785)
(89,0.915177454930700722410658513826)
(97,0.915097958637289945205795026022)
(101,0.915062933042668365150186855653)
(103,0.915046440408307718230798542421)
(107,0.915015304780823132457747708095)
(109,0.915000593910956562115434699217)
(113,0.914972734387492437750346346119)
(121,0.914922541196457899638038404175)
(125,0.914899853874110288411275214415)
(127,0.914889046133936820070966765751)
(128,0.914883768917055243732925531052)
(131,0.914868420675590506444423924788)
(137,0.914839740749933771073209244470)
(139,0.914830730989115947731172977992)
(149,0.914789310276631323641945846866)
(151,0.914781684476584485564591447890)
(157,0.914759972803839666325244846985)
};
\end{axis}
\end{tikzpicture}
\caption{\label{fig:tq2} Normalized (by $k=2$) random access expectation $T_q(2)$ from Theorem~\ref{thm:tq2} and various prime powers $q$.}
%\vspace{-0.3cm}
\end{figure}

We include a plot (see Figure~\ref{fig:tq2}) to show what the random access expectation from Theorem~\ref{thm:tq2} is for various values of $q$.
Letting $q\to\infty$ gives the following corollary, which confirms the optimality of the construction  in \cite[Theorem 12]{BSGY24}.
\begin{corollary}
It holds that $T(2)=\liminf_{q\to \infty}T_q(2)=1+\frac{2}{\sqrt{2}+1}\approx 0.914\cdot 2$.
\end{corollary}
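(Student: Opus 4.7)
The plan is to deduce the corollary directly from the closed-form expression for $T_q(2)$ established in Theorem~\ref{thm:tq2}, so the entire argument reduces to a limit computation. The first step is to observe that the formula
$$T_q(2) = 1 + \frac{2q^2 - (\sqrt{2}+1)q - 2 + \sqrt{2}}{(1+\sqrt{2})q^2 - (2+\sqrt{2})q}$$
is a rational function of $q$ that is continuous and well-defined for all sufficiently large reals. Because the map $q \mapsto T_q(2)$ extends continuously and has a genuine limit as $q\to\infty$ over the reals, the liminf of its restriction to the infinite set of prime powers coincides with that ordinary limit, and hence with $T(2)$.

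Next, I would compute the limit by dividing the numerator and the denominator of the fraction by $q^2$. The terms $(\sqrt{2}+1)/q$, $(2-\sqrt{2})/q^2$ in the numerator and $(2+\sqrt{2})/q$ in the denominator all vanish as $q\to\infty$, leaving
$$\lim_{q\to\infty} T_q(2) = 1 + \frac{2}{1+\sqrt{2}}.$$
Rationalizing by multiplying top and bottom by $\sqrt{2}-1$ gives $\tfrac{2}{1+\sqrt{2}} = 2(\sqrt{2}-1)$, so $T(2) = 2\sqrt{2}-1 \approx 1.828$, which is exactly $1+\tfrac{2}{\sqrt{2}+1}$ and matches the numerical value $0.914\cdot 2$ appearing in the statement.

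There is no real obstacle in this argument, as the heavy lifting was already done in the proofs of Lemma~\ref{Lem:x_1=x_2}, Lemma~\ref{lem:equal_a_i}, and Theorem~\ref{thm:tq2}. The one subtlety worth flagging is that $T_q(2)$ is defined as a minimum over integer-column-count matrices, so one must know that the optimizer $a^* = \tfrac{\sqrt{2}q-2}{q^2-2}x_1$ can be approximated by an integer; but this is already handled inside the proof of Theorem~\ref{thm:tq2} by taking $x_1$ large, so no additional work is needed here. Concluding, the corollary also confirms the asymptotic optimality (for large $q$) of the two-strand construction from \cite[Theorem 12]{BSGY24}, which was previously only known to give the matching upper bound $1+\tfrac{2}{\sqrt{2}+1}$.
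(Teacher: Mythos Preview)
Your proposal is correct and follows exactly the paper's approach: the corollary is obtained by simply letting $q\to\infty$ in the closed-form expression of Theorem~\ref{thm:tq2}. The paper offers no additional argument beyond this limit computation, so your write-up (with its extra remarks on continuity and on the integer-approximation issue already handled in Theorem~\ref{thm:tq2}) is at least as detailed as the original.
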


\section{Record-Low Random-Access Expectation}\label{sec:generalConstruction}
In this section, we will construct a rank-$k$ matrix, which we will denote by $G_k$, with a random access expectation that improves upon all previously known results for all $k\geq 4$. This construction is a generalization of the construction presented in~\cite{GMZ24}, which was specifically given for $k=3$. In the construction, all columns of the matrix $G_k$ are of weight $1$ or $2$. 

Accordingly, the columns of $G_k$ can be naturally interpreted as representing the vertices and edges of $K_k$, the complete graph on $k$ vertices in the following way: If we denote by $V_i$, $i \in [k]$ the vertices of $K_k$, then, for $i \in [k]$, the set of columns of weight $1$ with support $i$ correspond to the vertex $V_i$, and for $i,j \in [k]$ with $i \ne j$, the set of columns of weight $2$ with support $i,j$ correspond to the edge between $V_i$ and $V_j$. This perspective will prove useful in Section~\ref{Sec: AsymptoticExpectationCon}, where we analyze the random access expectation of the proposed construction. 

\begin{comment}
We use the following notations throughout this section:

\begin{itemize}
\item[(i)] We label by $V_i$ the vertex corresponding to the basis vector $\textbf{e}_i$;
\item[(ii)] an edge between $\textbf{e}_{i_1}$ and $\textbf{e}_{i_2}$ for $i_1,i_2 \in [k]$ corresponds to the set of columns of $G_k$ with support $\{i_1,i_2\}$ and we denote by $E_{i,j}$ the set of columns with support $\{i,j\}$.
\end{itemize}
\end{comment}

In Section~\ref{Sec:ConsRecCom}, we present the formal construction of $G_k$, but first we introduce an additional notation and definition. For $i, j\in[k]$, $i \ne j$, let $E_{i,j}$ denote the set of columns of weight $2$ with support $\{i,j\}$ in $G_k$. Our objective is to ensure that $G_k$ satisfies a property we call \emph{recovery completeness}, which we formally define below.

\begin{definition}\label{def:rc}
    A rank-$k$ matrix $G \in \F_q^{k \times n}$ is called \textit{\textbf{recovery complete}}  if the following conditions hold:
\begin{enumerate}
\item For any $i, j \in [k]$, given two distinct columns  ${\bfg},{\bfg}'\in E_{i,j}$, it is possible to recover the basis vectors ${\bfe_{i}}$ and $\bfe_{j}$, i.e., $\{{\bfe}_i,{\bfe}_j\}\subseteq \langle {\bfg},{\bfg}'\rangle$.
\item Suppose we collect one column each from $E_{j_1, j_2}, E_{j_2, j_3}, \ldots, E_{j_m, j_1}$ so all the $j_i$'s are different and the edges $\{j_1, j_2\}, \ldots, \{j_m, j_1 \}$ form a cycle in $K_k$ of length $m \leq k$ . Then, it is possible to recover the basis vectors corresponding to the indices ${j_1}, \ldots, {j_m}$.
\end{enumerate}
\end{definition}

As we will show later in Secion~\ref{Sec: AsymptoticExpectationCon}, recovery complete matrices have a \emph{good} random access expectation. But first, we construct such matrices in the next section.

\subsection{Construction of Recovery Complete Matrices}\label{Sec:ConsRecCom}

We begin by revisiting the case where $k=3$. While a construction for this case, based on a geometric viewpoint, was presented in~\cite{GMZ24}, the approach taken here is designed so that it can be applied to a more general framework. Our matrices for the case where $k=3$ will have the following form:

\begin{align}\label{eq:constr3}
G_3(x) =
\left[
\begin{array}{ccc|cccc|cccc|cccc}
1 & 0 & 0 & 1 & 1 & \cdots & 1 & 1 & 1 & \cdots & 1 & 0 & 0 & \cdots & 0\\
0 & 1 & 0 & \beta^{i_1} & \beta^{i_2} & \cdots & \beta^{i_x} & 0 & 0 & \cdots & 0 & 1 & 1 & \cdots & 1\\
0 & 0 & 1 & 0 & 0 & \cdots & 0 & \beta^{i_{x+1}} & \beta^{i_{x+2}} & \cdots & \beta^{i_{2x}} & \beta^{i_{2x+1}} & \beta^{i_{2x+2}} & \cdots & \beta^{i_{3x}}
\end{array}
\right]
\end{align}
\vspace{-0.7cm}
\[
%\vspace{-0.8cm}
\hspace{3.3cm}
\underbrace{\hspace{3.3cm}}_{E_{1,2}} \hspace{0.2cm}
\underbrace{\hspace{4cm}}_{E_{1,3}} \hspace{0.2cm}
\underbrace{\hspace{4.4cm}}_{E_{2,3}}
\]
% \begin{align}\label{eq:constr3}
% G_3(x) = \left[\begin{array}{ccc|c|c|c} 
%         1 & 0 & 0 & {\begin{matrix}1\ \  & 1\ \  &  \cdots & 1\ \  \end{matrix}}                                   & {\begin{matrix} \ \  1\ \ \ \ \ \  & 1\ \ \ \  &  \cdots & 1\ \ \ \ \ \end{matrix}} & {\begin{matrix}\ \  &0\ \ \ \ \ \ & 0 \ \ \ \   & \ \   \cdots & 0 \ \ \ \ \    \end{matrix}}\\
%         0 & 1 & 0 & {\begin{matrix}\beta^{i_1} & \beta^{i_2} &  \cdots & \beta^{i_x} \end{matrix}} & {\begin{matrix}\ \  0\ \ \ \ \ \ & 0 \ \ \ \   &  \cdots & 0 \ \ \ \ \    \end{matrix}} &  {  \begin{matrix}\ \  &1\ \ \ \ \ \ & 1 \ \ \ \   & \ \   \cdots & 1 \ \ \ \ \  \end{matrix}} \\
% 	0 & 0 & 1 & \underbrace{\begin{matrix}0\ \  & 0\ \  &  \cdots & 0\ \ \end{matrix}}_{E_{1,2}} &  \underbrace{\begin{matrix}\beta^{i_{x+1}}  & \beta^{i_{x+2}}  &  \cdots & \beta^{i_{2x}} \end{matrix}}_{E_{1,3}} &  \underbrace{\begin{matrix}\beta^{i_{2x+1}}  & \beta^{i_{2x+2}}  &  \cdots & \beta^{i_{3x}} \end{matrix}}_{E_{2,3}}
% \end{array}\right],
%    \end{align}
where $G_3(x) \in \F_q^{3 \times (3+3x)}$, $\beta$ is a primitive element of $\F_q$ and all the $i_j$'s are different.

\begin{comment}
Note that given the description for (\ref{eq:constr3}) above, we can parameterize our generator matrix $G_3$ in terms of sets of elements from $\{0,1,2,\ldots,q-1\}$. For example, suppose $E_{1,2}$ contains the columns $\left \{ \begin{bmatrix} 1 \\ \beta^i \\ 0 \end{bmatrix}, \begin{bmatrix} 1 \\ \beta^{j} \\ 0 \end{bmatrix}  \right \}$. Then we can represent this set of columns as $\{ i, j \}$.\end{comment} 
Our objective is to show that for large enough $q$, we can find $\{i_1,\dots,i_{3x}\} \subseteq \{0,\dots,q-2\}$ such that
$G_3(x)$ is recovery complete, with each $E_{i,j}$ containing $x$ columns. To simplify, we assume $\mathbb{F}_q$ is a field of characteristic 2, allowing us to disregard signs.

\begin{claim}
Let $\{i_1,\dots,i_{3x}\} \subseteq \{0,\dots,q-2\}$ be a set of distinct integers such that for all distinct $s,r,\ell \in \{i_1,\dots,i_{3x}\}$, we have $$s \not\equiv r+\ell \bmod (q-1).$$ 
Then, the matrix $G_3(x)$, as described in~\eqref{eq:constr3}, is recovery complete.
\end{claim}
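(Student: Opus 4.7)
My plan is to verify the two conditions of Definition~\ref{def:rc} separately. Condition~1 is essentially automatic: any two distinct columns in $E_{i,j}$ share the same entry $1$ in the $i$-th coordinate and have distinct entries $\beta^{i_a},\beta^{i_b}$ in the $j$-th coordinate (the $i_r$'s are pairwise distinct), so their difference is a nonzero scalar multiple of $\bfe_j$, and then subtracting an appropriate scalar multiple of this from either original column recovers $\bfe_i$. This part uses only the distinctness of the $i_r$'s, not the stronger $B_2$-type hypothesis.

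For Condition~2, since $k=3$ the only cycle in $K_3$ of length $m\leq 3$ not already subsumed by Condition~1 is the triangle on $\{1,2,3\}$. I have to show that any choice of one column from each of $E_{1,2}$, $E_{1,3}$, $E_{2,3}$ yields three linearly independent vectors. Writing the three chosen columns as $(1,\beta^a,0)^T$, $(1,0,\beta^b)^T$, $(0,1,\beta^c)^T$, where $a$, $b$, $c$ come from the three pairwise disjoint index slots $\{i_1,\dots,i_x\}$, $\{i_{x+1},\dots,i_{2x}\}$, $\{i_{2x+1},\dots,i_{3x}\}$, and hence are pairwise distinct elements of $\{i_1,\dots,i_{3x}\}$, a short expansion shows that the determinant of the resulting $3\times 3$ matrix equals $\beta^b+\beta^{a+c}$ in characteristic~$2$. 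This vanishes iff $b\equiv a+c\pmod{q-1}$, and applying the hypothesis to the distinct triple $(s,r,\ell)=(b,a,c)$ rules exactly this out. Hence the three vectors span $\F_q^3$ and in particular contain $\bfe_1,\bfe_2,\bfe_3$.

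There isn't really a hard obstacle: the whole argument reduces to a single $3\times 3$ determinant calculation followed by a direct invocation of the hypothesis. The two points worth flagging are that the hypothesis is symmetric in its three arguments, so it simultaneously precludes $b\equiv a+c$ regardless of which of the three edges plays which role in the cycle, and that the required pairwise distinctness of $a,b,c$ is automatic because they are drawn from the three disjoint blocks of the index set.
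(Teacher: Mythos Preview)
Your proof is correct and follows essentially the same approach as the paper: both verify Condition~1 via the $2\times 2$ invertibility (you phrase it as taking a difference, the paper computes the determinant) and handle Condition~2 by computing the single $3\times 3$ determinant $\beta^b+\beta^{a+c}$ and invoking the hypothesis. Your observation that $a,b,c$ are automatically distinct because they come from disjoint index blocks is a nice point that the paper leaves implicit.
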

\begin{proof}
To verify the first recovery completeness property, we check that any two columns in $E_{i,j}$ are linearly independent. For any two columns retrieved from $E_{i,j}$, the corresponding submatrix formed by their non-zero rows has the form: 
\begin{align*}
\begin{bmatrix}
1 & 1 \\
\beta^{r} & \beta^{\ell}
\end{bmatrix}
\end{align*}
for some $r,\ell \in \{i_1,\dots,i_{3x}\}$. 
Since the determinant of this matrix is non-zero whenever $r \not\equiv \ell \bmod (q-1)$,  and we assumed that all the powers of $\beta$ showing up in $G_3(x)$ are different and in $\{0,\dots,q-2\}$, the first property of recovery completeness trivially follows.

For the second property of recovery completeness, consider a matrix formed by selecting one column each from $E_{1,2}, E_{1,3}$, and $E_{2,3}$. Note that such a matrix has the following form (up to permutation on the indices):
\begin{align*}
\begin{bmatrix}
1 & 0 & 1 \\
\beta^{r} & 1 & 0 \\
0 & \beta^{ \ell} & \beta^{s}
\end{bmatrix}.
\end{align*}
The determinant of this matrix is $ \beta^{s} + \beta^{r + \ell}$, therefore, to ensure this determinant is non-zero, it suffices to ensure that  $\beta^{s}\neq \beta^{r + \ell}$ or equivalently (as $\mathbb{F}_q$ is a field of characteristic 2) that
\begin{align}\label{eq:sset}
s\not\equiv r+\ell \bmod (q-1).
\end{align}
By the assumptions in the claim, the condition in~\eqref{eq:sset} holds.
\end{proof}

The requirement in (\ref{eq:sset}) is identical to the construction of \textit{sum-free sets}~\cite{DY69,GZ05}; A sum-free set $\cS$ is a set for which there are no solutions to the equation  $r+\ell=s$ with $r,\ell,s\in \cS$. For the setup where the sum-free set is defined over an abelian group of order $n$ (in our case $n=q-1$), it is known that if  $n\not\equiv 0 \bmod 2$ (as is the case here, since $q$ is assumed to be a power of $2$), then there exists a sum-free set of size at least $n(1/3-\frac{1}{3n})$.

Let $\cS \subseteq \{0,\dots,q-2\}$ be a sum-free set of size $3x$. Setting the powers of $\beta$ in $E_{1,2}$, $E_{1,3}$, and $E_{2,3}$ to be equal to the elements of $\cS$, yields the desired result. Although the requirement $x\leq \frac{q-1}{2}$ in~\cite{GMZ24}, which is optimal, is superior to ours,  our construction still ensures $q=\cO(x)$, and can be naturally extended to larger values of $k>3$, as demonstrated in Construction~\ref{ConsGeneralK}.

%The requirement in (\ref{eq:sset}) is less restrictive than the requiring for a Sidon set~\cite{S32, S35, BS85}. \textcolor{blue}{I think we can just take the set $\{0,1,2,4,7,10,13,16,19...\}=\{0,2\}\cup \{3\cZ_++1\}$} Thus, by using well-known results from Sidon sets\textcolor{red}{\cite{}}, if $q=\cO( (3x)^2 )$, then there exists a set $\cS \subseteq \mathbb{Z}_{q-1}$ of size $3x$ whereby (\ref{eq:sset}) holds for any $s,r,\ell \subseteq \cS$. 

\begin{construction}\label{ConsGeneralK}
Let $\cS$ be a set of size at least $x \binom{k}{2}$  with the following property: for any $k' \leq k$ distinct elements  $i_1, i_2, \ldots, i_{k'}\in \cS$ (i.e., $i_j\neq i_{j'}$ for $j\neq j'$), the following holds: 
\begin{align}\label{eq:prop}
\ell_1i_1  + \ell_2i_2  + \cdots + \ell_{k'}i_{k'} \not\equiv 0 \bmod (q-1),
\end{align}
where $\ell_i \in \{-1,1\}$ for all $i \in [k']$, with at least one $\ell_i$ positive and another negative. 

The generator matrix for our code is then given by: 
\begin{align}\label{eq:gG}
G_k(x)= \left[\begin{array}{cccccccccc} 
 I_k & E_{1,2} & E_{1,3} & \cdots & E_{1,k} & E_{2,3} & \cdots & E_{2,k} & \cdots & E_{k-1,k}
\end{array} \right]
\end{align}
where there are $\binom{k}{2}$ sub-matrices $E_{i,j} \in \F_q^{k \times x}$, each corresponding to some $i,j \in [k]$ with $i \ne j$. Each submatrix $E_{i,j}$ has the following properties:
\begin{enumerate}
    \item $|E_{i,j}|=x,$
    \item the support of each column in $E_{i,j}$ is $\{i,j\}$,
    \item the leading coefficient of each column in $E_{i,j}$ is $1$,
    \item the second coefficient in each column of $E_{i,j}$ is a power of a fixed primitive element $\beta$, where the powers of $\beta$ in $E_{1,2}$ are the first $x$ elements of $\cS$, the powers of $\beta$ in $E_{1,3}$ are the next $x$ elements of~$\cS$, and so on. 
\end{enumerate}
\end{construction}

Construction~\ref{ConsGeneralK} relies on the existence of so-called \textit{$B_{k-1}$ sequences} over $\mathbb{Z}_{q-1}$, which are sequences satisfying the properties given in the following theorem. Note that we restate the theorem in the language of this paper.
% \begin{definition} \label{def:bh}
%     A set $\cS\subseteq \{0,1,\dots,q-2\}$ is called a \emph{$B_h$ sequence mod $n$}, if all the sums $a_1+a_2+\dots+a_h$ with $a_1\leq a_2\leq\dots \leq a_h$ and $a_i\in S$ are unique mod $n$. 
%     %We can always assume that $0\in \cS$, since if the set $\{a_1,a_2,\dots,a_M\}$ with $a_1< a_2<\dots <a_M$, is a $B_h$ sequence, then  $\{0,a_2-a_1,\dots,a_M-a_1\}$ is also a $B_h$ sequence.
% \end{definition}
% Note that for a $B_h$ sequence $\cS\subseteq \{0,1,\dots,q-2\}$ w.l.o.g. we can assume that $0\in \cS$, since if the set $\{a_1,a_2,\dots,a_M\}$ with $a_1< a_2<\dots <a_M$, is a $B_h$ sequence, then  $\{0,a_2-a_1,\dots,a_M-a_1\}$ is also a $B_h$ sequence.

\begin{theorem}[\textnormal{see~\cite[Theorem 2]{BC1960}}] \label{thm:bl}
Let $q=2^{tk}$ for some $t \ge 1$. Then, there exist $2^t+1$ integers $j_0=0,j_1=1,j_2,\dots,j_{2^t+1}\le q-2$ such that the sums
\begin{align*}
    j_{i_1}+\dots +j_{i_{k-1}} \quad (0 \le i_1 \le \dots \le i_{k-1} \le 2^t)
\end{align*}
are all different mod $q-1$.
\end{theorem}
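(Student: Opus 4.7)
The plan is to invoke the classical Bose--Chowla construction from \cite{BC1960}, realizing the desired sequence as the discrete logarithms of a well-chosen affine line inside the big field $\F_q$.

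Set $p = 2^t$, so that $\F_q = \F_{p^k}$ is a degree-$k$ extension of $\F_p$, and fix a primitive element $\alpha$ of the cyclic group $\F_q^{\ast}$. Since $\alpha$ has multiplicative order $q-1 > p-1$, it does not lie in $\F_p$, and its minimal polynomial over $\F_p$ therefore has degree exactly $k$. For each $c \in \F_p$ there is a unique $e_c \in \{0,1,\dots,q-2\}$ with $\alpha^{e_c} = \alpha + c$, and I would set
$$T \;=\; \{0\} \cup \{e_c : c \in \F_p\}.$$
A short check gives $|T| = 2^t + 1$ (a collision $0 = e_c$ would force $\alpha \in \F_p$) and $\{0,1\}\subseteq T$, since $e_0 = 1$. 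The elements of $T$ can then be labelled $j_0 = 0,\; j_1 = 1,\; j_2,\dots,\; j_{2^t}$ to match the statement.

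The crux is verifying the $B_{k-1}$-property, i.e.\ that all sums of $k-1$ elements from $T$ (with repetition) are distinct modulo $q-1$. Given an equality
$$t_1 + \cdots + t_{k-1} \equiv t_1' + \cdots + t_{k-1}' \pmod{q-1}$$
with $t_i, t_i' \in T$, exponentiating $\alpha$ on both sides yields $\prod_i \alpha^{t_i} = \prod_i \alpha^{t_i'}$ inside $\F_q$. Each factor is either $1$ (when the exponent is $0$) or of the form $\alpha + c_i$ for some $c_i \in \F_p$, so both sides arise as the values at $\alpha$ of monic polynomials $P, P' \in \F_p[X]$ of degree at most $k-1$. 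Because the minimal polynomial of $\alpha$ has degree $k$, any two polynomials of degree strictly less than $k$ that agree at $\alpha$ must coincide as polynomials; this forces the multisets of $c_i$'s, and hence of $t_i$'s, to be equal.

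I do not foresee a substantial obstacle here: this is essentially the standard Bose--Chowla argument, and what makes the $B_{k-1}$-property come out automatically (rather than only a $B_k$-property) is that each factor $\alpha + c_i$ contributes degree at most one in $\alpha$, so a $(k-1)$-fold product stays strictly below the degree $k$ of the minimal polynomial. The only bookkeeping is to confirm $|T| = 2^t + 1$ and that both $0$ and $1$ sit in $T$, both of which fall out directly from the construction.
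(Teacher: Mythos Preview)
The paper does not supply its own proof of this result; it is cited directly from Bose--Chowla~\cite{BC1960} and used as a black box. Your argument reproduces the classical Bose--Chowla construction and is correct: the key point---that a $(k-1)$-fold product of factors of the form $\alpha+c_i$ (or $1$) evaluates to a polynomial in $\alpha$ of degree at most $k-1$, strictly below the degree $k$ of the minimal polynomial of $\alpha$ over $\F_{2^t}$---is exactly what forces $P=P'$ and hence equality of the multisets.
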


% We call a sequence satisfying the properties in Theorem~\ref{thm:bl} a \textit{$B_{k-1}$ sequence mod $q-1$}.

\begin{lemma}
Let $q =  2^{tk}$, where $t=\lceil\log_2(x\binom{k}{2}-1)\rceil$. Then, there exists a set $\cS$ as described in Construction~\ref{ConsGeneralK}, and the generator matrix $G_k(x)$ in~(\ref{eq:gG}) is recovery complete.
\end{lemma}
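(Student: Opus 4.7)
The plan is to apply Theorem~\ref{thm:bl} to construct the set $\cS$ and then verify the two conditions of Definition~\ref{def:rc} for $G_k(x)$. With $t = \lceil \log_2(x\binom{k}{2} - 1) \rceil$, the theorem supplies a sequence $j_0 = 0, j_1 = 1, j_2, \ldots$ in $\{0, 1, \ldots, q-2\}$ whose $(k-1)$-fold sums (with repetition) are pairwise distinct modulo $q-1$. I will take $\cS$ to be the nonzero part of this sequence; the choice of $t$ guarantees $|\cS| \geq x\binom{k}{2}$, so $\cS$ can serve as the index set in Construction~\ref{ConsGeneralK}.

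The next step is to verify property~\eqref{eq:prop} by contradiction: suppose that for some $k' \leq k$, distinct $i_1, \ldots, i_{k'} \in \cS$, and signs $\ell_s \in \{-1, +1\}$ that are not all of the same sign, one has $\sum_{s=1}^{k'} \ell_s i_s \equiv 0 \pmod{q-1}$. Splitting into positive and negative parts yields a non-trivial partition $A \sqcup B$ of $\{i_1, \ldots, i_{k'}\}$ with $\sum_{i \in A} i \equiv \sum_{i \in B} i \pmod{q-1}$ and $|A|, |B| \leq k-1$. Padding each side with copies of $j_0 = 0$ produces two $(k-1)$-fold sums from the Bose--Chowla sequence that agree modulo $q-1$, but whose underlying multisets differ (since $A \cap B = \emptyset$ and every element of $\cS$ is nonzero), contradicting Theorem~\ref{thm:bl}.

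The final step is to verify the two conditions of recovery completeness for $G_k(x)$. Condition~1 is immediate: any two distinct columns of $E_{i,j}$ restricted to rows $\{i,j\}$ form a $2 \times 2$ matrix whose determinant equals $\beta^{c'} - \beta^{c}$ with $c \neq c'$ in $\{0, \ldots, q-2\}$, hence nonzero. For Condition~2, fix a cycle $j_1 \to j_2 \to \cdots \to j_m \to j_1$ in $K_k$ with $m \leq k$ and pick one column from each edge-class, with off-leading entry $\beta^{c_s}$ for some $c_s \in \cS$; the $c_s$'s are distinct since different $E_{i,j}$'s occupy disjoint blocks of $\cS$. Restricting the $m$ selected columns to rows $\{j_1, \ldots, j_m\}$ produces an $m \times m$ matrix with exactly two nonzero entries per column, so the determinant expansion retains only the two perfect matchings on the cycle, and a direct accounting of which endpoint of each edge holds the leading~$1$ yields an expression of the form $\beta^{S_1} + \beta^{S_2}$ (the sign being irrelevant in characteristic~$2$), where $S_1 = \sum_{s:\, j_s > j_{s+1}} c_s$ and $S_2 = \sum_{s:\, j_s < j_{s+1}} c_s$. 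Since no cyclic sequence of distinct integers can satisfy $j_1 < \cdots < j_m < j_1$, both sums are nonempty; applying property~\eqref{eq:prop} with $k' = m$ to the mixed-sign combination $S_1 - S_2$ gives $S_1 \not\equiv S_2 \pmod{q-1}$, the determinant is nonzero, and the $m$ chosen columns span $\langle \bfe_{j_1}, \ldots, \bfe_{j_m} \rangle$, making every $\bfe_{j_s}$ recoverable.

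I expect the main obstacle to be the cycle-determinant calculation: correctly enumerating the two matchings, keeping track of which endpoint of each edge hosts the leading $1$ versus $\beta^{c_s}$, and recognizing the resulting endpoint split as a non-trivial mixed-sign input to property~\eqref{eq:prop}. The set-construction and padding arguments are then routine bookkeeping.
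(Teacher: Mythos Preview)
Your approach is essentially the paper's: build $\cS$ from the Bose--Chowla sequence of Theorem~\ref{thm:bl}, establish~\eqref{eq:prop} by a split-and-pad contradiction, and verify recovery completeness by writing the cycle determinant as $\beta^{S_1}+\beta^{S_2}$ for a non-trivial $\pm 1$ partition of the exponents (your two-matchings computation is a slightly cleaner variant of the paper's row-permuted canonical form in~\eqref{eq:gmat}). The only slip is the size count: by taking $\cS$ to be the \emph{nonzero} part of the sequence you get $|\cS|=2^t\ge x\binom{k}{2}-1$, which falls one short when $x\binom{k}{2}-1$ is itself a power of~$2$; the paper fixes this by keeping $j_0=0$ in $\cS$, and the padding argument still succeeds because the positive and negative index sets are disjoint subsets of the distinct $i_s$'s, so the two padded $(k-1)$-multisets still differ.
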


\begin{proof}
We begin by constructing a set $\cS$ as described in Construction~\ref{ConsGeneralK}. Let $t$ be the smallest integer such that $x\binom{k}{2}\leq 2^t+1$. From Theorem~\ref{thm:bl} there exists a $B_{k-1}$ sequence over $\mathbb{Z}_{q-1}$ of size $2^t+1$ whenever $q =2^{tk}$. We can then choose a subset $\cS$ of this sequence of size $x\binom{k}{2}$ with $0 \in \cS$. 

Assume toward contradiction, that there exist $k'$ different elements $\{i_1,\dots, i_{k'}\}\in \cS$ for which Equation~\eqref{eq:prop} holds with equality. We can then split the sum among the terms corresponding to positive $\ell_i$'s, and negative $\ell_i$'s, respectively, resulting in two different sums of (at most) $k-1$ terms which are equal mod $q-1$.  As $0\in \cS$, we can pad both sums with zeroes to reach exactly $k-1$ terms. This contradicts the assumption that~$\cS$ is a $B_{k-1}$ sequence over $\mathbb{Z}_{q-1}$. 

We now show that $G_k(x)$ is recovery complete. The fact that two distinct columns from $E_{i,j}$ are linearly independent follows from the same reasoning as in the case of $k=3$. 

To prove the second condition of recovery completeness, consider the scenario where we draw one column each from $E_{i_1,i_2}, E_{i_2, i_3}, $ $\dots, E_{i_m, i_1},$ and where $\{i_1, i_2\}, \{i_2, i_3\}$, $\ldots$, $\{i_m, i_1\}$ is a cycle of length $m \leq k$. To satisfy the second property, we have to show that the matrix made of the corresponding columns of $G_k(x)$ is full rank, i.e., has rank $m$. In order to do so, note that we can delete all 0-rows, and permute rows in a suitable way, so that we end up with a matrix of the following form:

\begin{align}\label{eq:gmat}
\begin{bmatrix}
1 & 0 & 0 & \cdots & 0 & 1 \\
\beta^{i_{1} } & a_{i_{2}} & 0 & \cdots & 0 & 0  \\
0 & b_{i_{2}} & a_{i_{3}} & \cdots & 0 & 0 \\
0 &0 & b_{i_{3}} & \cdots & 0 & 0 \\
  & & & \ddots & &\\
0 & 0 & 0 & \cdots & a_{i_{m-1}} & 0\\
0 & 0 & 0 & \cdots & b_{i_{m-1}} & \beta^{i_{m}}
\end{bmatrix},
\end{align}
where in every column $1<j<m$, $\{a_{i_{j}}, b_{i_{j}}\} = \{1,\beta^{i_{j}}\}$, i.e., one of the non-zero entries is $1$ and the other is a power of $\beta$ of the form $\beta^{i_j}$, where $i_j \in \cS$. Since our underlying field has characteristic $2$, the determinant of this matrix is given by:
\begin{align*}
\beta^{i_m}\prod_{j=2}^{m-1}a_{i_j}+\beta^{i_1}\prod_{j=2}^{m-1}b_{i_j}.
\end{align*}
Let $T\subseteq \{2,3,\dots,m-1\}$ denote the subset such that $a_{i_j}=1$ for every $j\in T$, and $a_{i_j}=\beta^{i_j}$ for  every $j\notin T$. Then the determinant becomes:
\begin{align*}
\beta^{i_m+\sum_{j\notin T}i_j}+\beta^{i_1+\sum_{j\in T}i_j}.
\end{align*}
To ensure that this expression is nonzero, it suffices to require:
\begin{align*}
i_m+\sum_{j\notin T}i_j\not\equiv  i_1+\sum_{j\in T}i_j \bmod (q-1),
\end{align*}
where all $i_j$ are distinct.
This condition follows directly from equation (\ref{eq:prop}), which is satisfied by the set $S$. Therefore, the matrix in (\ref{eq:gmat}) has full rank, completing the proof that the generator matrix $G_k(x)$ from (\ref{eq:gG}) is recovery complete.
\end{proof}

\begin{comment}
\textcolor{red}{I'm not convinced that you can always write the determinant as you have it here below. When $m = k$ I agree with what you have but when $m$ is less i think what you're going to get is two terms (reflecting the two diagonals when you expand the determinant) and if you sum the number of $i$'s contained in both terms you'll get a total of $m$ $i$'s so the condition remains the same. As a concrete example if $k=5$ and $m=4$ you can get a cycle where $(j_1=1, j_2=3, j_3=5, j_4=4, j_5=1)$ and the matrix in this case would have the format:
\begin{align*}
\begin{bmatrix}
1 & 0 & 0 & 1\\
\beta^{i_1} & 1 & 0 & 0 \\
0 & \beta^{i_2} & \beta^{i_3} & 0 \\
0 & 0 & 1 & \beta^{i_4}. 
\end{bmatrix} 
\end{align*}
In this case one $\beta$ term would have $i_3$ and $i_4$ and the other would have the other terms.}
\end{comment}

In the construction described thus far, each information strand (i.e., column of weight 1) appears exactly once in $G_k(x)$. However, by repeating these columns multiple times, we can reduce the random access expectation. Importantly, such repetition does not affect the recovery completeness of the matrix.
Henceforth, let $G_k(x,y)$ denote the matrix obtained from $G_k(x)$ by appending $y-1$ copies of the identity matrix $I_k$ to $G_k(x)$, resulting in a matrix with $y$ copies of each column of weight 1.

In the next section, we will evaluate the asymptotic random access expectation of $G_k(x,y)$ as $x,y\to {}\infty$, which will enable us to derive an upper bound on $T(k)$.

\subsection{The Random Access Expectation of the Construction}\label{Sec: AsymptoticExpectationCon}
Before analyzing the random access expectation of $G_k(x,y)$, \begin{comment}
we take a closer look at the structure of the recovery sets for recovery complete matrices. Similar to the discussion in Definition~\ref{def:rc},\end{comment} 
we will find it useful to interpret the recovery process as one that is performed over the complete graph $K_k$ (see the beginning of Section~\ref{sec:generalConstruction}) rather than over the columns of the matrix $G_k(x,y)$. More precisely, for each collected column $\bf g$ from $G_k(x,y)$, one of the following corresponding events occurs in $K_k$:
\begin{enumerate}
\item If the column $\bf g$ has weight $1$ with support $j$, then we have collected the vertex labeled $V_j$ in the graph $K_k$.
\item If the column $\bf g$ has weight $2$ with support $\{j_1,j_2\}$, then we have collected the edge $E_{j_1,j_2}$ in $K_k$.
\end{enumerate}
For shorthand, we denote by $\mathscr{T}$ the multiset of collected vertices and edges from $K_k$.

\begin{example}
For the case of $k=3$, the top-left graph in Figure~\ref{fig:k3} illustrates the scenario where a column with a single 1 in the first coordinate is collected, corresponding to $\mathscr{T} = \{ V_1 \}$. The bottom-left graph represents the event where three columns are collected: one with support $\{1,2\}$, another with support $\{2,3\}$, and the third with support $\{1,3\}$, resulting in $\mathscr{T} = \{ E_{1,2}, E_{2,3}, E_{1,3} \}$.  
\end{example}

\begin{figure}[h]
\centering
    \includegraphics[scale=.4]{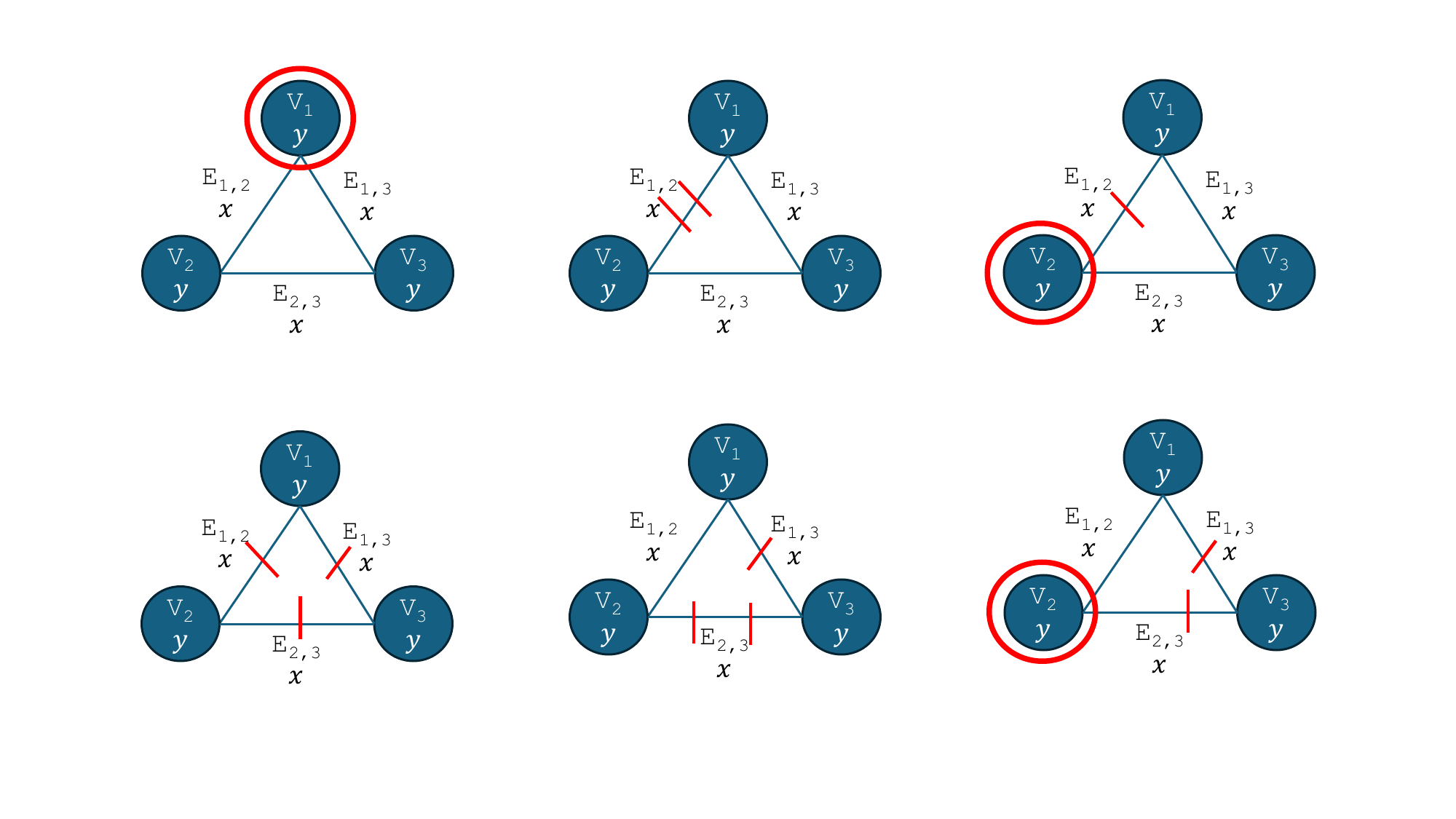}
    \caption{Recovery sets over $K_3$ for information strand $1$.}\label{fig:k3}
    \end{figure}
    
 For convenience, we will refer to the scenario where a vertex is collected (i.e., where $V_j \in \mathscr{T}$) as a cycle of length $1$, and when an edge is collected twice (when the two columns representing this edge are not collinear) as a cycle of length $2$. 
Using this interpretation, we can now state precise conditions under which any information strand in $G_k(x,y)$, or equivalently, any vertex of $K_k$, can be recovered. 
\begin{claim}\label{Claim:EqToGraphs} The information strand corresponding to the vertex $V_i \in K_k$ can be recovered if and only if $V_i$ belongs to a connected component in $\mathscr{T}$ that contains a cycle. 
\end{claim}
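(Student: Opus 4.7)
The plan is to prove each direction separately: $(\Leftarrow)$ follows by directly invoking the two properties in Definition~\ref{def:rc} and then propagating along a path; $(\Rightarrow)$ requires constructing a nonzero linear functional that certifies $\bfe_i \notin \mathrm{span}(\mathscr{T})$.

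For $(\Leftarrow)$, suppose $V_i$ lies in a connected component $C$ of $\mathscr{T}$ containing a cycle of some length $m$. First I would show that some $\bfe_j$ with $V_j \in C$ lies in $\mathrm{span}(\mathscr{T})$: if $m=1$ then $V_j$ itself was collected; if $m=2$ then Property~1 of recovery completeness, applied to the two distinct columns in the repeated edge, gives both of its endpoints; and if $3 \leq m \leq k$ then Property~2 gives every basis vector on the cycle. Second, since $C$ is connected, there is a path of collected edges from $V_j$ to $V_i$, and each such column has the form $\bfe_a + c\,\bfe_b$ with $c \in \F_q \setminus \{0\}$, so knowing $\bfe_a$ immediately yields $\bfe_b$ by subtraction. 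Iterating this recovery step along the path produces $\bfe_i$.

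For $(\Rightarrow)$, suppose the component $C$ of $V_i$ in $\mathscr{T}$ is acyclic; then $C$ contains no collected vertex (no $1$-cycle), no repeated edge (no $2$-cycle) and no graph cycle of length $\geq 3$, so $C$ is an ordinary tree on some vertex set $\{V_{j_1}, \ldots, V_{j_t}\}$. I would exhibit a linear functional $f(\bfv) = \sum_{j=1}^{k} \lambda_j v_j$ that vanishes on every collected column but satisfies $f(\bfe_i) \neq 0$. Set $\lambda_j = 0$ for $j \notin \{j_1, \ldots, j_t\}$, fix any nonzero $\lambda_{j_1} \in \F_q$, and define the remaining $\lambda_{j_\ell}$ by rooting the tree at $V_{j_1}$ and propagating outward: for each tree edge with collected column $\bfe_a + c_{ab}\bfe_b$, set $\lambda_a = -c_{ab}\lambda_b$. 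Because $C$ is a tree, this assignment is consistent, and since every $c_{ab}$ is a nonzero power of $\beta$, every $\lambda_{j_\ell}$ is nonzero. In particular $f(\bfe_i) \neq 0$, and by construction $f$ annihilates every column supported in $C$.

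The step I expect to be most delicate is checking that $f$ also vanishes on collected columns not supported in $C$. The key observation is that a collected weight-$2$ column corresponds to an edge of $\mathscr{T}$, and an edge cannot straddle two different connected components; hence any collected column is either entirely inside $C$ (handled by the propagation equations) or entirely outside $C$ (killed because $\lambda_j = 0$ off $C$). Moreover, no collected weight-$1$ column can have its support in $C$, by the no-$1$-cycle assumption. Once this is verified, $f$ witnesses $\bfe_i \notin \mathrm{span}(\mathscr{T})$ and the claim follows.
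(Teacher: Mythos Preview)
Your proof is correct. The $(\Leftarrow)$ direction follows the same path-propagation argument as the paper: recover all vertices on the cycle via recovery completeness (handling the three cycle lengths $m=1$, $m=2$, and $3\le m\le k$ exactly as you do), then walk back to $V_i$ along edges in $\mathscr{T}$.

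For the $(\Rightarrow)$ direction, the paper simply asserts the conclusion without argument, whereas you supply a genuine proof by constructing a separating linear functional. Your approach exploits the tree structure of $C$ in an essential way: rooting the tree and propagating the constraint $\lambda_a + c_{ab}\lambda_b = 0$ along each edge determines all $\lambda_{j_\ell}$ consistently (the absence of cycles means no conflicting constraints), and the nonvanishing of the $c_{ab}$'s ensures every $\lambda_{j_\ell}$ stays nonzero. Combined with the observation that any collected column is supported entirely inside a single connected component---so columns outside $C$ are killed by $\lambda_j=0$ and no weight-$1$ column lies in $C$---this cleanly certifies that $\bfe_i$ is not in the span of the collected columns. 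Your treatment of this direction is therefore strictly more complete than the paper's.
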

\begin{proof}
Assume we want to recover the vertex $V_i$. By the recovery-completeness property, every vertex on the cycle is recoverable. Let $P$ be a path in $\mathscr{T}$ from $V_i$ to the cycle, and let $u$ be the last vertex of 
$P$ that lies on the cycle. Using $u$ and the edge of $P$ incident to $u$, we can recover the preceding vertex on $P$. Repeating this backtracking step recovers each earlier vertex along $P$ until we recover $V_i$.

Conversely, if $V_i$ does not belong to a connected component of $\mathscr{T}$ that contains a cycle, then it cannot be recovered.
\end{proof}
Note that in the previous claim, we say that $V_i$ and $V_j$ belong to the same connected component if there exists a path from $V_i$ to $V_j$ that traverses only edges contained in the multiset $\mathscr{T}$.
It is straightforward to verify that for each of the $6$ scenarios depicted in Figure~\ref{fig:k3}, $V_1$ belongs to a connected component with at least one cycle and similarly for Figure~\ref{fig:k4}. Note that the scenarios in Figure~\ref{fig:k4} are not all of the possibilities for recovering $V_1$ in $K_4$.

\begin{figure}[h]
\centering
    \includegraphics[scale=.4]{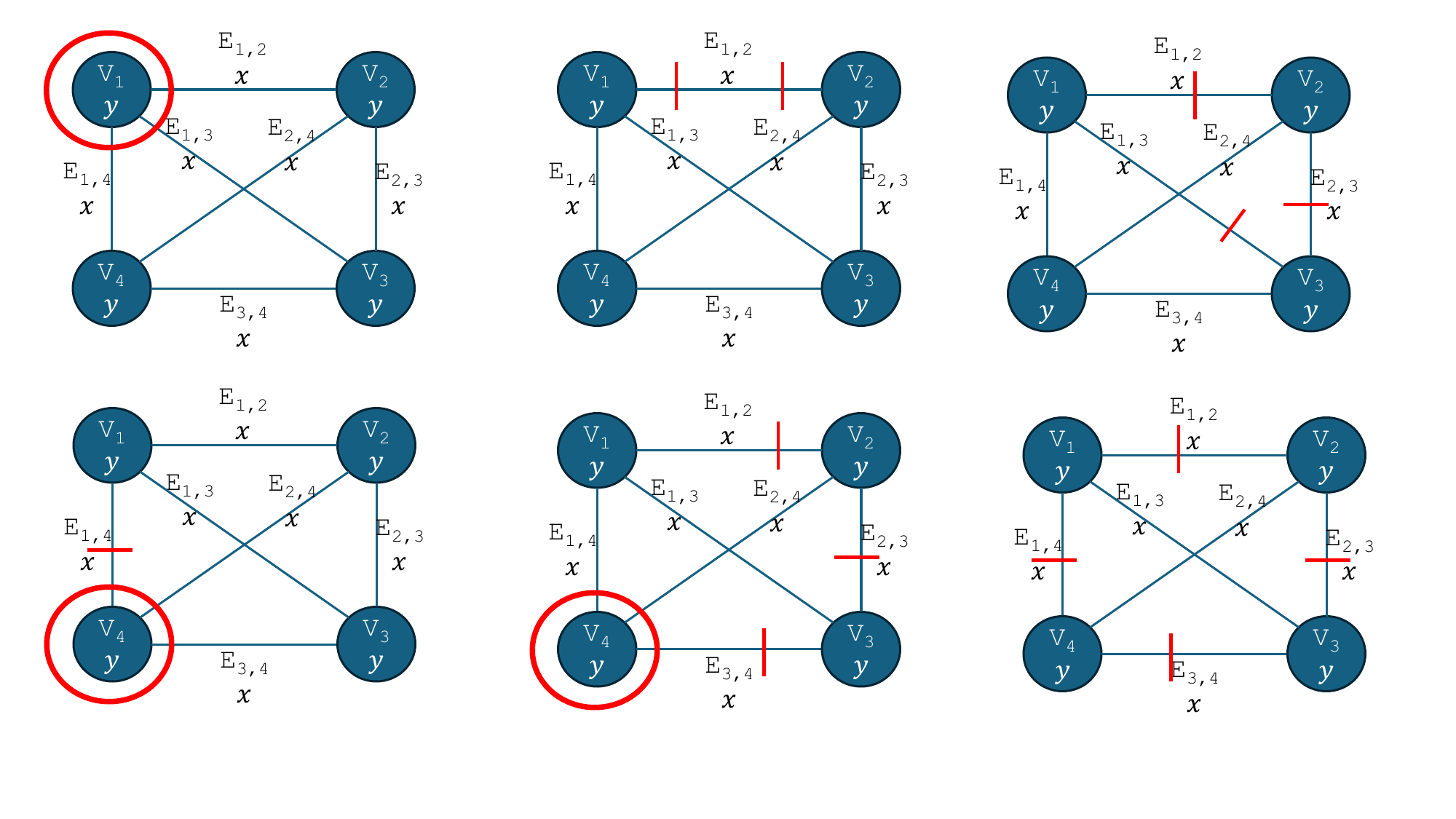}
    \caption{Recovery sets over $K_4$ for information strand $1$.}\label{fig:k4}
    \end{figure}

For general values of $k$, $x$, and $y$, explicitly computing  $T_{\max}(G_k({x,y}))$ is quite challenging. However, in~\cite{GMZ24}, the authors managed to derive an upper bound on $\lim_{x \to \infty}T_{\max}(G_3({x,x}))$, which at the time gave the best upper bound on $T(3)$. Subsequently, \cite{BLLLRS25} determined the exact limit of $\lim_{x \to \infty}T_{\max}(G_3({x,\alpha x}))$, therefore improving the result of~\cite{GMZ24}. As \cite{BLLLRS25} was published after the first draft of this paper appeared online, we include in the next subsection our own improved upper bound on $\lim_{x \to \infty}T_{\max}(G_3({x,\alpha x}))$ (which sharpened~\cite{GMZ24}) and, for completeness, we also summarize the results of~\cite{BLLLRS25}.

\subsubsection{Three Information Strands}
In~\cite{GMZ24}, the authors presented a construction of $G_3(x,y)$ based on a geometric approach. They analyzed the asymptotic behavior of the random access expectation for the case in which the multiplicity of the information strands equals that of the strands per edge, (i.e., $y=x$). In this setting, they established an upper bound of $0.8822\cdot 3$ on the random access expectation. However, by allowing the multiplicity to scale proportionally with $x$, i.e., $y=\alpha x$, a strictly smaller expectation can be achieved. Our analysis, which  follow the same steps as in~\cite{GMZ24}, shows this improvement and is proven in Appendix~\ref{firstApen}.

\begin{restatable}{lemma}{betterAnalKEqThree}\label{theor_k=3}
It holds that $\lim_{x\to {}\infty} T_{\max}(G_3({x,\alpha x}))\leq 3-\frac{\alpha}{3+3\alpha}-\frac{2+10\alpha+5\alpha^2}{9(1+\alpha)^2}+\frac{(1+2\alpha)^2(1+2\alpha)}{9(1+\alpha)^2(2+\alpha)}+\frac{2\alpha^2(9+7\alpha)}{9(1+\alpha)^2(3+2\alpha)^2}.$ 
\end{restatable}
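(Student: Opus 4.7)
The plan is to pass to a continuous-time Poisson embedding. Each column of $G_3(x,\alpha x)$ is drawn with probability $1/n = 1/(3x(1+\alpha))$, and since the draw probabilities sum to $1$, one may couple the discrete sampling to independent Poisson processes: one per vertex $V_i$ (aggregating its $\alpha x$ identical copies) with rate $p_v = \alpha/(3(1+\alpha))$, and one per edge $E_{ij}$ (aggregating its $x$ columns) with rate $p_e = 1/(3(1+\alpha))$. Under this coupling, the expected number of discrete draws equals the expected continuous-time stopping time since the total rate equals $1$. Moreover, as $x\to\infty$, the probability that two successive draws from the same edge land on the same column tends to $0$, so any two draws from $E_{ij}$ yield non-collinear columns and hence form a length-$2$ cycle.

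By the $S_3$-symmetry of $G_3(x,\alpha x)$ under coordinate permutations, $\E[\tau_1] = \E[\tau_2] = \E[\tau_3]$, so it suffices to compute $\lim_{x\to\infty}\E[\tau_1]$. By Claim~\ref{Claim:EqToGraphs}, $V_1$ is recovered by time $t$ iff its connected component in the accumulated multigraph $\mathscr{T}(t)$ contains a cycle. I will enumerate the complementary ``not-yet-recovered'' events, partitioned by the pair $(N_{E_{12}}(t), N_{E_{13}}(t))$, which must lie in $\{0,1\}^2$, otherwise a length-$2$ cycle through $V_1$ has already formed. This splits into four cases: (A) $N_{E_{12}}=N_{E_{13}}=0$, which requires only $N_{V_1}=0$ (since $V_1$ is then isolated); (B) $N_{E_{12}}=1,\,N_{E_{13}}=0$, which requires $N_{V_1}=N_{V_2}=0$ together with either $N_{E_{23}}=0$ or $(N_{E_{23}}=1$ and $N_{V_3}=0)$; (C) the symmetric counterpart of (B); and (D) $N_{E_{12}}=N_{E_{13}}=1$, which requires $N_{V_1}=N_{V_2}=N_{V_3}=0$ and $N_{E_{23}}=0$, since any $E_{23}$ draw would close a triangle through $V_1$.

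By the independence of the Poisson processes, each case probability factors into a product of Poisson terms $(\lambda t)^k e^{-\lambda t}/k!$. Summing these and applying $\int_0^\infty t^n e^{-ct}\,dt = n!/c^{n+1}$ gives the closed form
\begin{equation*}
\lim_{x\to\infty}\E[\tau_1(G_3(x,\alpha x))] = \frac{1}{p_v+2p_e} + \frac{2p_e}{(2p_v+3p_e)^2} + \frac{6 p_e^2}{(3p_v+3p_e)^3}.
\end{equation*}
Substituting $p_v$ and $p_e$ (and noting $p_v+p_e=1/3$), and then rewriting as $3-(\text{savings})$, should reproduce the stated expression. The main obstacle I anticipate is bookkeeping: verifying that cases (A)--(D) are exhaustive and pairwise disjoint, carefully controlling the $O(1/x)$ error from collinear repeat draws in passing to the continuous limit, and finding the algebraic regrouping that recasts my three-term closed form into the five-term form appearing in the lemma.
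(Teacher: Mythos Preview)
Your approach is correct and genuinely different from the paper's. The paper starts from the explicit finite-$x$ formula of~\cite[Cor.~4.8]{GMZ24} and takes term-by-term limits; at one point it bounds $\prod_{i=2}^{s-1}\frac{x+2\alpha x-i}{3x(1+\alpha)-i-1}\le\bigl(\frac{1+2\alpha}{3+3\alpha}\bigr)^{s-2}$ to collapse a sum into a geometric series, which is why the statement is only an inequality. Your Poissonization plus graph-case enumeration sidesteps the combinatorial formula entirely and in fact yields the \emph{exact} limit: your three-term expression
\[
\frac{3(1+\alpha)}{2+\alpha}+\frac{6(1+\alpha)}{(3+2\alpha)^2}+\frac{2}{3(1+\alpha)^2}
\]
is algebraically equal to the closed form later derived in~\cite{BLLLRS25}, and is cleaner than the five-term upper bound in the lemma. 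So your method proves more than required. The trade-off is that the paper's argument is purely elementary limit-taking (given the external formula), whereas yours needs two analytic ingredients you correctly flag: the Wald-type identity that the expected continuous-time stopping time equals the expected discrete draw count when the total rate is $1$, and a dominated-convergence or coupling argument to show the $O(1/x)$ collinear-repeat error in $\E[\tau_1]$ vanishes (note $\tau_1^{(x)}\ge\tau_1^{(\infty)}$ pointwise under the natural coupling, so you only need the reverse direction). Your case split (A)--(D) is exhaustive and disjoint as stated; the only bookkeeping you still owe is the algebraic verification that your closed form is $\le$ the lemma's five-term expression (it is in fact equal to it, since the paper's geometric bound happens to be tight in the limit).
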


A numerical optimization shows that the minimum of this expression is attained at approximately $\alpha=0.834$. Substituting this value into  Theorem~\ref{theor_k=3} yields the following corollary, which slightly improves the result from~\cite{GMZ24}.

\begin{corollary}
    $T_{\max}(G_3({x,0.834x}))\leq 2.645=0.881\overline{66}\cdot 3.$
\end{corollary}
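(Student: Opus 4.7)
The plan is to derive the Corollary as a direct numerical consequence of Lemma~\ref{theor_k=3}. Let $f(\alpha)$ denote the upper bound given there, i.e.,
\begin{align*}
f(\alpha) := 3-\frac{\alpha}{3+3\alpha}-\frac{2+10\alpha+5\alpha^2}{9(1+\alpha)^2}+\frac{(1+2\alpha)^3}{9(1+\alpha)^2(2+\alpha)}+\frac{2\alpha^2(9+7\alpha)}{9(1+\alpha)^2(3+2\alpha)^2}.
\end{align*}
Lemma~\ref{theor_k=3} gives $\lim_{x\to\infty}T_{\max}(G_3(x,\alpha x))\le f(\alpha)$ for every $\alpha>0$, so it suffices to exhibit a single $\alpha$ at which $f(\alpha)\le 2.645$. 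My plan is to take $\alpha=0.834$ and verify the inequality by direct numerical substitution.

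First, I would justify the choice of $\alpha=0.834$. Since $f$ is a rational function of $\alpha$ that is smooth on $(0,\infty)$, its minimum on this interval occurs either at a critical point or at a boundary. Checking the limits $\alpha\to 0^{+}$ and $\alpha\to \infty$ shows $f$ tends to values strictly larger than $2.645$ (at $\alpha=0$ one loses the weight-$1$ columns and recovers the $k=3$ MDS bound $3$, while for $\alpha\to\infty$ one recovers the pure-identity expectation $3$ as the weight-$2$ columns are sampled with vanishing probability). Thus the minimum lies in the interior, and I would compute $f'(\alpha)$ symbolically and solve $f'(\alpha)=0$ numerically, or alternatively evaluate $f$ on a fine grid, to locate the minimizer at $\alpha^*\approx 0.834$. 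Only a coarse localization is needed, since the final claim is an upper bound rather than the exact minimum.

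Next, I would evaluate $f(0.834)$ term by term, keeping four significant digits throughout. For $\alpha=0.834$ one has $1+\alpha=1.834$, $(1+\alpha)^2\approx 3.3635$, $2+\alpha=2.834$, $1+2\alpha=2.668$, $3+2\alpha=4.668$. This gives
\begin{align*}
\tfrac{\alpha}{3+3\alpha}\approx 0.1516,\quad \tfrac{2+10\alpha+5\alpha^2}{9(1+\alpha)^2}\approx 0.4565,\quad \tfrac{(1+2\alpha)^3}{9(1+\alpha)^2(2+\alpha)}\approx 0.2214,\quad \tfrac{2\alpha^2(9+7\alpha)}{9(1+\alpha)^2(3+2\alpha)^2}\approx 0.0313.
\end{align*}
Summing yields $f(0.834)\approx 3-0.1516-0.4565+0.2214+0.0313 = 2.6446\le 2.645$, which is the claimed bound. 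I would also round upward at each step to make the inequality rigorous (rather than merely approximate).

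The main obstacle is simply bookkeeping: the five terms involve cancellations of comparable size, so one must carry enough digits (say six) to avoid an error that pushes the sum above $2.645$. There is no structural difficulty beyond Lemma~\ref{theor_k=3} itself; the Corollary is a numerical specialization, and the only care needed is in the rounding direction and in verifying that the chosen $\alpha=0.834$ is close enough to the true minimizer to yield the stated bound.
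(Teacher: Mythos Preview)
Your proposal is correct and takes essentially the same approach as the paper: the paper simply states that numerical optimization locates the minimum near $\alpha=0.834$ and then substitutes this value into Lemma~\ref{theor_k=3} to obtain the bound. Your explicit term-by-term evaluation $f(0.834)\approx 2.6446$ is accurate; the only inessential slip is in the side remark about boundary behavior (at $\alpha\to 0$ the construction is not MDS and $f(0)=17/6\neq 3$), but this does not affect the argument since all that is needed is the single evaluation at $\alpha=0.834$.
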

    
{In a recent paper~\cite{BLLLRS25},  the authors managed to evaluate the exact expression of the expectation and proved that \begin{align*}
\lim_{x\to {}\infty} T_{\max}(G_3({x,\alpha x}))=\frac{153+543\alpha+805\alpha^2+611\alpha^3+234\alpha^4+36\alpha^5}{3(1+\alpha)^2(2+\alpha)(3+2\alpha)^2}.   
\end{align*}}

Their numerical optimization showed that the optimal value of  $\alpha$ is $\alpha^*\approx 0.833968$ and for that value $\lim_{x\to {}\infty} T_{\max}(G_3({x,\alpha^* x}))\approx 0.881542\cdot 3$.

\subsubsection{Four Information Strands}
In this subsection, we obtain a closed form expression to the value of $T_{\max}(G_4({x,y}))$ using Lemma~\ref{Lem:CalExpeUsingAlphas}. To apply Lemma~\ref{Lem:CalExpeUsingAlphas}, we first compute the values of $\alpha_i^s(G_4({x,y}))$ for all $s \in [6x+4y-1]$ and $i \in [k]$, as established in the next lemma.

\begin{lemma}\label{lem:calAlphasK=4}
    For all $i \in [k]$ we have 
    \begin{itemize}
        \item[(i)] $\alpha_i^1(G_4({x,y})) = y$;
        \item[(ii)] $\alpha_i^2(G_4({x,y}))=3({x\choose 2}+x\cdot y)+{y\choose 2}+y(6x+3y)$;
        \item[(iii)] $\alpha_i^3(G_4({x,y}))=3\left({3x+2y\choose 3}-{x+2y\choose 3}-2x\binom{y}{2}\right)-3({x\choose 3}+{x\choose 2}y+x{y\choose 2})+3x(xy+\binom{x}{2})+y{6x+3y \choose 2}+{y \choose 2}(6x+3y)+{y\choose 3};$ 
        \item[(iv)] $\alpha_i^s(G_4({x,y}))={6x+4y \choose s}-{3x+3y \choose s}-3x{x+2y\choose s-1}-3x^2\binom{y}{s-2}{-6x^2\binom{y}{s-2}}$ \quad 
        for $4 \le s \le 3x+3y$;
        \item[(v)] $\alpha_i^s(G_4({x,y})) = {6x+4y \choose s}$ \quad for $3x+3y+1 \le s \le 6x+4y-1$.
    \end{itemize}
    
\end{lemma}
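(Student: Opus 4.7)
The plan is to use Claim~\ref{Claim:EqToGraphs} throughout, which reduces the recovery of $\bfe_i$ to the graph-theoretic condition that $V_i$ lies in a connected component of $\mathscr{T}(S)$ containing a cycle, where each chosen $V_j$ vertex column counts as a $1$-cycle at $V_j$ and two distinct columns from a common $E_{j_1,j_2}$ count as a $2$-cycle. By the symmetry of $G_4(x,y)$ across the four information strands, $\alpha_i^s$ is independent of $i$, so I fix $i = 1$.

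For (i) and (ii) I would enumerate directly. For $s = 1$, only a $V_1$ vertex column works, giving $y$. For $s = 2$, I split by how many $V_1$ vertex columns lie in $S$: two $V_1$ columns contribute $\binom{y}{2}$; one $V_1$ column paired with any of the other $6x+3y$ columns contributes $y(6x+3y)$; and with zero $V_1$ columns, $S$ recovers $\bfe_1$ exactly when it consists of two distinct columns from a common $E_{1,j}$ (a $2$-cycle, $3\binom{x}{2}$ subsets) or of one $E_{1,j}$ column together with a $V_j$ vertex column (edge plus $1$-cycle at the adjacent neighbor, $3xy$ subsets).

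For (iv) and (v) I would switch to the complement and count \emph{bad} subsets, i.e., those for which $V_1$'s component is a tree. A case-analysis on the component size $m$ gives: $m=1$ forces all $s$ columns into the $3x+3y$ columns not touching $V_1$, giving $\binom{3x+3y}{s}$; $m=2$ (component $\{V_1, V_j\}$) requires one column from $E_{1,j}$ with the remaining $s-1$ columns drawn from the $x+2y$ columns not touching $V_1$ or $V_j$, giving $3x\binom{x+2y}{s-1}$; $m=3$ splits into $3$ stars centered at $V_1$ and $6$ paths with $V_1$ as a leaf, producing $9x^2$ choices for the two tree edges and forcing the remaining $s-2$ columns to be vertex copies of the fourth vertex, giving $9x^2\binom{y}{s-2}$. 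For $s \ge 4$ the case $m=4$ is impossible since a spanning tree of $K_4$ already uses three columns and any further column would create a cycle, which yields (iv); since the largest bad subset has size $3x+3y$ (attained at $m=1$), part (v) follows immediately.

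The main obstacle is (iii), where $s = 3$ permits $m = 4$ and several tree structures interact, so I instead enumerate good subsets directly in three disjoint classes. First, subsets containing at least one $V_1$ vertex column, split by the number of $V_1$ columns, contribute $y\binom{6x+3y}{2} + \binom{y}{2}(6x+3y) + \binom{y}{3}$ and match the last three terms. Second, for each of the three triangles $T_{j,k}$ through $V_1$, with column set of size $3x+2y$ equal to $E_{1,j} \cup E_{1,k} \cup E_{j,k} \cup V_j \cup V_k$, the good $3$-subsets supported in $T_{j,k}$ and containing no $V_1$ column number $\binom{3x+2y}{3} - \binom{x+2y}{3} - 2x\binom{y}{2}$; here one subtracts the $3$-subsets avoiding both $V_1$-incident edges (leaving $V_1$ isolated) and the subsets consisting of one $E_{1,j}$ column together with two vertex copies of the \emph{other} triangle neighbor $V_k$ (or symmetrically), which leave $V_1$'s component as a single-edge tree. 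Inclusion-exclusion over the three triangles produces $3\bigl(\binom{3x+2y}{3}-\binom{x+2y}{3}-2x\binom{y}{2}\bigr) - 3\bigl(\binom{x+y}{3}-\binom{y}{3}\bigr)$, where each pair of triangles shares the $x+y$ columns $E_{1,j}\cup V_j$ on the common neighbor, the triple intersection is empty, and $\binom{x+y}{3}-\binom{y}{3} = \binom{x}{3}+\binom{x}{2}y+x\binom{y}{2}$. Third, I identify the good $3$-subsets \emph{not} contained in any triangle through $V_1$: a direct case-check shows these have one of exactly two shapes, namely two columns from $E_{1,j}$ plus one column from the opposite edge $E_{k,l}$ ($3x\binom{x}{2}$ subsets) or one column from $E_{1,j}$, one $V_j$ vertex column, and one column from $E_{k,l}$ ($3x^2 y$ subsets), summing to $3x(xy+\binom{x}{2})$. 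Combining the three classes yields the stated formula, and the delicate step is verifying that this partition is both exhaustive and mutually disjoint.
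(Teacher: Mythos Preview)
Your proposal is correct and follows essentially the same approach as the paper's proof. Both arguments count (i)--(ii) directly, handle (iv)--(v) by complement-counting bad subsets (the paper's cases (a), (b), (c) are exactly your $m=2$ component and the star/path split of your $m=3$ component, giving $3x^2\binom{y}{s-2}+6x^2\binom{y}{s-2}=9x^2\binom{y}{s-2}$), and treat (iii) by separating subsets containing $\bfe_1$, those supported in a triangle through $V_1$, and a residual class; your explicit inclusion--exclusion over the three triangles and the component-size bookkeeping are slightly more systematic than the paper's phrasing, but the underlying casework is identical.
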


\begin{proof} Suppose we want to recover ${\bf e}_1$. Recall that $E_{i,j}$ denotes all the columns that lie in $\langle {\bf e}_i,{\bf e}_{j}\rangle\setminus\{{\bf e}_i, {\bf e}_{j}\}$.

(i) is easy to see.

In order to prove (ii), we count the number of $2$-sets that contain ${\bf e}_1$ in their span. If neither of the two columns is ${\bf e}_1$, then in order to recover ${\bf e}_1$, the $2$-set must span an edge of the form $E_{1,j}$, and there are $3({x\choose 2}+x\cdot y)$ such $2$-sets. Moreover, there are a total of ${y\choose 2}+y(6x+3y)$ $2$-sets that contain ${\bf e}_1$.

For (iii), we start by counting the $3$-sets that do not contain ${\bf e}_1$. Consider a $3$-set which lies in the span of $\langle{\bf e}_1,{\bf e}_j,{\bf e}_{j'}\rangle$ for some $j,j' \in \{2,3,4\}$. The only such 3-sets that do not recover ${\bf e}_1$ are: a) those where all columns lie in $E_{j,j'}$, there are ${x+2y\choose 3}$ such 3-sets, and b) those where one column lies in $E_{1,j}$ and there are two copies of ${\bf e}_{j'}$, there are $2x\binom{y}{2}$ such $3$-sets. Note that in this way, every such $3$-set which lies in $\langle {\bf e}_1,{\bf e}_j\rangle$ was counted twice. For example if $j=2$, then the 3-sets contained in $E_{1,2}\cup\{e_2\}$ were counted as part of the 3-sets in $\langle{\bf e}_1,{\bf e}_2,{\bf e}_3\rangle$ but also as part of the 3-sets in $\langle{\bf e}_1,{\bf e}_2,{\bf e}_4\rangle$. There are ${x\choose 3}+{x\choose 2}y+x{y\choose 2}$ such sets. \begin{comment}
(where we counted them depending on how many ${\bf e}_2$ they contain, which is 1, 2, or 3).\end{comment}

The remaining $3$-sets that recover ${\bf e}_1$ but do not contain ${\bf e}_1$ are the sets where one column lies in $E_{i,i'}$ for $i,i'\neq 1$, and the other two columns lie in $\langle {\bf e}_1,{\bf e}_j\rangle \setminus \{{\bf e}_1\}$, for $j\neq i,i'$, there are $3x(xy+\binom{x}{2})$ such 3-sets. Finally, we add the number of $3$-sets which contain ${\bf e}_1$, which is $y{6x+3y \choose 2}+{y \choose 2}(6x+3y)+{y\choose 3}$.

(iv) We count the number of $s$-sets that do not recover ${\bf e}_1$. If the first coordinate in all the columns of the $s$-set is $0$, then it does not recover ${\bf e}_1$. There are ${3x+3y \choose s}$ such sets. The other cases are: a) when there is one column from $E_{1,i}$ for some $i \in \{2,3,4\}$, and the other $s-1$ columns must be contained in ${\bf e}_j\cup {\bf e}_{j'}\cup E_{j,j'}$ where $j,j'\notin\{1,i\}$, there are $3x{x+2y\choose s-1}$ such sets, b) when there is one column from $E_{1,i}$, another column from $E_{1,i'}$ ($i\neq i'$) and $s-2$ copies of $e_j$ ($j\neq i,i'$), there are $3x^2\binom{y}{s-2}$ such $s$-sets, and the last one c) then there is one column from $E_{1,i}$, another column from $E{i,i'}$ and $s-2$ copies of $e_{i''}$, there are $6x^2\binom{y}{s-2}$ such $s$-sets. 

From the analysis of (iv), it is easy to verify that any $s$-set for $s \ge 3x+3y+1$ will recover ${\bf e}_1$.

All of the previous computations did not depend on ${\bf e}_1$, and thus we conclude that they are the same for all ${\bf e}_i$, $i \in [k]$.
\end{proof}

Although we have closed-form expressions for $\alpha_i^s(G_4({x,y}))$ for all $s \in [6x+4y-1]$ and $i \in [k]$, and therefore an explicit expression for $T_{\max}(G_4({x,y}))$ (see Lemma~\ref{Lem:CalExpeUsingAlphas}), computing its exact value remains challenging. Nevertheless, using the computer algebra system \texttt{magma}, 
we obtain the plot shown in Figure~\ref{fig:exp}, along with the result presented in Proposition~\ref{prop:t4}.

\begin{figure}[ht!]
\centering
\begin{tikzpicture}[scale=1]
\begin{axis}[legend style={at={(1,1)}, legend style={cells={align=left}}, anchor = north east, /tikz/column 2/.style={
                column sep=5pt}},
		legend cell align={left},
		width=15cm,height=9cm,
    xlabel={Ratio $\alpha=y/x$},
    xmin=0, xmax=2,
    ymin=0.86, ymax=0.93,
    xtick={0,0.2,0.4,0.6,0.8,1,1.2,1.4,1.6,1.8,2},
    ytick={0.86, 0.88,0.9,0.92},
    ymajorgrids=true,
    grid style=dashed,
    every axis plot/.append style={},  yticklabel style={/pgf/number format/fixed}
]
\addplot+[color=orange,mark=o,mark size=1pt,smooth]
coordinates {
(0.200000000000000000000000000000,0.921360002005163295485876131037)
(0.400000000000000000000000000000,0.898456120565654034213871942066)
(0.600000000000000000000000000000,0.885761652356542414101733328202)
(0.800000000000000000000000000000,0.878982174390625147409337038259)
(1.00000000000000000000000000000,0.875769193467499905364511870152)
(1.20000000000000000000000000000,0.874767522627664406610813162891)
(1.40000000000000000000000000000,0.875161106032401919177682114436)
(1.60000000000000000000000000000,0.876441524882104504595686930056)
(1.80000000000000000000000000000,0.878283308898363977923341448433)
(2.00000000000000000000000000000,0.880473598323692115801433086660)
};
\addplot+[color=red,mark=o,mark size=1pt,smooth]
coordinates {
(0.100000000000000000000000000000,0.918725575085422038481421938665)
(0.200000000000000000000000000000,0.904396704901844299197157546918)
(0.300000000000000000000000000000,0.893829585416945973767069828494)
(0.400000000000000000000000000000,0.886048082670368274850708646012)
(0.500000000000000000000000000000,0.880358405182580134748080365019)
(0.600000000000000000000000000000,0.876258360020796298560848845630)
(0.700000000000000000000000000000,0.873378536768028211677631491439)
(0.800000000000000000000000000000,0.871443290280167262646805082751)
(0.900000000000000000000000000000,0.870244312453722472697714224737)
(1.00000000000000000000000000000,0.869622391467193983610478743627)
(1.10000000000000000000000000000,0.869454605830405958199271367471)
(1.20000000000000000000000000000,0.869645193693406851366372774141)
(1.30000000000000000000000000000,0.870118950036237795259093938984)
(1.40000000000000000000000000000,0.870816389775256043441856492337)
(1.50000000000000000000000000000,0.871690162224574717708564294651)
(1.60000000000000000000000000000,0.872702364021037738088345977920)
(1.70000000000000000000000000000,0.873822505021154289379828787101)
(1.80000000000000000000000000000,0.875025954121865129487559624784)
(1.90000000000000000000000000000,0.876292741517771242836888669567)
(2.00000000000000000000000000000,0.877606628262726138018719337189)
    };
\addplot+[color=green,mark=o,mark size=1pt,smooth]
coordinates {
(0.100000000000000000000000000000,0.902131371389833988429223160362)
(0.200000000000000000000000000000,0.890399125535903801822504324417)
(0.250000000000000000000000000000,0.885785773256411033658807227317)
(0.350000000000000000000000000000,0.878509520397594611150687043611)
(0.400000000000000000000000000000,0.875675123403308850363274615561)
(0.450000000000000000000000000000,0.873283505036298848574502688318)
(0.500000000000000000000000000000,0.871277613072940911690655325812)
(0.550000000000000000000000000000,0.869608558523473260036655529640)
(0.650000000000000000000000000000,0.867118539880646985318064636236)
(0.700000000000000000000000000000,0.866229878616559933239296207246)
(0.750000000000000000000000000000,0.865541033444808406041347562645)
(0.800000000000000000000000000000,0.865028248683241611938292634323)
(0.850000000000000000000000000000,0.864670780473500119650059916844)
(0.900000000000000000000000000000,0.864450468769161601890954413581)
(0.950000000000000000000000000000,0.864351377282925650299482486535)
(1.00000000000000000000000000000,0.864359489472819964437403104290)
(1.10000000000000000000000000000,0.864649350346984949847673415671)
(1.15000000000000000000000000000,0.864910532667203502689114076256)
(1.25000000000000000000000000000,0.865622470462900609093560648344)
(1.30000000000000000000000000000,0.866058867953316107870776051555)
(1.35000000000000000000000000000,0.866540612341108044247635597784)
(1.40000000000000000000000000000,0.867062335197983930451895110342)
(1.45000000000000000000000000000,0.867619242485321894347213909321)
(1.50000000000000000000000000000,0.868207048003642164131935630959)
(1.55000000000000000000000000000,0.868821915380789575213218203098)
(1.60000000000000000000000000000,0.869460407389495262293436982757)
(1.65000000000000000000000000000,0.870119441572470936060709840024)
(1.70000000000000000000000000000,0.870796251309137151476868812104)
(1.75000000000000000000000000000,0.871488351588207593285589307259)
(1.80000000000000000000000000000,0.872193508859249574073665733672)
(1.85000000000000000000000000000,0.872909714427743843022287794012)
(1.90000000000000000000000000000,0.873635160935101654955730168373)
(1.95000000000000000000000000000,0.874368221530030821014026176671)
(2.00000000000000000000000000000,0.875107431392592113523606965511)
};

\addplot+[color=blue,mark=o,mark size=1pt,smooth]
coordinates {
(0.100000000000000000000000000000,0.900549803781153575109533712097)
(0.150000000000000000000000000000,0.894351011747036922106391019301)
(0.200000000000000000000000000000,0.889059673105487691783104319161)
(0.250000000000000000000000000000,0.884546622961283645178467470709)
(0.300000000000000000000000000000,0.880703534491325127261717442853)
(0.350000000000000000000000000000,0.877439109537999080661733491730)
(0.400000000000000000000000000000,0.874676046242588169130548336841)
(0.450000000000000000000000000000,0.872348609491750249438548181919)
(0.500000000000000000000000000000,0.870400672361168139636402586947)
(0.550000000000000000000000000000,0.868784128055359468059936022822)
(0.600000000000000000000000000000,0.867457595156318993481459501681)
(0.650000000000000000000000000000,0.866385356486963858518798994530)
(0.700000000000000000000000000000,0.865536485119270894990206446605)
(0.750000000000000000000000000000,0.864884121124365630262099728125)
(0.800000000000000000000000000000,0.864404870377796483441318787371)
(0.850000000000000000000000000000,0.864078302684998011850460018840)
(0.900000000000000000000000000000,0.863886531111008755146526346038)
(0.950000000000000000000000000000,0.863813858004242800204178481285)
(1.00000000000000000000000000000,0.863846476034552955185051493896)
(1.05000000000000000000000000000,0.863972214799546454067214546836)
(1.10000000000000000000000000000,0.864180325324944137280701414309)
(1.15000000000000000000000000000,0.864461296197115613444466390438)
(1.20000000000000000000000000000,0.864806696197009031134906559445)
(1.25000000000000000000000000000,0.865209039214624670722667751746)
(1.30000000000000000000000000000,0.865661667958373255081606853769)
(1.35000000000000000000000000000,0.866158653570145639581235206582)
(1.40000000000000000000000000000,0.866694708742810776810023900940)
(1.45000000000000000000000000000,0.867265112334187613705147728592)
(1.50000000000000000000000000000,0.867865643797654130586678543425)
(1.55000000000000000000000000000,0.868492526018185470586602243627)
(1.60000000000000000000000000000,0.869142375364652321354869033857)
(1.65000000000000000000000000000,0.869812157953352919254769569368)
(1.70000000000000000000000000000,0.870499151270957286589198908103)
(1.75000000000000000000000000000,0.871200910432904776072903845528)
(1.80000000000000000000000000000,0.871915238460328707575032494259)
(1.85000000000000000000000000000,0.872640160048438681978051527153)
(1.90000000000000000000000000000,0.873373898374940521389317319255)
(1.95000000000000000000000000000,0.874114854560935466329098819044)
(2.00000000000000000000000000000,0.874861589450793125179286398403)

    };
\legend{$x=5$, $x=10$,$x=100$,$x=1000$
}
\end{axis}
\end{tikzpicture}
\caption{\label{fig:exp} Normalized (by $k=4$) random access expectation $T_{\max}(G_4({x,y}))$ for various $x$ and ratios $\alpha=y/x$.}
%\vspace{-0.3cm}
\end{figure}

\begin{proposition} \label{prop:t4}
    We have $T(4) \le 0.863814 \cdot 4$.
\end{proposition}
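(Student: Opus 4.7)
The plan is to use Lemmas~\ref{Lem:CalExpeUsingAlphas} and~\ref{lem:calAlphasK=4} to compute $T_{\max}(G_4(x,y))$ in closed form, and then to let $x\to\infty$ with $y=\alpha x$ for a suitable constant $\alpha$ and optimize. Because the expressions $\alpha_i^s$ in Lemma~\ref{lem:calAlphasK=4} do not depend on $i$, we have $T_{\max}(G_4(x,y))=\E[\tau_1(G_4(x,y))]$, so only one expectation needs to be analyzed.

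The key algebraic ingredient is the identity
$$\sum_{s=0}^{M}\frac{\binom{M}{s}}{\binom{n-1}{s}}=\frac{n}{n-M},$$
which follows from the hockey-stick identity, together with its shifted variants obtained by telescoping: using $\binom{M}{s-1}=\binom{M+1}{s}-\binom{M}{s}$ and $\binom{M}{s-2}=\binom{M+2}{s}-2\binom{M+1}{s}+\binom{M}{s}$, one gets
$$\sum_{s}\frac{\binom{M}{s-1}}{\binom{n-1}{s}}=\frac{n}{(n-M)(n-M-1)},\qquad \sum_{s}\frac{\binom{M}{s-2}}{\binom{n-1}{s}}=\frac{2n}{(n-M)(n-M-1)(n-M-2)}.$$
Setting $n=6x+4y$ and $m=3x+3y$, the tail $s>m$ in Lemma~\ref{Lem:CalExpeUsingAlphas} contributes $nH_{3x+y-1}$ (using $\alpha_i^s=\binom{n}{s}$ and $\binom{n}{s}/\binom{n-1}{s}=n/(n-s)$), and combined with $nH_n$ this leaves the finite quantity $\sum_{j=0}^{3}n/(n-j)$, which tends to $4$. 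The contributions from $\binom{m}{s}$, $\binom{x+2y}{s-1}$, and $\binom{y}{s-2}$ in Lemma~\ref{lem:calAlphasK=4}(iv) evaluate, after applying the identities above, to $n/(3x+y)$, $n/[(5x+2y)(5x+2y-1)]$, and $2n/[(6x+3y)(6x+3y-1)(6x+3y-2)]$, respectively, each multiplied by its weight $1$, $3x$, or $9x^2$. Subtracting the three boundary terms at $s\in\{1,2,3\}$ computed directly from Lemma~\ref{lem:calAlphasK=4}(i)--(iii) then gives an explicit rational expression for $T_{\max}(G_4(x,y))$.

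Substituting $y=\alpha x$ and sending $x\to\infty$ collapses this expression to a one-variable rational function $f(\alpha)$. A numerical optimization, consistent with Figure~\ref{fig:exp}, locates a minimizer $\alpha^*\approx 0.95$ at which $f(\alpha^*)\le 0.863814\cdot 4$. Because Theorem~\ref{thm:bl} produces a $B_3$-sequence of the required size over $\Z_{q-1}$ whenever $q=2^{4t}$ with $t$ large enough, the matrix $G_4(x,\lceil\alpha^* x\rceil)$ is realizable over such $\F_q$ for every $x$; letting first $x\to\infty$ and then $q\to\infty$, we conclude $T(4)\le f(\alpha^*)\le 0.863814\cdot 4$.

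The main obstacle is the careful bookkeeping of the low-$s$ boundary corrections: for instance the $s=1$ contribution $3x\cdot \binom{x+2y}{0}/\binom{n-1}{1}=3x/(n-1)$ is $\Theta(1)$ as $x\to\infty$, not $o(1)$, so it must be tracked exactly rather than discarded, and similarly for the $s=2,3$ terms coming from parts (ii)--(iii) of Lemma~\ref{lem:calAlphasK=4}. Once all boundary terms are collected, the resulting rational function is straightforward to minimize with a computer algebra system.
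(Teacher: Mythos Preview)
Your proposal is correct and takes essentially the same approach as the paper: plug the $\alpha_i^s$ values from Lemma~\ref{lem:calAlphasK=4} into Lemma~\ref{Lem:CalExpeUsingAlphas} and optimize numerically over the ratio $\alpha=y/x$. The paper simply evaluates the resulting finite sum in \texttt{magma} for large $x$ (producing Figure~\ref{fig:exp} and the value $0.863814$ at $x=1000$, $\alpha=0.95$), whereas you first pass to the $x\to\infty$ limit analytically via the stated binomial-ratio identities before optimizing---this limiting function in fact coincides with the $k=4$ instance of Theorem~\ref{Theorem:ExConstrcution}, which the paper later derives by a different (graph-theoretic) route and which at $\alpha=0.95$ gives the slightly sharper value $0.86375\cdot 4$.
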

Thus far, we have studied the random access expectation of our construction for $k=3$ and $k=4$, in the next subsection, we extend the analysis to general $k$, using the complete graph setup.

\subsubsection{The Asymptotic Expectation for General $k$}
As observed in Lemma~\ref{lem:calAlphasK=4}, explicitly calculating the $\alpha$'s can be very challenging, especially when $k$ increases. Therefore, in this section, we derive a closed-form expression for the asymptotic random access expectation of Construction~\ref{ConsGeneralK} (i.e., when the number of columns goes to infinity), for every $k$.
To achieve this, we look at the problem from a graph viewpoint and apply Claim~\ref{Claim:EqToGraphs} to evaluate the random access expectation. 

Assume we collect either edges or vertices from the graph $K_k$. Each edge is collected with probability $p=\frac{x}{ky+\binom{k}{2}x}$, and each vertex with probability $P=\frac{y}{ky+\binom{k}{2}x}$. The random access expectation of the first strand is then:
\begin{align}
E[\tau_1] = 1 + (1-P) + \sum_{r=2}^{\infty} \text{Pr}\left( \tau_1 > r \right)\label{eq:ExpecTailSum}.
\end{align}
The key assumption that allows us to compute the asymptotic 
expectation is the following observation.

\begin{observation}
As we are interested in the asymptotic behavior of the random access expectation, we can assume that the same copy of an edge cannot be collected twice. This assumption is justified because, by taking $x$ to infinity, we can make the probability of collecting the same copy twice arbitrarily small.    
\end{observation}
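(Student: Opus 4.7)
The plan is to make the observation rigorous via a coupling argument showing that, with $y=\alpha x$, the expectation $\E[\tau_1(G_k(x,y))]$ converges as $x\to\infty$ to the expectation $\E[\tilde\tau_1]$ computed in the idealized \emph{collision-free} model, in which every column drawn from an edge bundle $E_{i,j}$ is automatically non-collinear with all previously drawn columns from the same bundle. Since the collision-free expectation is exactly what the subsequent analysis is set up to compute via Equation~\eqref{eq:ExpecTailSum}, establishing this limit is what ``justifies'' the working assumption.

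To set up the coupling, draw an i.i.d.\ sequence of \emph{labels} $(L_1,L_2,\ldots)$ taking values in $\{V_1,\ldots,V_k\}\cup\{E_{i,j}:1\le i<j\le k\}$, with $\Pr(L_t=V_i)=y/N$ and $\Pr(L_t=E_{i,j})=x/N$ where $N=ky+\binom{k}{2}x$. In the collision-free model, the label $L_t$ is the draw; in the actual model, each label $L_t=E_{i,j}$ is further resolved by picking one of the $x$ physical columns of $E_{i,j}$ uniformly at random, independently across $t$. Let $\mathcal{C}_R$ be the event that two of the first $R$ draws in the actual model select the very same physical column. Since any two specific draws coincide with probability $1/N$, a union bound gives
\begin{equation*}
\Pr(\mathcal{C}_R)\;\le\;\binom{R}{2}\frac{1}{N}\;=\;O\!\left(\frac{R^2}{x}\right).
\end{equation*}
On the complementary event $\neg\mathcal{C}_R$, each draw from an edge bundle in the actual model is non-collinear with all earlier draws from the same bundle (by recovery completeness, condition~1 of Definition~\ref{def:rc}), so the two processes produce identical multisets $\mathscr{T}$ up to time $R$; in particular $\tau_1\wedge R=\tilde\tau_1\wedge R$ on $\neg\mathcal{C}_R$.

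The second ingredient is a tail bound that is uniform in $x$. In each draw, the probability of pulling $V_1$ is $y/N=\alpha/(\alpha k+\binom{k}{2})=:\gamma>0$, a constant independent of $x$ under the scaling $y=\alpha x$, and drawing $V_1$ immediately recovers $\bfe_1$. Hence in both models
\begin{equation*}
\Pr(\tau_1>t)\;\le\;(1-\gamma)^t,\qquad \Pr(\tilde\tau_1>t)\;\le\;(1-\gamma)^t,
\end{equation*}
so both expectations are bounded by a constant $C_{k,\alpha}<\infty$ independent of $x$. Combining the coupling on $\{t\le R\}$ with the tail bound on $\{t>R\}$ yields
\begin{equation*}
\bigl|\E[\tau_1]-\E[\tilde\tau_1]\bigr|\;\le\;R\cdot\Pr(\mathcal{C}_R)+\sum_{t\ge R}\bigl(\Pr(\tau_1>t)+\Pr(\tilde\tau_1>t)\bigr)\;=\;O\!\left(\frac{R^2}{x}\right)+O\!\left((1-\gamma)^R\right),
\end{equation*}
and choosing for instance $R=\lfloor\log^2 x\rfloor$ sends both terms to $0$ as $x\to\infty$, proving $\E[\tau_1(G_k(x,\alpha x))]\to\E[\tilde\tau_1]$.

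The main obstacle I anticipate is keeping the tail bound uniform in $x$: it relies on there being a positive, constant-in-$x$ probability of a single-draw recovery, which is exactly what the scaling $y=\alpha x$ with fixed $\alpha>0$ guarantees. Were one to allow $y/x\to 0$, the draw $V_1$ would become rare and the geometric tail would degenerate; one would then need to replace this bound by an argument showing that in each window of $\Theta(k)$ consecutive draws there is a constant probability of collecting either a vertex or a short cycle containing $V_1$ in $K_k$, which again has probability bounded below in $x$ under the scaling of the construction. Under the regime actually used in the analysis, however, the geometric bound above is ample.
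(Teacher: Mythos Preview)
The paper does not prove this observation; it is stated as a one-sentence heuristic (``by taking $x$ to infinity, we can make the probability of collecting the same copy twice arbitrarily small'') and then invoked without further argument, with only a later remark after Proposition~\ref{prop:t4} that the finite-$x$ and collision-free expectations ``converge'' as $x\to\infty$. Your coupling argument supplies exactly the rigor the paper omits: the birthday-type bound on $\Pr(\mathcal{C}_R)$ together with the geometric tail $\Pr(\tau_1>t)\le(1-\gamma)^t$, where $\gamma=\alpha/(\alpha k+\binom{k}{2})>0$ is independent of $x$ under the scaling $y=\alpha x$, is the right pair of ingredients, and truncating the tail-sum comparison at level $R$ cleanly balances the two error terms.

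One small arithmetic slip: since $\Pr(\mathcal{C}_R)=O(R^2/x)$, the product $R\cdot\Pr(\mathcal{C}_R)$ is $O(R^3/x)$ rather than $O(R^2/x)$; this is harmless, as $R^3/x\to 0$ for $R=\lfloor\log^2 x\rfloor$ just the same. Your closing caveat about the degenerate regime $\alpha\to 0$ is well placed, but the paper never leaves the regime where $\alpha$ is bounded away from $0$ (Lemma~\ref{theor_k=3}, Proposition~\ref{prop:t4}, and Corollary~\ref{cor:ubfin} all fix a positive ratio), so the simple geometric tail suffices throughout.
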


Next, we decompose the last term in Equation~(\ref{eq:ExpecTailSum}) into disjoint cases. For that purpose, let $H_r$ denote the multi-set of components from $K_k$ collected at round $r$. Then, vertex $1$ (or information strand $1$) cannot be recovered at round $r\geq 2$ only if one of the following conditions holds: \\(i) $H_r$ does not contain any components that involve vertex $1$, \begin{comment}
or\\(ii) Vertex $1$ is part of a tree in $H_r$ that contains all $k$ vertices, where each edge in the tree is collected exactly once and no vertex in the tree has been collected, or \\(iii) Vertex $1$ belongs to a tree containing $\ell + 1 \leq k-1$ vertices.

Cases (ii) and (iii) are considered separately because their probability calculations differ. 
In the next lemma we calculate the expectation of each of the three cases.    
\end{comment}
or \\(ii) Vertex $1$ belongs to a tree containing $\ell + 1 \leq k$ vertices.

In the next lemma, we calculate the expectation of each of the cases. 
\begin{theorem}\label{Theorem:ExConstrcution}
Under the assumption that the same copy of an edge cannot be collected twice, we have:
\begin{comment}
    \begin{align*}
    &\E\left( \text{ Case } i\ \right)= \frac{ \left( 1 - (k-1) p - P \right)^2 }{ (k-1)p + P },\\
   &\E\left( \text{ Case } ii\ \right) = k^{k-2} (k-1)! p^{k-1},\\
   &\E\left( \text{ Case } iii\ \right) =(k-1)p\frac{v(2-v)}{(1-v)^2}+ \sum_{\ell=2}^{k-2} \binom{k-1}{\ell} (\ell + 1)^{\ell-1} \ell ! p^{\ell} \frac{1}{\left( 1 - u \right)^{\ell+1}},
\end{align*}
\end{comment}
\begin{align*}
    &\E\left( \text{ Case } i\ \right)= \frac{ \left( 1 - (k-1) p - P \right)^2 }{ (k-1)p + P },\\
   &\E\left( \text{ Case } ii\ \right) =(k-1)p\frac{v(2-v)}{(1-v)^2}+ \sum_{\ell=2}^{k-1} \binom{k-1}{\ell} (\ell + 1)^{\ell-1} \ell ! p^{\ell} \frac{1}{\left( 1 - u \right)^{\ell+1}},
\end{align*}
where $v= \binom{k-2}{2}p + (k-2)P $ and $u = \binom{k-\ell-1}{2}p + (k-\ell-1)P$.
\\In particular, it holds that:
\begin{align}
  T(k) \leq \min\limits_{\substack{p,P:\\kP+\binom{k}{2}p=1}} 1 + (1-P) + \E\left( \text{ Case } i\ \right)+\E\left( \text{ Case } ii\ \right).  \label{eq:boundOnT(k)}
  \end{align}
\end{theorem}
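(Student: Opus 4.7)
The plan is to compute $\E[\tau_1]$ via the tail-sum identity
\[ \E[\tau_1] \;=\; \sum_{r\geq 0} \Pr(\tau_1 > r) \;=\; 1 + (1-P) + \sum_{r\geq 2}\Pr(\tau_1 > r),\]
and to express $\Pr(\tau_1 > r)$ entirely in the graph picture supplied by Claim~\ref{Claim:EqToGraphs}: $V_1$ is not recovered at round $r$ exactly when the connected component of $V_1$ in the multiset $\mathscr{T}$ of collected items is a tree without any cycle of length $1$, $2$, or larger. Under the stated no-repeat assumption, the $r$ draws are i.i.d.\ over the $\binom{k}{2}$ edges and $k$ vertices of $K_k$, with per-draw probabilities $p$ and $P$.

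I would then partition the event $\{\tau_1 > r\}$ into the two cases from the theorem statement. Case~(i): none of the $k-1$ edges incident to $V_1$, and no copy of $V_1$, is ever drawn. Each round avoids these with probability $1-(k-1)p-P$, so the geometric sum from $r=2$ yields $\E(\text{Case }i) = (1-(k-1)p-P)^2/((k-1)p+P)$.

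Case~(ii): $V_1$ lies in a tree $T$ of size $\ell+1$ for some $\ell \in \{1,\dots,k-1\}$. For a fixed labeled tree $T$ on $V_1$ together with $\ell$ other vertices, the condition that $V_1$'s component equals $T$ after $r$ draws requires three things: each of the $\ell$ edges of $T$ is drawn exactly once (two copies would produce a length-$2$ cycle); no vertex of $T$ is drawn (which would produce a length-$1$ cycle); no edge with at least one endpoint in $V(T)$ outside $T$ is drawn (any such edge would either enlarge the component or close a cycle inside $T$). Consequently, the remaining $r-\ell$ draws must lie in the ``safe zone'' of vertices and edges in $K_k\setminus V(T)$, carrying per-draw probability
\[ u \;=\; \binom{k-\ell-1}{2}p + (k-\ell-1)P. \]
Counting multinomial orderings, the probability of seeing exactly $T$ at round $r$ is $\binom{r}{\ell}\ell!\,p^{\ell}u^{r-\ell}$. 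Cayley's formula gives $(\ell+1)^{\ell-1}$ labeled trees on $\ell+1$ fixed vertices, and $\binom{k-1}{\ell}$ ways choose the other $\ell$ vertices. Using the identity $\sum_{r\geq \ell}\binom{r}{\ell}u^{r-\ell} = (1-u)^{-(\ell+1)}$, the contribution of tree size $\ell+1$, summed over $r\geq \ell$, equals $\binom{k-1}{\ell}(\ell+1)^{\ell-1}\ell!\,p^{\ell}(1-u)^{-(\ell+1)}$.

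For $\ell\geq 2$ this coincides with the sum over $r\geq 2$ since $\binom{r}{\ell}=0$ when $r<\ell$. For $\ell=1$ one must subtract the $r=1$ term, which contributes $(k-1)p$, producing the factor $v(2-v)/(1-v)^2$ with $v=\binom{k-2}{2}p+(k-2)P$. Adding Case~(i) and Case~(ii) to $1+(1-P)$ yields the stated formula for $\E[\tau_1]$. By the vertex symmetry of $G_k(x,y)$, the same value bounds every $\E[\tau_i]$, hence $T_{\max}(G_k(x,y))$; letting $x\to\infty$ makes the no-repeat assumption exact, and minimising over feasible $(p,P)$ subject to $kP+\binom{k}{2}p=1$ yields the bound~\eqref{eq:boundOnT(k)} on $T(k)$. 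The main obstacle is the combinatorial bookkeeping in Case~(ii): one must carefully verify that the safe-zone probability $u$ excludes \emph{every} draw that could enlarge the component, add a chord creating a cycle in $T$, or introduce a cycle of length $1$ or $2$, while simultaneously handling the off-by-one at $\ell=1$ where the summation has to begin at $r\geq 2$ rather than at $r=\ell$.
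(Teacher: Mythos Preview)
Your proposal is correct and follows essentially the same route as the paper: the tail-sum decomposition of $\E[\tau_1]$, the same split into Case~(i) (geometric sum over rounds avoiding all items incident to $V_1$) and Case~(ii) (tree on $\ell+1$ vertices counted via Cayley's formula, safe-zone probability $u$, and the identity $\sum_{r\ge \ell}\binom{r}{\ell}u^{r-\ell}=(1-u)^{-(\ell+1)}$), with the same off-by-one correction at $\ell=1$ producing the $v(2-v)/(1-v)^2$ factor. The only cosmetic difference is that you phrase the $\ell=1$ correction as ``subtract the $r=1$ term'' whereas the paper writes out $\sum_{r\ge 2} r v^{r-1}$ directly; these are of course the same computation.
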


\begin{proof}
We start with the first event:
\begin{align*}
\E\left( \text{ Case } i\ \right) &= \sum_{r=2}^{\infty} \left( 1 - (k-1)p - P \right)^{r} \\
&= \frac{ \left( 1 - (k-1) p - P \right)^2 }{ (k-1)p + P }.
\end{align*}
For every $r\geq 2$, this represents the probability that we don't collect the first vertex and that no edges which are collected are adjacent to the first vertex. Since there are $k-1$ edges that are adjacent to vertex $1$ in the graph $K_k$, the result follows.
\begin{comment}
We continue to the second case:
\begin{align*}
\E\left( \text{ Case } ii\ \right) = k^{k-2} (k-1)! p^{k-1}.
\end{align*}
Here we calculate the probability that vertex $1$ belongs to a tree that contains all $k$ vertices. There are $k^{k-2}$ trees involving $k$ vertices and $(k-1)!$ ways in which to collect the $k-1$ edges that constitute the resulting tree. The probability of selecting the $k-1$ edges is clearly $p^{k-1}$.
\end{comment}

For the second term:
\begin{align}
\E\left( \text{ Case } ii\ \right) =  \sum_{r=2}^{\infty} \sum_{\ell=1}^{\min\{k-1,r\}} \binom{k-1}{\ell} \left( \ell + 1 \right)^{\ell-1} \ell!   \binom{r}{\ell} p^{\ell} \left( \binom{k-\ell-1}{2}p + (k-\ell-1)P \right)^{r-\ell}, \label{eq:complicatedCase}   
\end{align} 

this addresses the scenario where vertex $1$ has not been collected and where vertex $1$ belongs to a tree containing $\ell + 1 \leq k$ vertices, which means that the tree has at most $\ell \leq k-1$ edges. Since the number of draws is $r$, we must also have that $\ell \leq r$. In the expression, we partition the collected components in $H_r$ into two disjoint multi-sets:
\\(a) The first multi-set contains the vertex $1$ and, by assumption, forms a tree on $\ell+1$ vertices.\\ (b) The second multi-set contains components which are not connected to the tree in the first set.

We begin by analyzing the number of ways to select the first multiset. There are $\binom{k-1}{\ell}$ ways to choose $\ell$ additional vertices from the remaining $k-1$ (excluding vertex 1), to form a group of $\ell+1$ vertices containing vertex $1$. By Cayley's formula, there are $\left( \ell +1 \right)^{\ell-1}$ distinct trees that can be generated given $\ell+1$ vertices. Each of these trees has $\ell$ edges, which may be collected in $(\ell)!$ different orders. Since each edge is collected independently with probability $p$, this gives the contribution of the $p^{\ell}$ term. 

For the second multiset, consider the $k-\ell-1$ not included in the first tree. Among these, there are $\binom{k-\ell-1}{2}$ possible edges, and each vertex can also be collected independently. Therefore, the contribution from these components over the remaining $r-\ell$ rounds is: 
$\left( \binom{k-\ell-1}{2}p + (k-\ell-1)P \right)^{r-\ell}$. Note that when $\ell=k-1$, this term vanishes, as there are no remaining vertices.

Next, we get rid of the infinite sum in equation~(\ref{eq:complicatedCase}). For shorthand let \begin{align*}
&u\doteq\binom{k-\ell-1}{2}p + (k-\ell-1)P,\\
&v\doteq\left( \binom{k-2}{2}p + (k-2)P \right).
\end{align*}
Clearly, $|u|,|v| < 1$. Then, by reorganizing terms, we have
\begin{align*}
\E\left( \text{ Case } ii\ \right) &=\sum_{r=2}^{\infty} \sum_{\ell=1}^{\min\{k-1,r\}} \binom{k-1}{\ell} \left( \ell + 1 \right)^{\ell-1} \ell!   \binom{r}{\ell} p^{\ell} \left( \binom{k-\ell-1}{2}p + (k-\ell-1)P \right)^{r-\ell}\\&=\sum_{r=2}^{\infty}(k-1)rp\left( \binom{k-2}{2}p + (k-2)P \right)^{r-1}+\sum_{r=2}^{\infty} \sum_{\ell=2}^{\min\{k-1,r\}} \binom{k-1}{\ell}  \left( \ell + 1 \right)^{\ell-1} \ell!   \binom{r}{\ell} p^{\ell} u^{r-\ell}
\\&=(k-1)p\frac{v(2-v)}{(1-v)^2}+\sum_{r=2}^{\infty} \sum_{\ell=2}^{\min\{k-1,r\}} \binom{k-1}{\ell}  \left( \ell + 1 \right)^{\ell-1} \ell!   \binom{r}{\ell} p^{\ell} u^{r-\ell}
\\&=(k-1)p\frac{v(2-v)}{(1-v)^2} + \sum_{\ell=2}^{k-1} \binom{k-1}{\ell} (\ell + 1)^{\ell-1} \ell ! p^{\ell} \sum_{r=\ell}^{\infty} \binom{r}{\ell} u^{r-\ell}  \\
&= (k-1)p\frac{v(2-v)}{(1-v)^2}+ \sum_{\ell=2}^{k-1} \binom{k-1}{\ell} (\ell + 1)^{\ell-1} \ell ! p^{\ell} \frac{1}{\left( 1 - u \right)^{\ell+1}},
\end{align*}
as desired.
\begin{comment}
Substituting the value of $u$ and $v$ we obtain \begin{align*}
 \E\left( \text{ Case } iii\ \right) &=(k-1)p\frac{v(2-v)}{(1-\left( \binom{k-2}{2}p + (k-2)P \right))^2}\\&+ \sum_{\ell=2}^{k-2} \binom{k-1}{\ell} (\ell + 1)^{\ell-1} \ell ! p^{\ell} \frac{1}{\left( 1 - (\binom{k-\ell-1}{2}p + (k-\ell-1)P) \right)^{\ell+1}}.   
\end{align*}\end{comment}
\end{proof}

Setting $k=4$ and $\alpha=0.95$ (the value of $\alpha=y/x$ that we used in Proposition~\ref{prop:t4}) in Theorem~\ref{Theorem:ExConstrcution} gives the upper bound $T(4)\leq 0.86375\cdot 4$. This bound is slightly stronger than the bound in Proposition~\ref{prop:t4}, as expected, since Theorem~\ref{Theorem:ExConstrcution} assumes that the same copy of an edge cannot be collected more than once. Nonetheless, as $x\to\infty$, the probability of resampling the same copy of an edge more than once tends to zero, and thus the two bounds converge.  

\begin{figure}[ht!]
\centering
\begin{tikzpicture}[scale=1]
\begin{axis}[legend style={at={(1,1)}, legend style={cells={align=left}}, anchor = north east, /tikz/column 2/.style={
                column sep=5pt}},
		legend cell align={left},
		width=15cm,height=9cm,
    xlabel={Ratio $\alpha=y/x$},
    xmin=0, xmax=100,
    ymin=0.790, ymax=0.822,
    xtick={1,20,40,60,80,100,120,140,160,180,200,220,240,260,280},
    ytick={0.79,0.8,0.81,0.82,0.85},
    ymajorgrids=true,
    grid style=dashed,
    every axis plot/.append style={},  yticklabel style={/pgf/number format/fixed}
]

\addplot+[color=red,mark=o,mark size=1pt,smooth]
coordinates {
(1.00000000000000000000000000000,0.817901988459934721348614557134)
(4.00000000000000000000000000000,0.804712653618753737526122690139)
(7.00000000000000000000000000000,0.799667470274803992521304187686)
(10.0000000000000000000000000000,0.798274378433089392522142495057)
(13.0000000000000000000000000000,0.798816673928879625784086580759)
(16.0000000000000000000000000000,0.800478766887243153849526204775)
(19.0000000000000000000000000000,0.802818021908012113142228770825)
(22.0000000000000000000000000000,0.805572739258578474350741598292)
(25.0000000000000000000000000000,0.808579023465157903524137947781)
(28.0000000000000000000000000000,0.811730062338531668060755441949)
(31.0000000000000000000000000000,0.814954293682416709283125219506)
(34.0000000000000000000000000000,0.818202872087259365506301216672)
(37.0000000000000000000000000000,0.821442083057203859257584745829)
(40.0000000000000000000000000000,0.824648554483925701674907827611)
(43.0000000000000000000000000000,0.827806129470522120796696941591)
(46.0000000000000000000000000000,0.830903766393368313501898091489)
(49.0000000000000000000000000000,0.833934095772217568447509870801)
(52.0000000000000000000000000000,0.836892409128536663029919571263)
(55.0000000000000000000000000000,0.839775938881433482779053648445)
(58.0000000000000000000000000000,0.842583338412623541056623417254)
(61.0000000000000000000000000000,0.845314302285374262367015296669)
(64.0000000000000000000000000000,0.847969286133877857139235487280)
(67.0000000000000000000000000000,0.850549298402737587165840378342)
(70.0000000000000000000000000000,0.853055744501838594895958979105)
(73.0000000000000000000000000000,0.855490309600246564594363013536)
(76.0000000000000000000000000000,0.857854870165689211138086280849)
(79.0000000000000000000000000000,0.860151427061324998057173177483)
(82.0000000000000000000000000000,0.862382054922023140342951945898)
(85.0000000000000000000000000000,0.864548863898467008457790878794)
(88.0000000000000000000000000000,0.866653970845220316433658338413)
(91.0000000000000000000000000000,0.868699477750539920667192383712)
(94.0000000000000000000000000000,0.870687455737849196524094061358)
(97.0000000000000000000000000000,0.872619933364532606829897914608)
(100.000000000000000000000000000,0.874498888240337120211825796064)

    };
\addplot+[color=green,mark=o,mark size=1pt,smooth]
coordinates {
(1.00000000000000000000000000000,0.820030000233562816818289321273)
(4.00000000000000000000000000000,0.814073495499325828950838048039)
(7.00000000000000000000000000000,0.809748679306911712523569955291)
(10.0000000000000000000000000000,0.806457208750971691455916370559)
(13.0000000000000000000000000000,0.803879773077447298096778327451)
(16.0000000000000000000000000000,0.801827159222403902073434290222)
(19.0000000000000000000000000000,0.800177803754114475731152212119)
(22.0000000000000000000000000000,0.798848590066561472472340712527)
(25.0000000000000000000000000000,0.797779878704764107365978652613)
(28.0000000000000000000000000000,0.796927222842404050512516067209)
(31.0000000000000000000000000000,0.796256481819644366169701984358)
(34.0000000000000000000000000000,0.795740781780987365400784089296)
(37.0000000000000000000000000000,0.795358539229904320360528538544)
(40.0000000000000000000000000000,0.795092126662272218383127717591)
(43.0000000000000000000000000000,0.794926942511744098620690347804)
(46.0000000000000000000000000000,0.794850744953540503083176263280)
(49.0000000000000000000000000000,0.794853163320661138086441541359)
(52.0000000000000000000000000000,0.794925332348672807059214435242)
(55.0000000000000000000000000000,0.795059613398958018575936549982)
(58.0000000000000000000000000000,0.795249378575343831189250079060)
(61.0000000000000000000000000000,0.795488841170722800665772234746)
(64.0000000000000000000000000000,0.795772920812938694763992656089)
(67.0000000000000000000000000000,0.796097134988649579873762197732)
(70.0000000000000000000000000000,0.796457510890378974273876946209)
(73.0000000000000000000000000000,0.796850513113339635165308109638)
(76.0000000000000000000000000000,0.797272983850769421853714870803)
(79.0000000000000000000000000000,0.797722093045174242365933702197)
(82.0000000000000000000000000000,0.798195296543888819565591573292)
(85.0000000000000000000000000000,0.798690300744946801803378403651)
(88.0000000000000000000000000000,0.799205032547118563287609277177)
(91.0000000000000000000000000000,0.799737613666358151002080317873)
(94.0000000000000000000000000000,0.800286338570999782265298843414)
(97.0000000000000000000000000000,0.800849655434921613824271141839)
(100.000000000000000000000000000,0.801426149622383376082558048430)

};

\addplot+[color=blue,mark=o,mark size=1pt,smooth]
coordinates {
(1.00000000000000000000000000000,0.820767073471689864891409054719)
(4.00000000000000000000000000000,0.817034847437058404294991368941)
(7.00000000000000000000000000000,0.814043506410305544931821488159)
(10.0000000000000000000000000000,0.811564249666994582710626737083)
(13.0000000000000000000000000000,0.809461362213521239769111024171)
(16.0000000000000000000000000000,0.807648373431878013351302693163)
(19.0000000000000000000000000000,0.806066830343318364375605222683)
(22.0000000000000000000000000000,0.804675256472163777646260512681)
(25.0000000000000000000000000000,0.803443031199398852201052098948)
(28.0000000000000000000000000000,0.802346801417376408912815198782)
(31.0000000000000000000000000000,0.801368272386644721013497617216)
(34.0000000000000000000000000000,0.800492788584874273730122206574)
(37.0000000000000000000000000000,0.799708387702502305670398506177)
(40.0000000000000000000000000000,0.799005149415420542969500311152)
(43.0000000000000000000000000000,0.798374734317919403734712496264)
(46.0000000000000000000000000000,0.797810049365545735372254369814)
(49.0000000000000000000000000000,0.797304999811893355800933102333)
(52.0000000000000000000000000000,0.796854301732962563746463671719)
(55.0000000000000000000000000000,0.796453337920232560266488505143)
(58.0000000000000000000000000000,0.796098045423957633959250283361)
(61.0000000000000000000000000000,0.795784826599736469507740991792)
(64.0000000000000000000000000000,0.795510477884233276374609531910)
(67.0000000000000000000000000000,0.795272132135541705506431634507)
(70.0000000000000000000000000000,0.795067211486485930098959150417)
(73.0000000000000000000000000000,0.794893388441946884554782531872)
(76.0000000000000000000000000000,0.794748553510771126457377262642)
(79.0000000000000000000000000000,0.794630788068556142806925267085)
(82.0000000000000000000000000000,0.794538341445834179128146075644)
(85.0000000000000000000000000000,0.794469611458120503780631822211)
(88.0000000000000000000000000000,0.794423127761372395528568647862)
(91.0000000000000000000000000000,0.794397537543531413768222439501)
(94.0000000000000000000000000000,0.794391593160511022983391948184)
(97.0000000000000000000000000000,0.794404141400756901528633503298)
(100.000000000000000000000000000,0.794434114121777461381660645183)

    };
\addplot+[color=orange,mark=o,mark size=1pt,smooth]
coordinates {
(1.00000000000000000000000000000,0.821794537940308947559410066925)
(4.00000000000000000000000000000,0.820698509575033928875271053659)
(7.00000000000000000000000000000,0.819708181923932604556231676011)
(10.0000000000000000000000000000,0.818802038839726066113041907679)
(13.0000000000000000000000000000,0.817964635093223155104377906062)
(16.0000000000000000000000000000,0.817184578527760323036952716431)
(19.0000000000000000000000000000,0.816453255565843344424121688482)
(22.0000000000000000000000000000,0.815764005458826133142123690307)
(25.0000000000000000000000000000,0.815111572142344284155545469964)
(28.0000000000000000000000000000,0.814491731912682581497342652616)
(31.0000000000000000000000000000,0.813901034991726823513450837380)
(34.0000000000000000000000000000,0.813336622461028625683397982103)
(37.0000000000000000000000000000,0.812796094097780778796194825451)
(40.0000000000000000000000000000,0.812277411255416421448779784048)
(43.0000000000000000000000000000,0.811778824312600750232203293280)
(46.0000000000000000000000000000,0.811298817642046624268842513146)
(49.0000000000000000000000000000,0.810836067274054797653867451075)
(52.0000000000000000000000000000,0.810389407897272066729162050590)
(55.0000000000000000000000000000,0.809957806823991669552063794588)
(58.0000000000000000000000000000,0.809540343218664776003668349163)
(61.0000000000000000000000000000,0.809136191352817707031436075231)
(64.0000000000000000000000000000,0.808744606975569856671481762447)
(67.0000000000000000000000000000,0.808364916120817411391412773098)
(70.0000000000000000000000000000,0.807996505839171526526648579469)
(73.0000000000000000000000000000,0.807638816464503500992832887632)
(76.0000000000000000000000000000,0.807291335114730188521871028461)
(79.0000000000000000000000000000,0.806953590193384556695286738865)
(82.0000000000000000000000000000,0.806625146708891946829974611240)
(85.0000000000000000000000000000,0.806305602266763783879704564015)
(88.0000000000000000000000000000,0.805994583619290955110298309026)
(91.0000000000000000000000000000,0.805691743680040947564238163674)
(94.0000000000000000000000000000,0.805396758928184284648811404150)
(97.0000000000000000000000000000,0.805109327141603828264748802073)
(100.000000000000000000000000000,0.804829165408767072180204186860)
(101.000000000000000000000000000,0.804737347393612865012165400887)
(104.000000000000000000000000000,0.804466469105414318964608813067)
(107.000000000000000000000000000,0.804202266384328446751448729639)
(110.000000000000000000000000000,0.803944509133822962881328489443)
(113.000000000000000000000000000,0.803692980299282695000211535595)
(116.000000000000000000000000000,0.803447474817335469402456913009)
(119.000000000000000000000000000,0.803207798674070607563622688756)
(122.000000000000000000000000000,0.802973768058484165338337870218)
(125.000000000000000000000000000,0.802745208599481901179101929729)
(128.000000000000000000000000000,0.802521954676437205841637265213)
(131.000000000000000000000000000,0.802303848794697665425324366980)
(134.000000000000000000000000000,0.802090741018609128955222782701)
(137.000000000000000000000000000,0.801882488455619104987350982566)
(140.000000000000000000000000000,0.801678954785863481902686339296)
(143.000000000000000000000000000,0.801480009832357460667046788160)
(146.000000000000000000000000000,0.801285529167523979598632481572)
(149.000000000000000000000000000,0.801095393752317789260559484532)
(152.000000000000000000000000000,0.800909489604654658404217880404)
(155.000000000000000000000000000,0.800727707494244468964437275590)
(158.000000000000000000000000000,0.800549942661263718813888300618)
(161.000000000000000000000000000,0.800376094556595114371365158398)
(164.000000000000000000000000000,0.800206066601616105789468161431)
(167.000000000000000000000000000,0.800039765965739922713037811856)
(170.000000000000000000000000000,0.799877103360106549884483354400)
(173.000000000000000000000000000,0.799717992845991049157242777805)
(176.000000000000000000000000000,0.799562351656645991862776153144)
(179.000000000000000000000000000,0.799410100031426313463337365761)
(182.000000000000000000000000000,0.799261161061161021576442315780)
(185.000000000000000000000000000,0.799115460543838908058673417793)
(188.000000000000000000000000000,0.798972926849766470421392546328)
(191.000000000000000000000000000,0.798833490795437123916969814844)
(194.000000000000000000000000000,0.798697085525422760353161944305)
(197.000000000000000000000000000,0.798563646401662883092597463504)
(200.000000000000000000000000000,0.798433110899583869489174393212)

};
\legend{
%$x=10$, 
$k=100$,
$k=500$,$k=1000$,$k=5000$
}
\end{axis}
\end{tikzpicture}
\caption{\label{fig:infThe3} Normalized (by $k$) upper bound on the random access expectation $T(k)$ for different values for $k$ and ratios $\alpha=y/x$ from Theorem~\ref{Theorem:ExConstrcution}.}
%\vspace{-0.3cm}
\end{figure}

Now,  by substituting different values of $p$ and $P$ in Theorem~\ref{Theorem:ExConstrcution}, we can derive an upper bound on $T(k)$, and consequently on $\liminf_{k\to\infty}\frac{T(k)}{k}$. Computations in Magma suggest that the optimum of the expression in equation~\ref{eq:boundOnT(k)}, is achieved when $p\neq P$ (as can be seen in Figure~\ref{fig:infThe3}). Unfortunately, such asymmetric choices make the analysis cumbersome. To keep the calculations tractable, we adopt the symmetric choice $p=P=\frac{1}{k+\binom{k}{2}}=\frac{2}{k^2+k}$. With this choice the forthcoming corollary yields the best upper bounds currently known for both $T(k)$ and $\liminf_{k\to\infty}\frac{T(k)}{k}$.

\begin{corollary} \label{cor:ubfin}
It holds that:
    \begin{align*}
        T(k)&\leq 1+1-\frac{2}{k^2+k}+\frac{(k-1)^2}{2(k+1)}\\&+(k-1)\frac{2}{k^2+k}\cdot\frac{k^2-3k+2}{4k-2}\cdot\frac{k^2+5k-2}{4k-2}
       \\& +\sum_{\ell=2}^{k-1} \binom{k-1}{\ell}\frac{1}{(\ell + 1)^{2}}\ell!\frac{2^{\ell}k(k+1) }{(2k-\ell)^{\ell+1}}.
    \end{align*}
    Moreover: 
    \begin{align*}
        \liminf_{k\to\infty}\frac{T(k)}{k}\leq \frac{\pi^2}{12}\approx0.822467.
    \end{align*}
\end{corollary}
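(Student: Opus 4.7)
The corollary follows by substituting the symmetric choice $p=P=\tfrac{2}{k(k+1)}$ into the bound of Theorem~\ref{Theorem:ExConstrcution}, and then analyzing the resulting expression as $k\to\infty$. First I would perform the substitution and simplify the auxiliary quantities: one obtains $(k-1)p+P=\tfrac{2}{k+1}$, $v=\tfrac{(k-1)(k-2)}{k(k+1)}$, so $1-v=\tfrac{4k-2}{k(k+1)}$ and $2-v=\tfrac{k^2+5k-2}{k(k+1)}$, and $u=\tfrac{(k-\ell-1)(k-\ell)}{k(k+1)}$, so that $1-u=\tfrac{(\ell+1)(2k-\ell)}{k(k+1)}$. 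Plugging these into the three pieces of Theorem~\ref{Theorem:ExConstrcution}, each term telescopes in a clean way: Case (i) reduces to $\tfrac{(k-1)^2}{2(k+1)}$; the first summand of Case (ii) becomes $(k-1)p\cdot\tfrac{k^2-3k+2}{4k-2}\cdot\tfrac{k^2+5k-2}{4k-2}$; and the generic term in the sum simplifies, using $(1-u)^{-(\ell+1)}=\tfrac{k^{\ell+1}(k+1)^{\ell+1}}{(\ell+1)^{\ell+1}(2k-\ell)^{\ell+1}}$ together with $p^{\ell}=\tfrac{2^{\ell}}{k^{\ell}(k+1)^{\ell}}$ and $(\ell+1)^{\ell-1}/(\ell+1)^{\ell+1}=1/(\ell+1)^2$, to $\binom{k-1}{\ell}\tfrac{\ell!}{(\ell+1)^2}\cdot\tfrac{2^{\ell}k(k+1)}{(2k-\ell)^{\ell+1}}$. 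This establishes the first displayed inequality.

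For the asymptotic statement, I would divide both sides by $k$ and compute the termwise limits. The easy terms give $\tfrac{1-P}{k}\to 0$, $\tfrac{1}{k}\cdot\tfrac{(k-1)^2}{2(k+1)}\to\tfrac12$, and a direct computation shows $\tfrac{1}{k}\cdot(k-1)p\cdot\tfrac{v(2-v)}{(1-v)^2}\to\tfrac18$ (the leading asymptotics being $\tfrac{2}{k}\cdot\tfrac{k^2\cdot k^2}{16k^2}=\tfrac{k}{8}$). For each fixed $\ell\geq 2$, the $\ell$-th summand, divided by $k$, satisfies
\[
\frac{1}{k}\binom{k-1}{\ell}\frac{\ell!}{(\ell+1)^2}\cdot\frac{2^{\ell}k(k+1)}{(2k-\ell)^{\ell+1}}\ \longrightarrow\ \frac{1}{2(\ell+1)^2}
\]
because $\binom{k-1}{\ell}\ell!\sim k^{\ell}$ and $(2k-\ell)^{\ell+1}\sim 2^{\ell+1}k^{\ell+1}$.

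The one technical point is justifying the interchange of the limit with the sum over $\ell$, which I would handle by a dominated-convergence argument. Writing $\binom{k-1}{\ell}\ell!\cdot\tfrac{2^{\ell}}{(2k-\ell)^{\ell}}=\prod_{i=1}^{\ell}\tfrac{2(k-i)}{2k-\ell}$ and applying Jensen's inequality to the concave function $\log$, the arithmetic mean of the numerators $2(k-i)$ for $i=1,\dots,\ell$ equals $2k-\ell-1<2k-\ell$, which gives $\prod_{i=1}^{\ell}\tfrac{2(k-i)}{2k-\ell}<1$. Combining this with $\tfrac{k(k+1)}{2k-\ell}\leq k$ for $\ell\leq k-1$, we obtain the clean uniform bound
\[
\frac{1}{k}\binom{k-1}{\ell}\frac{\ell!}{(\ell+1)^2}\cdot\frac{2^{\ell}k(k+1)}{(2k-\ell)^{\ell+1}}\ \leq\ \frac{1}{(\ell+1)^2},
\]
which is summable and independent of $k$.

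With the interchange justified, the limit of the sum is $\sum_{\ell=2}^{\infty}\tfrac{1}{2(\ell+1)^2}=\tfrac{1}{2}\!\left(\tfrac{\pi^2}{6}-1-\tfrac14\right)=\tfrac{\pi^2}{12}-\tfrac58$. Adding all contributions gives $\tfrac12+\tfrac18+\tfrac{\pi^2}{12}-\tfrac58=\tfrac{\pi^2}{12}$, which establishes $\liminf_{k\to\infty}T(k)/k\leq \pi^2/12$. The main obstacle is the uniform summable bound for the dominated-convergence step; the Jensen-based inequality above is the cleanest way I see to avoid crude estimates that would give only exponential bounds in~$\ell$.
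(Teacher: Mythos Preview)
Your proof is correct and follows essentially the same approach as the paper: substitute $p=P=\tfrac{2}{k(k+1)}$ into Theorem~\ref{Theorem:ExConstrcution}, simplify each piece to obtain the displayed bound, and for the asymptotic statement compute the termwise limits and justify the interchange of limit and sum via dominated convergence. Your AM--GM (Jensen) argument giving the explicit uniform bound $\tfrac{1}{(\ell+1)^2}$ is in fact cleaner than the paper's justification, which argues more informally that $a_{k,\ell}$ is eventually decreasing in $k$ and hence dominated by $\tfrac{C}{(\ell+1)^2}$ for some unspecified constant~$C$.
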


\begin{proof}
    For $p=P=\frac{1}{k+\binom{k}{2}}=\frac{2}{k^2+k}$ we have by Lemma~\ref{Theorem:ExConstrcution}: 
\begin{align*}
    T(k)\leq 1 + 1-\frac{2}{k^2+k}+ \E\left( \text{ Case } i\ \right)+\E\left( \text{ Case } ii\ \right). 
\end{align*}
    We now evaluate the expectation of both cases when  $p=P=\frac{2}{k^2+k}$. We start with the first case.
    \begin{align*}
    \E\left( \text{ Case } i\ \right)= \frac{(1-kp)^2}{kp}=\frac{(\frac{k-1}{k+1})^2}{\frac{2}{k+1}}=\frac{(k-1)^2}{2(k+1)}.
    \end{align*}
    Which goes to $k/2$ when $k\to\infty$. 
    
We now move on to case $ii$, we have:
\begin{align*}
  &v= \binom{k-2}{2}p +(k-2)P=(k-2)p(\frac{k-3}{2}+1)=\frac{(k-2)(k-1)}{k(k+1)}.
\end{align*}
Thus, the first term of $\E\left( \text{ Case } ii\ \right)$ when $p=P$ equals to \begin{align*}
    (k-1)p\frac{v(2-v)}{(1-v)^2}&=(k-1)\frac{2}{k^2+k}\cdot\frac{k^2-3k+2}{k^2+k}\cdot\frac{k^2+5k-2}{k^2+k}\cdot \frac{(k^2+k)^2}{(4k-2)^2}\\&=(k-1)\frac{2}{k^2+k}\cdot\frac{k^2-3k+2}{4k-2}\cdot\frac{k^2+5k-2}{4k-2},
\end{align*}
which goes to $\frac{1}{8}k$ when $k\to\infty$. In addition, similar to $v$, we have that \begin{align*}
    &u=\frac{(k-\ell-1)(k-\ell)}{k(k+1)},\\
    &1-u=\frac{2k+2\ell k-\ell^2-\ell}{k^2+k}.
\end{align*}
Hence, the second term of $\E\left( \text{ Case } ii\ \right)$ when $p=P$, equals to:
\begin{align*}
    &\sum_{\ell=2}^{k-1} \binom{k-1}{\ell} (\ell + 1)^{\ell-1} \ell ! p^{\ell} \frac{1}{\left( 1 - u \right)^{\ell+1}}\\=&\sum_{\ell=2}^{k-1} \binom{k-1}{\ell} (\ell + 1)^{\ell-1} \ell ! \frac{2^\ell}{k^\ell(k+1)^\ell} \frac{k^{\ell+1}(k+1)^{\ell+1}}{\left((\ell+1)(2k-\ell)\right)^{\ell+1}}\\=&\sum_{\ell=2}^{k-1} \binom{k-1}{\ell}\frac{1}{(\ell + 1)^{2}}\ell!\frac{2^{\ell}k(k+1) }{(2k-\ell)^{\ell+1}}.
\end{align*}

Together with what we did earlier, this gives us the required bound on $T(k)$. 
To obtain the bound on  $\liminf_{k\to\infty}\frac{T(k)}{k}$, we analyze the asymptotic behavior of the normalized above expression when $k\to\infty$. 
\begin{align*}
&\lim_{k\to\infty}\frac{1}{k}\sum_{\ell=2}^{k-1} \binom{k-1}{\ell}\frac{1}{(\ell + 1)^{2}}\ell!\frac{2^{\ell}k(k+1) }{(2k-\ell)^{\ell+1}}=\\
&\lim_{k\to\infty}\sum_{\ell=2}^{k-1} \frac{1}{k}\frac{(k+1)!2^\ell}{(k-1-\ell)!(\ell+1)^2(2k-\ell)^{\ell+1}}.
    \end{align*}

To evaluate the above expression, we aim to interchange the limit and the summation. To justify this step, we use the Dominated Convergence Theorem. 
Let $a_{k,\ell}$ denote the summand, defined as $$a_{k,\ell}=\frac{1}{k}\frac{(k+1)!2^\ell}{(k-1-\ell)!(\ell+1)^2(2k-\ell)^{\ell+1}}.$$ Our goal is to show that $$\lim_{k\to\infty}\sum_{\ell=2}^{k-1}a_{k,\ell}=\sum_{\ell=2}^{\infty}\lim_{k\to\infty} a_{k,\ell}.$$ To that end, we begin by analyzing the asymptotic behavior of $ a_{k,\ell}$ as $k\to\infty$. 

Using standard approximations, we observe that  
\[
\frac{(k+1)!}{(k-1-\ell)!} \sim k^{\ell+2}, \quad \text{and} \quad (2k - \ell)^{\ell+1} \sim (2k)^{\ell+1}.
\]  
Therefore,
\begin{align}
\lim_{k\to\infty} a_{k,\ell}=  \lim_{k\to\infty} \frac{2^\ell k^{\ell+2}}{(\ell+1)^2 k (2k)^{\ell+1}}=\frac{1}{2(\ell+1)^2}. \label{eq:a_kl}   
\end{align}

Next, we aim to find a summable sequence $M_\ell$ such that $|a_{k,\ell}| \leq M_\ell$ for all $k$.

From the approximation in\ref{eq:a_kl}, we have
\[
a_{k,\ell} \sim \frac{1}{2(\ell + 1)^2},
\]
which decays rapidly as $\ell \to \infty$. Moreover, for fixed $\ell$, the sequence $a_{k,\ell}$ is eventually decreasing in $k$.

Hence, there exists a constant $C$ (independent of $k$) such that
\[
|a_{k,\ell}| \leq \frac{C}{(\ell + 1)^2}.
\]

Since $\sum_{\ell=2}^\infty \frac{1}{(\ell + 1)^2}<\infty$, we conclude that $\{a_{k,\ell}\}$ is dominated by a summable sequence.

Thus, by the Dominated Convergence Theorem, we can interchange the limit and the summation:
\begin{align*}
\lim_{k \to \infty} \sum_{\ell = 2}^{k - 1} a_{k,\ell}
&= \sum_{\ell = 2}^\infty \lim_{k \to \infty} a_{k,\ell}
= \sum_{\ell = 2}^\infty \frac{1}{2(\ell + 1)^2}\\&=\frac{1}{2}\sum_{m=3}^\infty \frac{1}{m^2}=\frac{1}{2}(-1-\frac{1}{4}+\sum_{m=1}^\infty \frac{1}{m^2})\\&=
\frac{1}{2}(\frac{\pi^2}{6}-\frac{5}{4})=\frac{\pi^2}{12}-\frac{5}{8}.
\end{align*}

Together with all the previous expressions, we obtain 
$$\liminf_{k\to\infty}\frac{T(k)}{k}\leq \frac{1}{2}+\frac{1}{8}+\frac{\pi^2}{12}-\frac{5}{8}=\frac{\pi^2}{12}\approx0.822467.$$

\end{proof}
Figure~\ref{fig:infCor3} displays the normalized upper bound on $T(k)$ provided by Corollary~\ref{cor:ubfin}.

\begin{figure}[ht!]
\centering
\begin{tikzpicture}[scale=1]
\begin{axis}[legend style={at={(1,1)}, legend style={cells={align=left}}, anchor = north east, /tikz/column 2/.style={
                column sep=5pt}},
		legend cell align={left},
		width=15cm,height=9cm,
    xlabel={Dimension $k$},
    xmin=3, xmax=200,
    ymin=0.8, ymax=0.884,
    xtick={},
    ytick={},
    ymajorgrids=true,
    grid style=dashed,
    every axis plot/.append style={},  yticklabel style={/pgf/number format/fixed}
]

\addplot+[color=red,mark=o,mark size=1pt,smooth]
coordinates {
(3,0.882222222222222222222222222222)
(4,0.863789619551524313429075333838)
(5,0.852487505206164098292378175760)
(6,0.844943782635437034604796260410)
(7,0.839607002402903110117506978085)
(8,0.835667854475782189031910869236)
(9,0.832664775022565631978050866039)
(10,0.830316315755294557588102252495)
(11,0.828441672770154356868887771833)
(12,0.826919734312255878925596567128)
(13,0.825666538754737098293824080350)
(14,0.824622183207456883430472096091)
(15,0.823742874845906997281595016313)
(16,0.822995920132036811801556621509)
(17,0.822356462930810103078476982725)
(18,0.821805301204137274591237979346)
(19,0.821327389635444733718088760466)
(20,0.820910790391958431981137992236)
(21,0.820545923730375226487593367449)
(22,0.820225023530960684250414812743)
(23,0.819941735584299354125655724316)
(24,0.819690817043218266422681065026)
(25,0.819467908694195217182185057641)
(26,0.819269360394563308547748128718)
(27,0.819092095834013055100229824785)
(28,0.818933506731762069913474109207)
(29,0.818791369311186612751450737723)
(30,0.818663777806901910513265275828)
(31,0.818549091117671128558392226742)
(32,0.818445889694883176538141178811)
(33,0.818352940466141951608744619378)
(34,0.818269168115029534640641848454)
(35,0.818193631425102558473012061961)
(36,0.818125503686030527612121245217)
(37,0.818064056378772010572658853022)
(38,0.818008645523493510186117761338)
(39,0.817958700201982438695196741524)
(40,0.817913712865315104466570405280)
(41,0.817873231114626003786156233336)
(42,0.817836850703232754595545133223)
(43,0.817804209556002940298541096062)
(44,0.817774982639627810537695698138)
(45,0.817748877547598887080250388914)
(46,0.817725630687842943208663407746)
(47,0.817705003980440249054722103583)
(48,0.817686781988616705360666547126)
(49,0.817670769419026024625907690452)
(50,0.817656788937817840201177166965)
(51,0.817644679257586756351923624594)
(52,0.817634293457381919697517693230)
(53,0.817625497503816099927218379604)
(54,0.817618168946177429035159016792)
(55,0.817612195762499349151411083846)
(56,0.817607475336932034566116987567)
(57,0.817603913551599970492268120281)
(58,0.817601423978521169995134395980)
(59,0.817599927159181322282770260120)
(60,0.817599349961064213238681399423)
(61,0.817599625001889718354131808947)
(62,0.817600690133544924028787458724)
(63,0.817602487978747367381550371340)
(64,0.817604965514380797677479030136)
(65,0.817608073696217108350613918575)
(66,0.817611767120402984820317385092)
(67,0.817616003717662872184020521169)
(68,0.817620744476664886064118174916)
(69,0.817625953193424830475683526204)
(70,0.817631596243995261538379590514)
(71,0.817637642378009726632642370672)
(72,0.817644062530933823944760317874)
(73,0.817650829653120415519034743223)
(74,0.817657918553981150223496116313)
(75,0.817665305759774631656073664415)
(76,0.817672969383676697113155519181)
(77,0.817680889006943430033027239575)
(78,0.817689045570105340112573300967)
(79,0.817697421273243868784940421188)
(80,0.817705999484500952130823200495)
(81,0.817714764656060468541854230332)
(82,0.817723702246918455404588438566)
(83,0.817732798651828243181626484718)
(84,0.817742041135868205031317702730)
(85,0.817751417774134591092511693622)
(86,0.817760917396110721726140097688)
(87,0.817770529534307360954056628107)
(88,0.817780244376807996970608642379)
(89,0.817790052723387559607788118928)
(90,0.817799945944904276123030307738)
(91,0.817809915945692319291417524163)
(92,0.817819955128707997484033549927)
(93,0.817830056363204793169030333290)
(94,0.817840212954732853796270843732)
(95,0.817850418617276823724116640364)
(96,0.817860667447362395071539356858)
(97,0.817870953899976841111804587294)
(98,0.817881272766162247856145335867)
(99,0.817891619152152328161931779071)
(100,0.817901988459934721348614557127)
(101,0.817912376369130668254776888893)
(102,0.817922778820093012095296905165)
(103,0.817933191998131702962092731142)
(104,0.817943612318783461719605582660)
(105,0.817954036414049061721443095887)
(106,0.817964461119527880590314144141)
(107,0.817974883462385018579179908233)
(108,0.817985300650091427839124618330)
(109,0.817995710059882195789611164497)
(110,0.818006109228882418330973262138)
(111,0.818016495844854023096337810784)
(112,0.818026867737520493659529110870)
(113,0.818037222870429733480721266734)
(114,0.818047559333318321177071646932)
(115,0.818057875334943171488307226739)
(116,0.818068169196349151640818720798)
(117,0.818078439344543531068808920413)
(118,0.818088684306550282026012581139)
(119,0.818098902703819216148156738491)
(120,0.818109093246966752551813054305)
(121,0.818119254730826780204419461986)
(122,0.818129386029791613419953082399)
(123,0.818139486093424455622861015604)
(124,0.818149553942326093143054677287)
(125,0.818159588664239746989141676482)
(126,0.818169589410379124687883221983)
(127,0.818179555391965744009666754601)
(128,0.818189485876962552672889105241)
(129,0.818199380186991749266531288018)
(130,0.818209237694425526422613024606)
(131,0.818219057819639212975214553851)
(132,0.818228840028416992268430104383)
(133,0.818238583829501023314451853824)
(134,0.818248288772275394170445560961)
(135,0.818257954444576896372193300787)
(136,0.818267580470625128895959708295)
(137,0.818277166509064922997361736371)
(138,0.818286712251114528219747582845)
(139,0.818296217418813417463220105397)
(140,0.818305681763363957634266745845)
(141,0.818315105063561554235929341933)
(142,0.818324487124308215313245052796)
(143,0.818333827775204794280205280104)
(144,0.818343126869217464016844967602)
(145,0.818352384281414247797393376849)
(146,0.818361599907767687528322338744)
(147,0.818370773664019967761282455354)
(148,0.818379905484607036219561874620)
(149,0.818388995321638469262337011138)
(150,0.818398043143930024846276490855)
(151,0.818407048936086007087040720690)
(152,0.818416012697628736358849281497)
(153,0.818424934442172577816437688160)
(154,0.818433814196640130036636569275)
(155,0.818442652000518314856079739908)
(156,0.818451447905152240074694273060)
(157,0.818460201973074829101231452456)
(158,0.818468914277370326392630513313)
(159,0.818477584901069895198301912602)
(160,0.818486213936577625140864794462)
(161,0.818494801485125361989344938402)
(162,0.818503347656254861020363605969)
(163,0.818511852567325848999068096813)
(164,0.818520316343048658398984433679)
(165,0.818528739115040171348063024623)
(166,0.818537121021401880243230496216)
(167,0.818545462206318937302632546820)
(168,0.818553762819679126788523385878)
(169,0.818562023016710751481166597124)
(170,0.818570242957638479444928999382)
(171,0.818578422807356248416023542827)
(172,0.818586562735116373456574382619)
(173,0.818594662914234049047819483764)
(174,0.818602723521806479709793519914)
(175,0.818610744738445913697588233652)
(176,0.818618726748025892486326713624)
(177,0.818626669737440064759356575663)
(178,0.818634573896372947588607260214)
(179,0.818642439417082049565666726927)
(180,0.818650266494190800921974559529)
(181,0.818658055324491764274189693126)
(182,0.818665806106759626646905731279)
(183,0.818673519041573498953617138877)
(184,0.818681194331148073246321421659)
(185,0.818688832179173210856898817073)
(186,0.818696432790661556126735204696)
(187,0.818703996371803790827364244419)
(188,0.818711523129831163682074015810)
(189,0.818719013272884947670084602352)
(190,0.818726467009892495090732832385)
(191,0.818733884550449576741082608823)
(192,0.818741266104708707069049640104)
(193,0.818748611883273171854802878035)
(194,0.818755922097096488892187160948)
(195,0.818763196957387045332695260251)
(196,0.818770436675517667857965646625)
(197,0.818777641462939893701298043816)
(198,0.818784811531102721780364189060)
(199,0.818791947091375633866094445798)
(200,0.818799048354975685828574070871)

    };

\legend{
}
\end{axis}
\end{tikzpicture}
\caption{\label{fig:infCor3} Normalized (by $k$) upper bound on the random access expectation $T(k)$ for different values for~$k$ from Corollary~\ref{cor:ubfin}.}
%\vspace{-0.3cm}
\end{figure}

\section{Conclusion and Future Research}\label{sec:conclusion}
In this paper, we addressed the random access problem in DNA storage. We determined the exact values of $T_q(2)$ and $T(2)$ and we extended the construction presented in~\cite{GMZ24}. Our analysis demonstrated that this generalization achieves the best-known results to date. Despite these advancements, several intriguing questions remain open, which we aim to explore in future work. A few of these directions are the following:
\begin{enumerate}
 
    \item  From~\cite{BSGY24}, we know that for positive integers $a$ and $k$, it holds that $\frac{T(ak)}{ak}\leq \frac{T(k)}{k}$, and we wonder whether it holds that $\frac{T(k+1)}{k+1}\leq \frac{T(k)}{k}$ for all $k$? If that is true, what is the limit of $\frac{T(k)}{k}$ as $k$ tends to infinity? 
    
    \item In all existing constructions that achieve a small random-access expectation, including those in the present work, the number of columns of weight $m$ is the same for every set of $m$ coordinates. In other words, the marginal distribution of vectors of weight $m$ is uniform.  However, it had not been formally proven that this uniformity is the optimal structure for minimizing $T_{\max}(G)$. For $k=2$, we have shown that the optimal matrix adheres to this uniform structure, but is this true for general $k$? Even when we restrict ourselves to matrices whose columns have weight at most two (as we did in Construction~\ref{ConsGeneralK}), it is still unknown whether the symmetric pattern is truly best. We conjecture that it is, and establishing this rigorously is an important direction for future work.
    
    \item The best lower bounds on $T_q(n,k)$  to date were obtained in the seminal work~\cite{BSGY24}: $T_q(n,k)\geq n-\frac{n(n-k)}{n}(H_n-H_{n-k})$ and $T_q(n,k)\geq \frac{k+1}{2}$. Despite their significance, these bounds remain well below the best upper bounds currently known. Sharpening them—or closing the gap from the upper-bound side—constitutes an important avenue for future research.

     \item In this work, we analyzed only the substitution of $p = P$ in Theorem~\ref{Theorem:ExConstrcution}. A natural direction for future research is to optimize $p$ and $P$ given $k$, in order to minimize the expression in Theorem~\ref{Theorem:ExConstrcution} and obtain a better upper bound on $T(k)$.
    \end{enumerate}

\newpage

\appendix
\section{Appendix}\label{firstApen}
For the reader's convenience, we restate the statement of the lemma before presenting the corresponding proof.

\betterAnalKEqThree*
\begin{proof}
   By Corollary 4.8 in~\cite{GMZ24} we have:
   \begin{align*}
        T_{\max}(G_3({x,\alpha x}))&=3+\frac{2}{3x(1+\alpha)-2}-\frac{\alpha x-1}{3x(1+\alpha)-1}-\frac{2(\alpha x^2 +{x \choose 2})+\alpha x(3x+2\alpha x)+\frac{1}{2}\alpha x(\alpha x-1)}{{{3x(1+\alpha)-1}\choose{2}}}\\&+\sum_{s=3}^{x+2\alpha x}\prod_{i=0}^{s-1}\frac{x+2\alpha x-i}{3x(1+\alpha) -i-1}+\sum_{s=3}^{\alpha x+1} \frac{2x{\alpha x \choose {s-1}}}{{3x(1+\alpha)-1\choose s}}.  
   \end{align*} 
   We are going to evaluate each term separately. First:
   \begin{align*}
    \lim_{x\to{}\infty} \frac{2}{3x(1+\alpha)-2}=0.    
   \end{align*}
   Second:
   \begin{align*}
       \lim_{x\to {}\infty} \frac{\alpha x-1}{3x(1+\alpha)-1}=\frac{\alpha}{3+3\alpha}. 
   \end{align*}
   Third: \begin{align*}
       \frac{2(\alpha x^2 +{x \choose 2})+\alpha x(3x+2\alpha x)+\frac{1}{2}\alpha x(\alpha x-1)}{{{3x(1+\alpha)-1}\choose{2}}}
       &=\frac{2\alpha x^2+x^2-x+3\alpha x^2+2\alpha^2x^2+\frac{1}{2}\alpha^2x^2-\frac{1}{2}\alpha x}{\frac{1}{2}(3x(1+\alpha)-1)(3x(1+\alpha)-2)}\\&=\frac{(1+5\alpha+2.5\alpha^2)x^2-(1+0.5\alpha)x}{\frac{1}{2}(3x(1+\alpha)-1)(3x(1+\alpha)-2)},
   \end{align*}
which goes to $\frac{2+10\alpha+5\alpha^2}{9(1+\alpha)^2}$ when $x$ goes to infinity.

Next, we have\begin{align*}
    \sum_{s=3}^{x+2\alpha x}\prod_{i=0}^{s-1}\frac{x+2\alpha x-i}{3x(1+\alpha) -i-1}=\sum_{s=3}^{x+2\alpha x}\frac{x+2\alpha x}{3x(1+\alpha)-1}\cdot \frac{x+2\alpha x-1}{3x(1+\alpha)-2}\prod_{i=2}^{s-1}\frac{x+2\alpha x-i}{3x(1+\alpha) -i-1}
\end{align*}
Since for $i\geq 2$ we have that $\frac{x+2\alpha x-i}{3x(1+\alpha) -i-1}\leq \frac{1+2\alpha}{3+3\alpha}$, we obtain that \begin{align*}
     \sum_{s=3}^{x+2\alpha x}\prod_{i=0}^{s-1}\frac{x+2\alpha x-i}{3x(1+\alpha) -i-1}\leq  \frac{x+2\alpha x}{3x(1+\alpha)-1}\cdot \frac{x+2\alpha x-1}{3x(1+\alpha)-2}\sum_{s=3}^{x+2\alpha x}(\frac{1+2\alpha}{3+3\alpha})^{s-2}.
\end{align*}
Now we are going to evaluate $\sum_{s=3}^{x+2\alpha x}(\frac{1+2\alpha}{3+3\alpha})^{s-2}$. It holds that
\begin{align*}
   \sum_{s=3}^{x+2\alpha x}(\frac{1+2\alpha}{3+3\alpha})^{s-2}&=\sum_{s=1}^{x+2\alpha x-2}(\frac{1+2\alpha}{3+3\alpha})^{s}=\sum_{s=0}^{x+2\alpha x-2}(\frac{1+2\alpha}{3+3\alpha})^{s}-1\\&=\frac{(\frac{1+2\alpha}{3+3\alpha})^{x+2\alpha x-1}-1}{\frac{1+2\alpha}{3+3\alpha}-1}-1= \frac{(\frac{1+2\alpha}{3+3\alpha})^{x+2\alpha x-1}-1}{\frac{-2-\alpha}{3+3\alpha}}-1. 
\end{align*}
Combining all of the above, we obtain that \begin{align*}
 &\lim_{x\to {}\infty} \sum_{s=3}^{x+2\alpha x}\prod_{i=0}^{s-1}\frac{x+2\alpha x-i}{3x(1+\alpha) -i-1}\\&\leq \lim_{x\to {}\infty} \frac{x+2\alpha x}{3x(1+\alpha)-1}\cdot \frac{x+2\alpha x-1}{3x(1+\alpha)-2}(\frac{(\frac{1+2\alpha}{3+3\alpha})^{x+2\alpha x-1}-1}{\frac{-2-\alpha}{3+3\alpha}}-1)\\&=\frac{1+2\alpha}{3(1+\alpha)}\cdot \frac{1+2\alpha}{3(1+\alpha)}\cdot (\frac{3+3\alpha}{2+\alpha}-1)=\frac{(1+2\alpha)^2}{9(1+\alpha)^2}\cdot(\frac{1+2\alpha}{2+\alpha}).
\end{align*}
And we are left with the last term.

\begin{align*}
\sum_{s=3}^{\alpha x+1} \frac{2x{\alpha x \choose {s-1}}}{{3x(1+\alpha)-1\choose s}}&=^{(a)}2x\sum_{s=3}^{\alpha x+1}\frac{\frac{s}{\alpha x+1}{\alpha x+1\choose s}}{{3x(1+\alpha)-1\choose s}}=^{(b)}2x\sum_{s=3}^{\alpha x+1}\frac{s}{\alpha x+1}\frac{{3x(1+\alpha)-1-s\choose \alpha x +1-s}}{{3x(1+\alpha)-1\choose \alpha x+1}}\\&=\frac{2x}{\alpha x+1} {3x(1+\alpha)-1\choose \alpha x+1}^{-1}\sum_{s=3}^{\alpha x+1} s {3x(1+\alpha)-1-s\choose \alpha x +1-s},  
\end{align*}

Where $(a)$ holds because ${\alpha x \choose {s-1}}=\frac{s}{\alpha x+1}{\alpha x+1\choose s}$ and $(b)$ is due to $${\alpha x+1 \choose s}{3x(1+\alpha)-1\choose \alpha x+1}={3x(1+\alpha)-1-s\choose \alpha x+1-s}{3x(1+\alpha)-1\choose s}.$$

Now we are going to evaluate the term: $\sum_{s=3}^{\alpha x+1} s {3x(1+\alpha)-1-s\choose \alpha x +1-s}$, by placing $T=\alpha x+1-s$ we have that: \begin{align*}
    \sum_{s=3}^{\alpha x+1} s {3x(1+\alpha)-1-s\choose \alpha x +1-s}&= \sum_{T=0}^{\alpha x-2}(\alpha x +1-T){3x(1+\alpha)-1+T-\alpha x-1\choose T}\\&=\sum_{T=0}^{\alpha x-2}(\alpha x +1){3x+2\alpha x-2+T\choose T}-\sum_{T=0}^{\alpha x-2}T {3x+2\alpha x-2+T\choose T}.
\end{align*}
Now, by applying Pascal's triangle recursively on ${3x+3\alpha x-3\choose \alpha x-2}$ we have that
$$\sum_{T=0}^{\alpha x-2}{3x+2\alpha x-2+T\choose T}={3x+3\alpha x-3\choose \alpha x-2}$$ and hence we obtain:
\begin{align*}
 \sum_{T=0}^{\alpha x-2}(\alpha x +1){3x+2\alpha x-2+T\choose T}&=(\alpha x+1){3x+3\alpha x-3\choose \alpha x-2}\\&=(\alpha x+1){3x+3\alpha x-3\choose \alpha x-1}\frac{\alpha x -1}{3x+2\alpha x-1}.   
\end{align*}

In addition, we have:
\begin{align*}
    T {3x+2\alpha x-2+T\choose T}&=T\frac{(3x+2\alpha x-2+T)!}{(T!)(3x+2\alpha x-2)!}\\
    &=(3x+2\alpha x-1)\frac{(3x+2\alpha x-2+T)!}{(T-1)!(3x+2\alpha x-1)!}\\
    &=(3x+2\alpha x-1){3x+2\alpha x-2+T\choose T-1}.
\end{align*}
Hence we can rewrite:

\begin{align*}
    \sum_{T=0}^{\alpha x-2}T {3x+2\alpha x-2+T\choose T}&=(3x+2\alpha x-1)\sum_{T=1}^{\alpha x-2}{3x+2\alpha x-2+T\choose T-1}\\
    &=(3x+2\alpha x-1)\sum_{T=0}^{\alpha x-3}{3x+2\alpha x-1+T\choose T}\\
    &=(3x+2\alpha x-1){3x+3\alpha x-3\choose \alpha x-3}\\
    &=\frac{(\alpha x-2)(\alpha x-1)}{3x+2\alpha x}{3x+3\alpha x-3\choose \alpha x-1}.
\end{align*}

Combining all of the above we obtain
\begin{align*}
    \sum_{s=3}^{\alpha x+1} \frac{2x{\alpha x \choose {s-1}}}{{3x(1+\alpha)-1\choose s}}&=\frac{2x}{\alpha x+1}{3x(1+\alpha)-1\choose \alpha x+1}^{-1}\Bigg[ \frac{(\alpha x+1)(\alpha x-1)}{3x+2\alpha x-1}{3(x+\alpha x-1)\choose \alpha x-1}\\
    &\quad-\frac{(\alpha x-2)(\alpha x-1)}{3x+2\alpha x}{{3(x+\alpha x-1)\choose \alpha x-1}}\Bigg]\\
    &=\frac{2x(\alpha x-1)}{\alpha x+1}{3x(1+\alpha)-1\choose \alpha x+1}^{-1}{3(x+\alpha x-1)\choose \alpha x-1}\left[ \frac{\alpha x+1}{3x+2\alpha x-1}-\frac{\alpha x-2}{3x+2\alpha x}\right]\\&=\frac{2x(\alpha x-1)}{\alpha x+1}\frac{(\alpha x+1)\alpha x}{(3x+3\alpha x-1)(3x+3\alpha x-2)}\left[ \frac{9x+7\alpha x-2}{(3x+2\alpha x-1)(3x+2\alpha x)}\right]
\end{align*}
which goes to $\frac{2\alpha^2(9+7\alpha)}{9(1+\alpha^2)(3+2\alpha)^2}$ when $x$ goes to infinity.
\end{proof}

\newpage
\bibliographystyle{IEEEtran}

\end{document}